\documentclass[12pt]{article}
\usepackage{amsmath}
\usepackage{amssymb}
\usepackage{amsfonts}
\usepackage[margin=1in]{geometry}
\usepackage{graphicx}
\usepackage{float}
\usepackage[toc]{appendix}
\usepackage{caption}
\usepackage{natbib}
\usepackage{comment}
\usepackage{hyperref}
\usepackage{booktabs}
\usepackage{tikz}
\usepackage{setspace}

\usetikzlibrary{positioning} 
\usetikzlibrary{calc}
\newtheorem{theorem}{Theorem}

\newtheorem{algorithm}{Algorithm}

\newtheorem{condition}{Condition}

\newtheorem{lemma}[theorem]{Lemma}

\newtheorem{proposition}[theorem]{Proposition}

\newenvironment{proof}[1][Proof]{\noindent\textbf{#1.} }{\ \rule{0.5em}{0.5em}}
\input{tcilatex}
\begin{document}

\title{Process Utility in High-Stakes Competition.\thanks{%
We are very grateful to Jeff Sackmann for providing us access to the data
from the Match Charted Project. We thank seminar participants at Maastricht
University and, in particular, Lex Borghans, Arian Schwidder, and Christian
Seel for interesting discussions and very useful comments on an earlier
draft of the paper.}}
\author{Arnaud Dupuy\thanks{%
University of Luxembourg, email: arnaud.dupuy@uni.lu}}
\date{ Last updated \today}
\maketitle

\begin{abstract}
We study how individuals trade off outcome (\textquotedblleft
what\textquotedblright ) and process (\textquotedblleft
how\textquotedblright ) utility in high-stakes strategic decisions, namely
professional tennis. Using optimality conditions and the second-service
rule, we derive a sufficient condition for the nonparametric lower bound on
the weight of process utility to be positive. Under mild shape restrictions,
the high-frequency data indicate that most players likely value process
utility positively. We then develop a structural model that recovers
player-specific preferences over outcomes and processes. Estimates show that
players systematically sacrifice success probabilities to increase process
utility, with economically meaningful consequences for match outcomes and
expected earnings.

\end{abstract}

\noindent\textbf{JEL Classification:} D91, D81, D01, C57.

\noindent \textbf{Keywords:} Process utility; Intrinsic motivation; Salience
weight; Strategic behavior; Nonparametric; Structural estimation.

\newpage

\begin{quote}
\textquotedblleft When we act because we enjoy the activity itself, not
because of what it leads to, the experience is autotelic --- literally,
having its goal within itself.\textquotedblright\ --- Mihaly
Csikszentmihalyi, \cite{csik90}.
\end{quote}

\bigskip

\begin{quote}
\textquotedblleft Success is a journey, not a destination. The doing is
often more important than the outcome.\textquotedblright\ --- Arthur Ashe. 
\newline
https://www.edwardssports.co.uk/news/post/practice-tennis-like-a-pro
\end{quote}

\section{Introduction}

A central question in economics is the extent to which individuals are
willing to trade off extrinsic rewards against the intrinsic enjoyment of an
activity (e.g., \cite{frey02}; \cite{benabou03}).\footnote{%
See also \cite{deci99}, \cite{Bandiera05}, \cite{falk06}, \cite{bryan11}, 
\cite{larkin12}, \cite{falk13}, \cite{kube13}, \cite{corgnet16} among others.%
} In many instances, this trade-off can be framed as one between an outcome
(what) and the process (how) by which it is obtained. For instance, Adam
Smith's discussion of compensating wage differentials and its formalization
in modern labor economics (e.g., \cite{rosen86}) acknowledges that job
attributes such as effort, risk, or enjoyment enter workers' utility
alongside wages. Empirically, however, quantifying this trade-off remains
challenging. Observational data typically raises two important issues:
preferences are confounded with constraints, selection, and unobserved
heterogeneity, and identification relies on between-person comparisons, as
indeed the frequency of such decisions is generally limited.\footnote{%
For instance, in our labor market example, workers who accept lower wages
for more enjoyable jobs may differ systematically in ability or outside
options (e.g., \cite{mas25}; \cite{lavetti23}) and only change jobs a
limited number of times in the course of their career.} Experimental
evidence circumvents these issues and has provided important insights (e.g., 
\cite{gneezy00a,gneezy00b}; \cite{ariely08}), but often relies on low-stakes
laboratory environments with limited external validity. As a result, there
is relatively little direct field evidence at the individual level on
whether people are willing to sacrifice measurable economic outcomes in
exchange for intrinsically rewarding processes in high-stakes, competitive
settings.

Professional tennis provides a uniquely well-suited setting to study this
trade-off. Players operate in a high-stakes environment with strong
extrinsic motivation (monetary and reputational incentives) to maximize
performance. Yet, they face repeated, strategically rich decisions---most
notably in serving with the second serve rule---where they can choose
between higher-risk strategies that increase the probability of winning the
point immediately in one shot (unreturned serve), but also of making a
mistake, and more conservative strategies that lead to longer rallies.
Crucially, because points are won either in a single shot or through
multi-shot exchanges, tennis offers a setting to separately identify outcome
and process motivations. Finally, being one of the most broadcast sports on
the planet, a wealth of point-level data is recorded for many matches,
allowing a rich analysis of tennis rallies for a large number of players.
This combination of high incentives, clear strategic margins, and granular
data makes tennis an ideal setting to test whether individuals deviate from
outcome-maximizing behavior to engage in more intrinsically rewarding forms
of play.

We propose a simple framework in which players derive utility from winning
points -outcome utility-- but also from the way in which they win points
-process utility--. In particular, we allow process utility to depend on
whether a point is won in one shot, immediately after the serve, or through
multi-shot rallies, and model serving decisions as the outcome of utility
maximization under this extended preference structure. We call the relative
importance of process utility in total utility the \emph{salience weight}.
Our framework yields a natural interpretation of observed deviations from
outcome-maximizing strategies\footnote{%
See \cite{borghans95, klaassen09} for instance.} as reflecting heterogeneous
salience weights, which we recover for each player from point-level data.

Using optimal serving decisions in professional tennis, we develop a
nonparametric identification strategy that exploits optimality conditions
and the second-serve rule to derive a sufficient condition under which
player-specific bounds on the salience weight are positive. Confronting this
condition to detailed point-level data from the Sackmann Charting Project,
we show that, under mild shape restrictions, a large majority of players
likely have a positive salience weight. Further exploiting these optimality
conditions, we adopt a parametric approach and propose an algorithm to
structurally recover player-specific salience weights on process utility. We
then estimate player-specific skills and salience weights using this
algorithm in a maximum likelihood procedure. We find that $79\%$ of
professional players have a positive salience weight---of which $64\%$ are
statistically significant at the $5\%$ level---and therefore place greater
weight on winning multi-shot rallies, resulting in systematically more
conservative (second-)serve strategies than predicted by outcome-maximizing
models.\footnote{%
For 119 out of 151 players in the data, the estimated salience weight for
winning multi-shot rallies is larger than 0. For 76 of them, this weight is
significantly different from 0 at the $5\%$ level. For only 3 players, this
weight is significantly lower than 0 at the $5\%$ level.} Counterfactual
exercises show that eliminating process utility would increase point-winning
probability on serve by about 0.4 percentage points, translating into a 2.4
percentage point increase in match-winning probability and an increase of
\$33,000 (13.5\%) in expected prize money at a Grand Slam tournament (US
Open 2025). These results provide field evidence that individuals are
willing to forgo measurable extrinsic rewards to engage in intrinsically
rewarding activities, and illustrate how small deviations from
outcome-maximizing behavior can have economically meaningful consequences in
high-stakes competitive settings.

Methodologically, our model fits naturally within the canonical framework of
trade-offs between extrinsic and intrinsic motivation. To fix ideas,
consider an agent choosing among alternatives leading to an outcome. Each
alternative $x$ is associated with outcome utility $p(x)$ and process
utility $k(x)$. In its simplest form, utility is a weighted sum of the two,
where $\delta \geq 0$ denotes the relative weight on process utility. When $%
\delta = 0$, the agent is purely outcome-maximizing, while larger values of $%
\delta$ reflect increasing salience of process utility.

This formulation is consistent with leading models of motivation, where
process utility arises from civic duty (\cite{frey97}), identity (\cite%
{loewenstein99}, \cite{akerlof00}), implicit contracts (\cite%
{gneezy00b,gneezy00a}), beliefs (\cite{benabou03}), procedures (\cite{frey04}%
), meaning (\cite{ariely08}, \cite{norton12}), gambling (\cite{LeMenestrel01}%
), or the act of choice itself (\cite{sen97}). In our setting, a player
chooses a serve strategy $x$ and trades off the probability of winning a
point on serve $p(x)$ against the probability of winning the point through
multiple shots $k(x)$. By structurally recovering player-specific salience
weights from optimal choices, we quantify the extent to which individuals
deviate from outcome maximization to engage in intrinsically rewarding play.

Our analysis builds on the premise that winning multi-shot rallies is
intrinsically more rewarding than winning one-shot rallies. We motivate this
in three steps. First, from the psychological literature, we learn that
enjoyment and self-determination are important drivers of intrinsic
motivation. Research first formalized by \cite{csik75, csik90} indicates
that \emph{flow} is a state of complete absorption in an activity, often
described as being \textquotedblleft in the zone,\textquotedblright\ that
arises when the challenge of the task aligns with the performer's skill
level, and therefore \textquotedblleft \lbrack e]njoyment appears at the
boundary between boredom and anxiety, when the challenges are just balanced
with the person's capacity to act.\textquotedblright\ \cite{csik90}, pp.
52--53.\footnote{%
Csikszentmihalyi illustrates this with tennis: \textquotedblleft One cannot
enjoy doing the same thing [at tennis] at the same level for long. We grow
either bored or frustrated, and then the desire to enjoy ourselves again
pushes us to stretch our skills, or to discover new opportunities for using
them.\textquotedblright\ \cite{csik90}, p.75.} Flow theory connects closely
with \emph{Self-Determination Theory} (SDT) \cite{deci00}, which posits that
intrinsic motivation is fostered when autonomy, competence, and relatedness
are satisfied.\footnote{%
In tennis, autonomy arises from controlling shot selection and tactics;
competence from improving skills and executing difficult shots; and
relatedness from interactions with coaches, opponents, and the public.} Flow
states and Self-Determination are therefore more likely to arise during
multi-shot rallies than through the execution of a single shot, the serve.

Second, from players' testimonies, we learn that enjoyment is indeed an
important component of their motivation. Professional players frequently
highlight enjoyment as a central goal: for instance, Carlos Alcaraz stated,
\textquotedblleft I just want to step on court \ldots {} and try to enjoy as
much as I can.\textquotedblright \footnote{%
\url{https://www.atptour.com/en/news/alcaraz-lehecka-us-open-2025-qf}. A
broader set of players' quotes is collected in Online Appendix (\ref%
{App:Testimonies}).}

Third, in tennis, although the server wins roughly 45\% of his points on
unreturned serves (one-shot), top players spend most of their training time
(about $90\%$) on baseline rallies as shown in \cite{oshannessy19} and \cite%
{fitzpatrick24a}. This apparent paradox in the behavior of players during
practice supports the idea that players attach more importance to multi-shot
rallies, likely so because they enjoy playing such rallies more. For these
reasons, we expect process utility to arise from winning multi-shot rallies,
and player-specific salience weights for these rallies capture the
preference for winning points through multi-shot rallies relative to
one-shot points.


This paper contributes in several important ways. First, it extends the
analysis of tennis serving strategies of \cite{klaassen09} to account for
process utility and distinguish between one-shot and multi-shot points. It
proposes a novel method to compute player-specific salience weights from
point-level data and quantifies the economic consequences of process-driven
strategy, demonstrating that small deviations from outcome maximization can
have substantial effects.

Second, more broadly, our results speak to a general class of preferences in
which individuals derive utility not only from outcomes but also from the
process by which these outcomes are achieved. This idea is closely related
to the compensating differentials literature in labor economics, which
documents that workers are willing to accept lower wages in exchange for
non-pecuniary job attributes such as meaningful or intrinsically rewarding
tasks, with estimated trade-offs on the order of 8--20\% depending on the
job attributes considered (e.g., \cite{Stern04,Mas17}). Similarly, in
consumer markets, a large body of evidence on fair trade, ethical
consumption, and product provenance shows that individuals are willing to
pay substantial price premia for goods produced under socially or
environmentally desirable conditions, at comparable product quality.
Experimental and field evidence typically finds willingness-to-pay premia in
the range of approximately 5\% to 30\% for fair trade or ethically certified
products, depending on product category (e.g., see \cite%
{Maertens09,Dragusanu14}). Across these settings, individuals appear willing
to trade off outcome against process, be it about having autonomy, providing
meaning, or being ethical. Our contribution is to show that this same
trade-off can be identified from high-frequency, within-individual,
continuous choices in a high-stakes, competitive strategic environment, in
contrast to labor and consumer studies, which typically rely on
low-frequency or discrete-choice settings.

The remainder of the paper is structured as follows. Section 2 presents the
model, discusses non-parametric bounds and introduces a parametric approach
together with an algorithm to recover player-specific salience weights from
the data. Section 3 describes the data and the estimation method, while
Section 4 presents the results. Section 5 provides robustness checks, and
Section 6 concludes.

\section{Model}

\subsection{Preliminary observations}

In tennis, a central strategic decision for the server is how much risk to
take. A ``safe'' serve increases the probability that the ball lands in the
service box (i.e., a higher serve percentage) but reduces the likelihood of
winning the point outright through an ace or unreturned serve. Conversely, a
``risky'' serve raises the probability of winning the point immediately, at
the cost of a higher probability of a fault.

While this resembles a standard high-risk--high-reward trade-off, the tennis
setting involves an additional margin: not only whether the point is won,
but how it is won. A safe serve increases the likelihood that the point
evolves into a multi-shot rally, forcing the player to win through multiple
shots, whereas a risky serve increases the likelihood of a one-shot win.

The server's decision, therefore, determines both the overall probability of
winning the point and the distribution of this probability over one-shot and
multi-shot points. As a result, the serve strategy reflects a trade-off not
only between high-risk--high-reward outcomes, but also between one-shot wins
and multi-shot wins, which may be intrinsically valued differently.

In support of this distinction, we propose the following defintions.

\textbf{Definition (Outcome Utility):} Outcome utility captures preferences
over the results of an action, independent of how they are achieved. In our
setting, it corresponds to the probability of winning a point, regardless of
whether it is won through one-shot or multi-shot rallies.

\textbf{Definition (Process Utility):} Process utility captures preferences
over how outcomes are achieved. In our setting, it is represented by a
preference for winning points through multi-shot rallies rather than
one-shot points.

\subsection{Set up}

Let the probability of a serve being in be denoted by $x\in \left[ 0,1\right]
$. This probability reflects the choice of the server. As depicted in the
above observations, if the server wants to take more risk, he will choose a
lower value of $x$, hence a lower serve percentage. In contrast, if he wants
to take fewer risks, he will choose a higher value of $x$, a higher serve
percentage. Since in tennis, players can serve a second serve if the first
is out, the server's strategy consists, in fact, of two numbers $x_{1}$ and $%
x_{2}$ reflecting respectively the probability of the first and second serve
to be in. A player's probability to win a point on his serve, denoted $%
p\left( x_{1},x_{2}\right) $, depends on his strategy $\left(
x_{1},x_{2}\right) $. Denote $y\left( x\right) $ the probability of winning
a point conditional on the serve being in as a function of the serve
probability, $x$. $y\left( x\right) $ reflects the skills of the player,
encompassing both his serving and rally skills (relative to his opponent).
In condition (\ref{Assumption:Theory}) below we assume that $y\left(
x\right) $ is twice differentiable on $x\in \left[ 0,1\right] $ and in
particular, strictly decreasing, i.e., $y^{\prime }\left( x\right) <0$, so
that the safer the serve, that is, the higher the probability that it is in,
the lower the probability of winning the point, conditional on the serve
being in, and strictly concave, $y^{\prime \prime }\left( x\right) <0$.

With these definitions, the probability of winning a point on one's own
serve, the outcome utility, reads as,%
\begin{equation*}
p\left( x_{1},x_{2}\right) =w\left( x_{1}\right) +\left( 1-x_{1}\right)
w\left( x_{2}\right) ,
\end{equation*}%
where $w\left( x\right) :=xy\left( x\right) $ is the unconditional
probability of winning a point, and a server aiming at maximizing his
probability of winning a point on his serve then does $\max_{x_{1},x_{2}}p%
\left( x_{1},x_{2}\right)$.

This setting corresponds to the basis of the model presented in \cite%
{klaassen09} and discussed in more detail in Online Appendix (\ref{Model:KM}%
). In this paper, we depart from \cite{klaassen09} by assuming that players
are perfect maximizers and allowing them to care about process utility, and
hence possibly put different weights to the various possible ways of winning
a point. This requires distinguishing between 4 possible ways to win a
point: with one shot on the first or second serve, i.e., an ace or an
unreturned serve, or with multiple shots on the first or second serve, i.e.,
a rally of more than 2 shots. We hence decompose the conditional probability
to win a point into a conditional probability to win in one shot, say $%
f\left( x\right) $, and in multiple shots, say $k\left( x\right) $. By
definition, one has $y\left( x\right) =f\left( x\right) +k\left( x\right) $. 
%
It seems natural to expect that $f^{\prime }\left( x\right) <0$ and $%
f^{\prime \prime }\left( x\right) <0$ so $f$ is strictly concave, meaning
that the conditional probability of winning a one-shot point is decreasing
with the probability of the serve to be in (increasing with risk), more so
as the level of risks decreases (concave).\footnote{%
This assumption is supported in the data. Indeed, for all professional
players in the data, the percentage of first serves in is lower than the
percentage of second serves in ($x_{1}<x_{2}$), but only for two players,
the share of one-shot points won on first serve ($f\left( x_{1}\right) $) is
lower than that on the second serve ($f\left( x_{2}\right) $).} In contrast,
the conditional probability of winning a multi-shot point may be increasing
or decreasing with the probability that the serve is in, depending on the
skills of the player. Hence, we impose that, if it is decreasing, it is also
concave, i.e., $k^{\prime }\left( x\right) <0$ and $k^{\prime \prime }\left(
x\right) \leq 0$, whereas, if it is increasing, it is convex, i.e., $%
k^{\prime }\left( x\right) >0$ and $k^{\prime \prime }\left( x\right) >0$.

To summarize, we assume that the following standing assumptions hold.

\begin{condition}
\label{Assumption:Theory}The conditional probability of winning a one-shot
point continuous and twice differentiable, strictly decreasing $f^{\prime
}\left( x\right) <0$ and concave $f^{\prime \prime }\left( x\right) <0$,
whereas the conditional probability to win a multi-shot point is continuous
and either strictly decreasing $k^{\prime }\left( x\right) <0$ and concave $%
k^{\prime \prime }\left( x\right) \leq 0$, or increasing $k^{\prime }\left(
x\right) \geq 0$ and convex $k^{\prime \prime }\left( x\right) \geq 0$ and
so that $y\left( x\right) :=f\left( x\right) +k\left( x\right) $ is so that $%
y^{\prime }\left( x\right) <0$ and $y^{\prime \prime }\left( x\right) <0$.
\end{condition}

The second departure from the model in \cite{klaassen09}, is that we
consider the case where a player may attach more or less, but not
necessarily the same, importance to winning a one-shot point rather than a
multi-shot point. In the model, there are four possible (winning) outcomes
for the server, listed below with their specific probability to occur and
specific utility:

\begin{enumerate}
\item One shot on first serve: probability $x_{1}f\left( x_{1}\right) $ and
utility of $\alpha$,

\item Multiple shots on first serve: probability $x_{1}k\left( x_{1}\right) $
and utility of $\beta $,

\item One shot on second serve: probability $\left( 1-x_{1}\right)
x_{2}f\left( x_{2}\right) $ and utility of $\alpha$,

\item Multiple shots on second serve: probability $\left( 1-x_{1}\right)
x_{2}k\left( x_{2}\right) $ and utility of $\beta $.
\end{enumerate}

Hence, we assume that the player maximizes, not his probability to win a
point $p\left( x_{1},x_{2}\right) $, but rather, $\tilde{p}\left(
x_{1},x_{2}\right) $ defined as the weighted average of the probability to
win a one-shot point and the probability to win a multi-shot point, which
reads as%
\begin{equation*}
\tilde{p}\left( x_{1},x_{2}\right) =\tilde{w}\left( x_{1}\right) +\left(
1-x_{1}\right) \tilde{w}\left( x_{2}\right) .
\end{equation*}%
where $\tilde{w}\left( x\right) =x\tilde{y}\left( x\right) $, $\tilde{y}%
\left( x\right) =\alpha f\left( x\right) +\beta k\left( x\right) $, $\alpha $
is the utility attached to winning a one-shot point, and $\beta $ is the
utility attached to winning a multi-shot point.

A first important remark is that normalizing the utility of winning one-shot
rallies to $\alpha =1$ is without loss of generality, as it does not affect
the optimal solution of a player. From now on, we therefore set $\alpha =1$
and interpret $\beta $ as the \emph{relative} preference parameter for
multi-shot rallies. Moreover, in the case $\beta =1$, distinguishing between 
$f$ and $k$ is irrelevant since all that matters is the conditional
probability of winning a point $y\left( x\right) =f\left( x\right) +k\left(
x\right) $ and not how. In terms of notation, we call $\tilde{y}$ the \emph{%
perceived} conditional probability of winning a point because of the
presence of preference parameter $\beta $ in the expression. It only
coincides with the true probability when $\beta =1$. A similar
interpretation and notation is used for $w(x)$ and $p\left(
x_{1},x_{2}\right) $.

A second important remark is that this utility rewrites as%
\begin{equation*}
\tilde{p}\left( x_{1},x_{2}\right) =\underset{\text{Outcome Utility}}{%
p\left( x_{1},x_{2}\right) }+\underset{\text{Process Utility}}{\left( \beta
-1\right) \left( x_{1}k\left( x_{1}\right) +\left( 1-x_{1}\right)
x_{2}k\left( x_{2}\right) \right) }
\end{equation*}%
clearly distinguishing the outcome utility, i.e., the probability of winning
a point on one's own serve, and process utility, i.e., the probability of
winning a multi-shot point on one's own serve. Presented this way, it is
clear that our model relates to the simple model of outcome and process
utility presented in the introduction, where the salience weight $\delta $
obtains as $\delta =\beta -1$.

To summarize, our setting distinguishes between one-shot and multi-shot
rallies and allows preference weights to be different for these two types of
rallies, replacing $w\left( x\right) $ by $\tilde{w}\left( x\right) $ and
decomposing $y\left( x\right) $ into the constituants $f\left( x\right) $
and $k\left( x\right) $ to compute $\tilde{y}\left( x\right) $.

The FOCs to this problem are obtained as%
\begin{eqnarray*}
\tilde{w}^{\prime }\left( x_{1}\right) -\tilde{w}\left( x_{2}\right) &=&0, \\
\tilde{w}^{\prime }\left( x_{2}\right) &=&0.
\end{eqnarray*}


The second order conditions require that the expected utility is concave.
For this, the Hessian of the expected utility needs to be semi-definite
negative and since at optimum, the Hessian is diagonal (see Online Appendix (%
\ref{FOCs:Theory})), the SOCs therefore are%
\begin{eqnarray*}
2\tilde{y}^{\prime }\left( x_{1}\right) +x_{1}\tilde{y}^{\prime \prime
}\left( x_{1}\right) &\leq &0, \\
\left( 1-x_{1}\right) \left( 2\tilde{y}^{\prime }\left( x_{2}^{\ast }\right)
+x_{2}^{\ast }\tilde{y}^{\prime \prime }\left( x_{2}^{\ast }\right) \right)
&\leq &0.
\end{eqnarray*}


Interestingly, the SOCs provide restrictions on the curvature of the
perceived conditional probability of winning a point. Indeed, rearranging
both SOCs, one obtains $x_{j}^{\ast }\frac{\tilde{y}^{\prime \prime }\left(
x_{j}^{\ast }\right) }{\tilde{y}^{\prime }\left( x_{j}^{\ast }\right) }\geq
-2$, provided that $\tilde{y}^{\prime }\left( x_{j}^{\ast }\right) <0$, $%
\forall j=1,2$.\footnote{%
Note that, while it is possible that $\beta $ is so that $\tilde{y}^{\prime
}\left( x\right) >0$, even if $y^{\prime }\left( x\right) <0$, in that case,
the optimal second serve strategy would be $x_{2}^{\ast }=1$ since $\tilde{w}%
^{\prime }\left( x\right) >0$. For all players in the data, the observed
second serve percentage is strictly lower than $1$, and hence this situation
never occurs.} Note that $x\frac{\tilde{y}^{\prime \prime }\left( x\right) }{%
\tilde{y}^{\prime }\left( x\right) }$ is the elasticity of the perceived
marginal probability $\tilde{y}^{\prime }\left( x\right) $ and the SOCs, in
fact, indicate that this elasticity should be larger than $-2$ on the
interval $x\in \left[ x_{1}^{\ast },x_{2}^{\ast }\right] $. Moreover, the
FOCs reveal important information about the shape of the perceived
conditional probability $\tilde{y}$ at the optimum. Indeed, consider the FOC
associated with the optimal second serve strategy. Rearranging, one has $%
\tilde{y}^{\prime }\left( x_{2}^{\ast }\right) =\frac{\tilde{y}\left(
x_{2}^{\ast }\right) }{x_{2}^{\ast }}$. By a similar procedure, the FOC for
the first serve strategy obtains as $\tilde{y}^{\prime }\left( x_{1}^{\ast
}\right) =\frac{x_{2}^{\ast }\tilde{y}\left( x_{2}^{\ast }\right) -\tilde{y}%
\left( x_{1}^{\ast }\right) }{x_{1}^{\ast }}$. It follows that the optimal
serve strategy $\left( x_{1}^{\ast },x_{2}^{\ast }\right) $ pins down the
slope of the perceived conditional probability of winning a point at both $%
x_{2}^{\ast }$ and $x_{1}^{\ast }$ and hence the average curvature of $%
\tilde{y}\left( x\right) $ in the interval $\left[ x_{1}^{\ast },x_{2}^{\ast
}\right] $ as%
\begin{equation*}
\frac{1}{x_{2}^{\ast }-x_{1}^{\ast }}\int_{x_{1}^{\ast }}^{x_{2}^{\ast }}%
\tilde{y}^{\prime \prime }\left( x\right) dx=\frac{\frac{\tilde{y}\left(
x_{2}^{\ast }\right) }{x_{2}^{\ast }}-\frac{x_{2}^{\ast }\tilde{y}\left(
x_{2}^{\ast }\right) -\tilde{y}\left( x_{1}^{\ast }\right) }{x_{1}^{\ast }}}{%
x_{2}^{\ast }-x_{1}^{\ast }}.
\end{equation*}

Figure (\ref{fig_OptRF}) shows how the optimal service strategy is
determined given the skills parameters of the player (the shapes of $f$ and $%
k$) and his relative preference for winning multi-shot rallies ($\beta $).%
\footnote{%
The additional source of identification provided by distinguishing between
one-shot and multi-shot rallies is illustrated in Figure (\ref{fig_Opt2ndRF}%
) of Online Appendix (\ref{SecondServe}).} We herewith use the example of
Roger Federer.

\begin{figure}[tbp]
\caption{Optimal service strategy: Roger Federer}
\label{fig_OptRF}\centering
\includegraphics[scale=1]{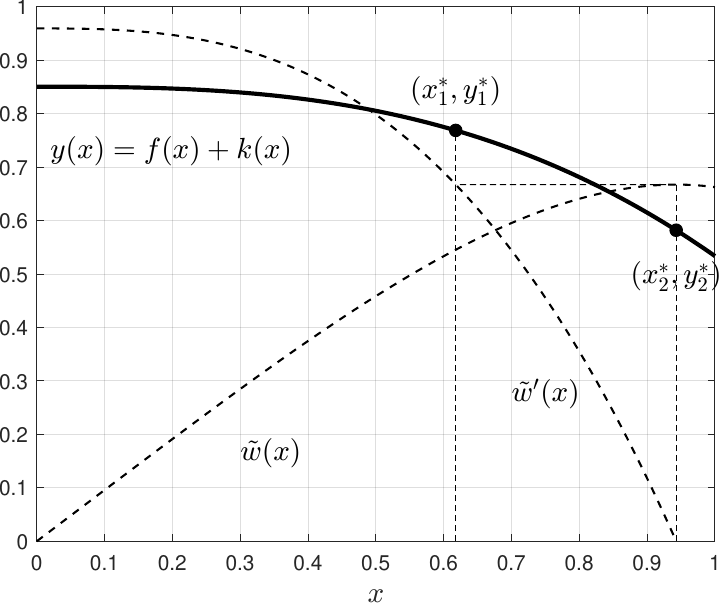}
\end{figure}

First, the second serve strategy is found by looking at the value of $x$ for
which $\tilde{w}^{\prime }\left( x\right) =0$, say $x_{2}^{\ast }$. Then,
the first serve strategy is derived by looking at the value $x$ for which $%
\tilde{w}^{\prime }\left( x\right) =\tilde{w}\left( x_{2}^{\ast }\right) $,
say $x_{1}^{\ast }$. The conditional probabilities of winning a point given
that the first (second) serve is in are indicated on the curve $y\left(
x\right) $, for the respective values of $x_{1}^{\ast }$ and $x_{2}^{\ast }$%
. This delivers two observed points $\left( x_{1}^{\ast },y_{1}^{\ast
}\right) $ and $\left( x_{2}^{\ast },y_{2}^{\ast }\right) $.

\subsection{Nonparametric approach}

Our aim is to nonparametrically bound the salience weight $\delta $. The key
idea is that optimal serve choices impose inequality restrictions on $\tilde{%
p}(.,.)$, which translate into player-specific bounds on $\delta $. To show
this, we first need to briefly introduce the data. Suppose that, for each
player $i=1,...,N$, we observe the probabilities $\left(
x_{1i},x_{2i},f_{1i},f_{2i},k_{1i},k_{2i}\right) $ where $x_{1i}$ and $%
x_{2i} $ are the probabilities of first and second serves in, $%
f_{1i}=f\left( x_{1i}\right) $ and $f_{2i}=f\left( x_{2i}\right) $ are the
conditional probabilities of winning a point with one shot on first and
second serves, respectively, and $k_{1i}=k\left( x_{1i}\right) $ and $%
k_{2i}=k\left( x_{2i}\right) $ are the conditional probabilities of winning
a point with multiple shots on first and second serves, respectively.%
\footnote{%
In the data section, we discuss how to estimate these probabilities $\left(
x_{1i},x_{2i},f_{1i},f_{2i},k_{1i},k_{2i}\right) $ given data on rallies
played, in possibly multiple matches, on the serve of each player $i$.}

We assume that each player $i$ is a perfect maximizer of his perceived
probability of winning a point $\tilde{p}\left( x_{1},x_{2}\right) $ so that
the observed data reflect the optimum of each player, i.e., $\left(
x_{1i},x_{2i}\right) =\left( x_{1i}^{\ast },x_{2i}^{\ast }\right) $.
Dropping the index $i$ for notational simplicity, optimality implies for
each player%
\begin{align}
\tilde{p}(x_{1}^{\ast },x_{2}^{\ast })& \geq \tilde{p}(x_{1}^{\ast
},x_{1}^{\ast }),  \label{eqNPI1} \\
\tilde{p}(x_{1}^{\ast },x_{2}^{\ast })& \geq \tilde{p}(x_{2}^{\ast
},x_{2}^{\ast }).  \label{eqNPI2}
\end{align}

The first inequality compares the observed (optimal) strategy to a deviation
in which the player uses the first-serve strategy in both serves; the second
compares it to always using the second-serve strategy.\footnote{%
Note that inequality (\ref{eqNPI1}) implies $\tilde{p}\left( x,x_{2}^{\ast
}\right) \geq \tilde{p}\left( x,x_{1}^{\ast }\right) $ for all $x$, so that
the third inequality derived from optimality, i.e., $\tilde{p}(x_{1}^{\ast
},x_{2}^{\ast })\geq \tilde{p}(x_{2}^{\ast },x_{1}^{\ast })$, is in fact
implied by inequalities (\ref{eqNPI1}-\ref{eqNPI2}). Indeed, note that, as
long as one maintains the first serve strategy constant, the first serve
strategy contributes the same term $\tilde{w}\left( x\right) $ and a slope $%
\left( 1-x\right) $ to $\tilde{p}\left( x,x_{2}\right) $ and $\tilde{p}%
\left( x,x_{1}\right) $ so that if the inequality $\tilde{p}\left(
x,x_{2}\right) >\tilde{p}\left( x,x_{1}\right) $ holds for $x$, it holds for
all $x^{\prime }\neq x$.} Each inequality above can be written in linear
form $A+\delta B\geq 0$ implying $\delta \geq -\frac{A}{B}$ if $B>0$ and $%
\delta \leq -\frac{A}{B}$ if $B<0$. The parameter $A$ is in fact the
difference in the probability of winning a point ($p\left( .,.\right) $ not $%
\tilde{p}\left( .,.\right) $) between the two strategies being compared in
the inequality, whereas the parameter $B$ is a similar difference, but for
the probability of winning a multi-shot point. By construction, $A-B:=C$ is
the same difference, but for the probability of winning a one-shot point.

The optimality conditions can be used to define a lower bound $L$ (when $B>0$%
) and an upper bound $U$ (when $B<0$) on the value of the salience weight $\delta
\in \left[ L,U\right] $ of each player. For all 151 players in our data, we
note that the lower bound $L$ is determined by inequality (\ref{eqNPI1})
whereas the upper bound is either unrestricted ($+\infty $) by the
optimality conditions, which is the case for 23 players, or determined by
inequality (\ref{eqNPI2}). Regarding the sign of the bounds, the data
actually show that for 8 players the lower bound is positive ($B>0$ and $A<0$%
) so that we can conclude that the salience weight of these players must be
positive. For the remaining players, the lower bound is negative ($B>0$ and $%
A>0$), and the upper bound positive ($B<0$ and $A>0$).

The optimality conditions alone are only enough to determine the sign of the
salience weight $\delta $ of 8 players. However, these conditions highlight
a strategy to derive a sufficient condition for a positive lower bound that
can be checked against the data for the remaining players. We note that the
lower bound $L$ is determined by an upper bound of $\tilde{p}\left(
x_{1}^{\ast },x_{1}^{\ast }\right) $, i.e., inequality (\ref{eqNPI1}).
Building on this observation, we therefore investigate whether there exists
a $x_{0}\in \left[ x_{1}^{\ast },x_{2}^{\ast }\right] $ so that the
inequality%
\begin{equation}
\tilde{p}\left( x_{0},x_{2}^{\ast }\right) \geq \tilde{p}\left( x_{1}^{\ast
},x_{1}^{\ast }\right)  \label{eqLowerBtb1}
\end{equation}%
is satisfied, where the associated values of $A$ and $B$ for this inequality
are%
\begin{eqnarray*}
A\left( x_{0}\right) &=&p\left( x_{0},x_{2}^{\ast }\right) -p\left(
x_{1}^{\ast },x_{1}^{\ast }\right) =B\left( x_{0}\right) +C\left(
x_{0}\right) , \\
B\left( x_{0}\right) &=&x_{0}k\left( x_{0}\right) +\left( 1-x_{0}\right)
x_{2}^{\ast }k_{2}^{\ast }-\left( 2-x_{1}^{\ast }\right) x_{1}^{\ast
}k_{1}^{\ast }, \\
C\left( x_{0}\right) &=&x_{0}f\left( x_{0}\right) +\left( 1-x_{0}\right)
x_{2}^{\ast }f_{2}^{\ast }-\left( 2-x_{1}^{\ast }\right) x_{1}^{\ast
}f_{1}^{\ast }.
\end{eqnarray*}

If there exists a $x_{0}\in \left[ x_{1}^{\ast },x_{2}^{\ast }\right] $ so
that $A\left( x_{0}\right) \leq 0$ and $B\left( x_{0}\right) >0$ then this
inequality delivers a positive lower bound on the salience weight.\footnote{%
We already know that for 8 players in the data when $x_{0}=x_{1}^{\ast }$,
which corresponds to Inequality (\ref{eqNPI1}), $A\left( x_{1}^{\ast
}\right) <0$ and $B\left( x_{1}^{\ast }\right) >0$.}

We proceed in two steps. First, we show in Lemma (\ref{LemmaNonPara}),
conditions under which there exists a unique $x_{0}\in \left[ x_{1}^{\ast
},x_{2}^{\ast }\right] $ so that $A\left( x_{0}\right) =0$. Then, we show in
Lemma (\ref{LemmaNonPara2}) conditions under which $B\left( x_{0}\right) $
is positive.

\begin{lemma}
\label{LemmaNonPara}If (a): Condition (\ref{Assumption:Theory}) is
satisfied, (b): $B\left( x_{1}^{\ast }\right) >A\left( x_{1}^{\ast }\right)
>0$, and (c): $B\left( x_{2}^{\ast }\right) >0>A\left( x_{2}^{\ast }\right) $%
, then there exists a unique $x_{0}\in \left[ x_{1}^{\ast },x_{2}^{\ast }%
\right] $ so that $A\left( x_{0}\right) =0$.
\end{lemma}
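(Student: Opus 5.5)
The plan is to study $A(x_0)=p(x_0,x_2^{\ast})-p(x_1^{\ast},x_1^{\ast})$ as a function of $x_0$ alone on $[x_1^{\ast},x_2^{\ast}]$, holding $x_1^{\ast},x_2^{\ast}$ fixed. Since $p(x_0,x_2^{\ast})=w(x_0)+(1-x_0)w(x_2^{\ast})$ and $p(x_1^{\ast},x_1^{\ast})$ is a constant, we have
\begin{equation*}
A(x_0)=w(x_0)-x_0\,w(x_2^{\ast})+\text{const}.
\end{equation*}
So $A$ is $w$ plus an affine function of $x_0$. Existence will come from the intermediate value theorem and uniqueness from strict concavity.

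\textbf{Existence.} Under Condition (\ref{Assumption:Theory}), $f$ and $k$ are continuous and twice differentiable. Hence $y=f+k$ and $w(x)=xy(x)$ are continuous, and so is $A$. Hypothesis (b) gives $A(x_1^{\ast})>0$ and hypothesis (c) gives $A(x_2^{\ast})<0$. The intermediate value theorem therefore yields some $x_0\in(x_1^{\ast},x_2^{\ast})$ with $A(x_0)=0$.

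\textbf{Uniqueness.} I will show that $A$ is strictly concave. Differentiating twice gives
\begin{equation*}
A''(x_0)=w''(x_0)=2y'(x_0)+x_0\,y''(x_0).
\end{equation*}
Condition (\ref{Assumption:Theory}) gives $y'<0$ and $y''<0$, and $x_0\geq 0$, so $A''<0$ on the interval.

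Now suppose there were two zeros $a<b$ in $[x_1^{\ast},x_2^{\ast}]$. Since $A(x_1^{\ast})>0$, we have $x_1^{\ast}<a<b$. Write $a=\lambda x_1^{\ast}+(1-\lambda)b$ with $\lambda\in(0,1)$. Strict concavity then gives
\begin{equation*}
A(a)>\lambda A(x_1^{\ast})+(1-\lambda)A(b)=\lambda A(x_1^{\ast})>0,
\end{equation*}
which contradicts $A(a)=0$. Hence the zero is unique. (Equivalently, a strictly concave function that is positive at the left endpoint and negative at the right endpoint crosses zero exactly once, from above.)

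\textbf{Role of the hypotheses.} The parts of (b) and (c) involving $B$, namely $B(x_1^{\ast})>A(x_1^{\ast})$ and $B(x_2^{\ast})>0$, are not needed for this lemma; only the sign change of $A$ is used. I expect them to be needed in Lemma (\ref{LemmaNonPara2}) to sign $B(x_0)$. In the write-up I would note this explicitly, since a reader might otherwise look for where they enter.

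\textbf{Main obstacle.} The only delicate step is verifying strict concavity of $w$. This has to rest on the assumptions on $y$, not on those on $f$ and $k$ separately, because $k$ may be convex. Condition (\ref{Assumption:Theory}) directly assumes $y'<0$ and $y''<0$, so the step should go through.
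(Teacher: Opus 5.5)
Your proof is correct and follows the paper's argument: continuity of $A$ together with the sign change $A(x_1^{\ast})>0>A(x_2^{\ast})$ gives existence via the Intermediate Value Theorem, and $A''(x)=2y'(x)+xy''(x)<0$ gives uniqueness. You spell out the concavity-to-uniqueness step with an explicit chord argument, which the paper leaves implicit, and your remark that the $B$-parts of (b)--(c) are never used is accurate for the paper's proof as well.
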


\begin{proof}
(Sketch) By continuity and an application of the Intermediate Value Theorem
there exists a $x_{0}\in \left[ x_{1}^{\ast },x_{2}^{\ast }\right] $ so that 
$A\left( x_{0}\right) =0$. Uniqueness follows from the strict concavity of $%
A\left( x\right) $ ($y\left( x\right) $). See Appendix (\ref{App:Lemma}) for
the full proof.
\end{proof}

Conditions (b)--(c) of Lemma (\ref{LemmaNonPara}) are sign restrictions
directly verifiable in the data. As it turns out, we find that, for all but
one player, these conditions are satisfied and hence Lemma (\ref%
{LemmaNonPara}) applies.

Next, we want to show conditions under which $B\left( x_{0}\right) >0$. We
first note that, since $A\left( x_{0}\right) =0$,%
\begin{eqnarray}
B\left( x_{0}\right) &>&0\Leftrightarrow C\left( x_{0}\right) <0  \notag \\
&\Leftrightarrow &x_{0}f\left( x_{0}\right) <x_{2}^{\ast }f_{2}^{\ast
}x_{0}-x_{2}^{\ast }f_{2}^{\ast }+\left( 2-x_{1}^{\ast }\right) x_{1}^{\ast
}f_{1}^{\ast }.  \label{eqIneqPara}
\end{eqnarray}

Inequality (\ref{eqIneqPara}) cannot directly be tested in the data since we
do not observe $x_{0}$. However, a sufficient condition for this inequality
to hold is%
\begin{equation}
xf\left( x\right) <x_{2}^{\ast }f_{2}^{\ast }x-x_{2}^{\ast }f_{2}^{\ast
}+\left( 2-x_{1}^{\ast }\right) x_{1}^{\ast }f_{1}^{\ast }\text{, }\forall
x\in \left[ x_{1}^{\ast },x_{2}^{\ast }\right] ,  \label{eqIneqPara2}
\end{equation}%
and Lemma (\ref{LemmaNonPara2}) below shows that one can quantify the extent
to which Inequality (\ref{eqIneqPara2}) restricts the graph of the function $%
m\left( x\right) :=xf\left( x\right) $ using observable data for each player
and hence quantify how likely this inequality is met.

\begin{lemma}
\label{LemmaNonPara2}If (a): Condition (\ref{Assumption:Theory}) is
satisfied, and (b): $x_{1}^{\ast }f_{1}^{\ast }>x_{2}^{\ast }f_{2}^{\ast }$,
then, the larger $A_{2}/A_{1}$ the more restrictive the sufficient
condition, i.e., Inequality (\ref{eqIneqPara2}), and the less likely the
condition $B\left( x_{0}\right) >0$, where%
\begin{equation*}
A_{1}=\frac{1}{2}\left\vert x_{2}^{\ast }\left( f_{1}^{\ast }-f_{2}^{\ast
}\right) \left( x_{12}-x_{1}^{\ast }\right) \right\vert .
\end{equation*}%
and%
\begin{equation*}
A_{2}=\QATOPD\{ . {\frac{1}{2}\left\vert \left( x_{12}-x_{14}\right) \left( 
\frac{x_{2}^{\ast }f_{2}^{\ast }}{1-x_{2}^{\ast }}\left( 1-x_{24}\right)
-f_{1}^{\ast }x_{24}\right) \right\vert \text{, if }x_{14}<x_{24}}{0\text{,
else}},
\end{equation*}%
and $x_{12}=\frac{x_{2}^{\ast }f_{2}^{\ast }x_{1}^{\ast }}{\left(
1-x_{2}^{\ast }\right) x_{1}^{\ast }f_{1}^{\ast }+x_{2}^{\ast }f_{2}^{\ast
}x_{1}^{\ast }}$, $x_{14}=b_{12}\frac{x_{1}^{\ast }}{x_{1}^{\ast
}f_{1}^{\ast }-x_{2}^{\ast }f_{2}^{\ast }x_{1}^{\ast }}$, $x_{24}=\frac{%
x_{2}^{\ast }f_{2}^{\ast }-\left( 1-x_{2}^{\ast }\right) b_{12}}{x_{2}^{\ast
}f_{2}^{\ast }\left( 2-x_{2}^{\ast }\right) }$ and $b_{12}=\left(
2-x_{1}^{\ast }\right) x_{1}^{\ast }f_{1}^{\ast }-x_{2}^{\ast }f_{2}^{\ast }$%
.
\end{lemma}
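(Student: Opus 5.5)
The plan is to recast Inequality (\ref{eqIneqPara2}) as a geometric statement about the graph of $m(x)=xf(x)$ on $[x_1^\ast,x_2^\ast]$. Let $m_1=x_1^\ast f_1^\ast$ and $m_2=x_2^\ast f_2^\ast$. The right-hand side of (\ref{eqIneqPara2}) equals $x_2^\ast f_2^\ast(x-1)+(2-x_1^\ast)x_1^\ast f_1^\ast=m_2x+b_{12}$. It is therefore a straight line $\ell(x)=m_2x+b_{12}$ built only from observables, and (\ref{eqIneqPara2}) says the graph of $m$ lies strictly below $\ell$ on $[x_1^\ast,x_2^\ast]$. The argument has two parts. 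First, use Condition (\ref{Assumption:Theory}) to find the polygon $P$ of points that any admissible graph of $m$ can occupy. Second, show that $A_1$ is the area of $P$ and $A_2$ is the area of the part of $P$ lying above $\ell$. The ratio $A_2/A_1$ is then the share of admissible graphs (in an area sense) that may violate the sufficient condition, which gives the monotonicity claim.

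\emph{Step 1: the admissible region.} Under Condition (\ref{Assumption:Theory}), $m''(x)=2f'(x)+xf''(x)<0$, so $m$ is strictly concave. It passes through the observed points $(x_1^\ast,m_1)$ and $(x_2^\ast,m_2)$, and satisfies $m(0)=0$ and $m(1)=f(1)\ge 0$.
\begin{itemize}
\item Concavity puts $m$ above the chord joining $(x_1^\ast,m_1)$ and $(x_2^\ast,m_2)$ on $[x_1^\ast,x_2^\ast]$.
\item Concavity with $m(0)=0$ makes $m(x)/x$ nonincreasing, so $m(x)\le f_1^\ast x$ for $x\ge x_1^\ast$.
\item Comparing the slope of the chord from $x$ to $x_2^\ast$ with the chord from $x_2^\ast$ to $1$, and using $m(1)\ge 0$, gives $m(x)\le \frac{m_2}{1-x_2^\ast}(1-x)$ for $x\le x_2^\ast$.
\end{itemize}
So $P$ is the triangle bounded below by the chord and above by $\min\{f_1^\ast x,\frac{m_2}{1-x_2^\ast}(1-x)\}$. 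The two upper lines cross at $x_{12}=m_2/((1-x_2^\ast)f_1^\ast+m_2)$, which matches the stated formula after multiplying numerator and denominator by $x_1^\ast$. A shoelace computation with the vertices $(x_1^\ast,m_1)$, $(x_2^\ast,m_2)$ and $(x_{12},f_1^\ast x_{12})$ should reduce to $A_1=\frac12|x_2^\ast(f_1^\ast-f_2^\ast)(x_{12}-x_1^\ast)|$. This is a routine check.

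\emph{Step 2: the part cut off by $\ell$.} Hypothesis (b), $m_1>m_2$, places $\ell$ above both endpoints of the chord:
\begin{itemize}
\item $\ell(x_1^\ast)-m_1=(1-x_1^\ast)(m_1-m_2)>0$;
\item $\ell(x_2^\ast)-m_2=(2-x_1^\ast)m_1-(2-x_2^\ast)m_2>0$, using $x_1^\ast<x_2^\ast$.
\end{itemize}
Hence $\ell$ never meets the lower side of $P$. It can only cut off the top vertex, and this happens exactly when it crosses the two upper edges inside their ranges. Solving $f_1^\ast x=m_2x+b_{12}$ gives $x_{14}$. Solving $\frac{m_2}{1-x_2^\ast}(1-x)=m_2x+b_{12}$ gives $x_{24}$. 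If $x_{14}<x_{24}$, the part of $P$ above $\ell$ is the triangle with vertices $(x_{14},f_1^\ast x_{14})$, $(x_{12},f_1^\ast x_{12})$ and $(x_{24},\ell(x_{24}))$, and its area is $A_2$ as stated. Otherwise $\ell$ passes above the apex $(x_{12},f_1^\ast x_{12})$ and $A_2=0$. Either way, every admissible graph of $m$ lies below $\ell$, so (\ref{eqIneqPara2}) holds, unless it enters this cut-off triangle.

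\emph{Step 3 and the main obstacle.} Because the admissible graphs fill $P$ and only those entering the area-$A_2$ triangle can fail (\ref{eqIneqPara2}), a larger $A_2/A_1$ means a larger share of $P$ where the sufficient condition can fail. That makes (\ref{eqIneqPara2}) more restrictive, and hence $B(x_0)>0$ less assured by it. The delicate part is making ``more likely'' precise. I would state it as a comparative statement about the excluded share of the admissible region, i.e.\ a uniform-area measure on $P$, rather than as a probability over functions. I would also take care that the sign conventions, and the case $x_{14}\ge x_{24}$ where $A_2=0$ and the condition is automatically met, are handled consistently.
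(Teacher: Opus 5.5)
Your proposal is correct and follows essentially the same route as the paper. Both arguments confine the graph of the concave function $m(x)=xf(x)$ on $[x_1^\ast,x_2^\ast]$ to the triangle bounded by three lines: the chord, the line through the origin and $(x_1^\ast,m_1)$, and the line through $(x_2^\ast,m_2)$ and $(1,0)$. They then measure how restrictive (\ref{eqIneqPara2}) is by the share of that triangle lying above the line $m_2x+b_{12}$. Your version is slightly cleaner in two places: you correctly take the upper envelope as a $\min$ of the two upper lines (the paper writes $\max$), and you justify why $x_{14}<x_{24}$ exactly separates the case where the line cuts the triangle from the case where it does not.
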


\begin{proof}
(Sketch) From the concavity of $m\left( x\right) $, and the fact that $%
m\left( 0\right) =0$ and $f\left( 1\right) \geq 0$, the graph of the
function $m\left( x\right) :=xf\left( x\right) $ on $\left[ x_{1}^{\ast
},x_{2}^{\ast }\right] $ must lie in the triangle $\mathcal{T}_{1}$ with
vertices $\left( x_{1}^{\ast },x_{1}^{\ast }f_{1}^{\ast }\right) $, $\left(
x_{2}^{\ast },x_{2}^{\ast }f_{2}^{\ast }\right) $ and $\left(
x_{12},f_{1}^{\ast }x_{12}\right) =\left( \frac{x_{2}^{\ast }f_{2}^{\ast
}x_{1}^{\ast }}{\left( 1-x_{2}^{\ast }\right) x_{1}^{\ast }f_{1}^{\ast
}+x_{2}^{\ast }f_{2}^{\ast }x_{1}^{\ast }},\frac{x_{1}^{\ast }f_{1}^{\ast
}x_{2}^{\ast }f_{2}^{\ast }}{\left( 1-x_{2}^{\ast }\right) x_{1}^{\ast
}f_{1}^{\ast }+x_{2}^{\ast }f_{2}^{\ast }x_{1}^{\ast }}\right) $. The area
of $\mathcal{T}_{1}$ obtains from the data as 
\begin{equation*}
A_{1}=\frac{1}{2}\left\vert x_{2}^{\ast }\left( f_{1}^{\ast }-f_{2}^{\ast
}\right) \left( x_{12}-x_{1}^{\ast }\right) \right\vert .
\end{equation*}%
Furthermore, from (b), the line whose expression appears on the right-hand
side of Inequality (\ref{eqIneqPara2}) either slices triangle $\mathcal{T}%
_{1}$ to form triangle $\mathcal{T}_{2}$ with vertices $\left(
x_{14},f_{1}^{\ast }x_{14}\right) $, $\left( x_{12},f_{1}^{\ast
}x_{12}\right) $ and $\left( x_{24},\frac{x_{2}^{\ast }f_{2}^{\ast }}{%
1-x_{2}^{\ast }}\left( 1-x_{24}\right) \right) $, where $x_{14}=b_{12}\frac{%
x_{1}^{\ast }}{x_{1}^{\ast }f_{1}^{\ast }-x_{2}^{\ast }f_{2}^{\ast
}x_{1}^{\ast }}$ and $x_{24}=\frac{x_{2}^{\ast }f_{2}^{\ast }-\left(
1-x_{2}^{\ast }\right) b_{12}}{x_{2}^{\ast }f_{2}^{\ast }\left(
2-x_{2}^{\ast }\right) }$ and $b_{12}=\left( 2-x_{1}^{\ast }\right)
x_{1}^{\ast }f_{1}^{\ast }-x_{2}^{\ast }f_{2}^{\ast }$, or passes above
triangle $\mathcal{T}_{1}$, i.e., $\mathcal{T}_{2}=\left\{ \emptyset
\right\} $, leaving the graph of $m\left( x\right) $ unrestricted. Since the
graph of $m\left( x\right) $ on $\left[ x_{1}^{\ast },x_{2}^{\ast }\right] $
must lie in the unrestricted area (trapezoid) $\mathcal{T}_{1}\backslash 
\mathcal{T}_{2}$ in order to satisfy Inequality (\ref{eqIneqPara2}), the
extent to which this inequality is restrictive can be measured as the share
of the area of $\mathcal{T}_{2}$ relative to that of $\mathcal{T}_{1}$,
where the area of $\mathcal{T}_{2}$ is given as%
\begin{equation*}
A_{2}=\frac{1}{2}\left\vert \left( x_{12}-x_{14}\right) \left( \frac{%
x_{2}^{\ast }f_{2}^{\ast }}{1-x_{2}^{\ast }}\left( 1-x_{24}\right)
-f_{1}^{\ast }x_{24}\right) \right\vert ,
\end{equation*}%
if $x_{14}<x_{24}$ and $A_{2}=0$ otherwise. The larger the share $%
A_{2}/A_{1} $ the more restrictive the sufficient condition, i.e.,
Inequality (\ref{eqIneqPara2}), and the less likely the condition $B\left(
x_{0}\right) >0$ is satisfied.
\end{proof}

Note that, condition (b) and quantity $A_{2}/A_{1}$ of Lemma (\ref%
{LemmaNonPara2}) can be checked/computed for each player in the data. We
find that condition (b) is satisfied for all but one player (Pedro Martinez)
so that Lemma (\ref{LemmaNonPara2}) applies for these players. Regarding the
share $A_{2}/A_{1}$, we find the following interesting results. First, for
the 8 players whose optimality conditions already guarantee a positive lower
bound on the salience weight, we find that $A_{2}/A_{1}=0$ for 7 of them,
and $A_{2}/A_{1}=11.4\%$ for the remaining one (Mats Wilander). This confirms
that the metric $A_{2}/A_{1}$ is informative about the likelihood that the
sufficient condition for having a positive lower bound for $\delta $ is met.
We further find that the sufficient condition for a positive salience weight
is met with certainty for 15 players ($A_{2}/A_{1}=0$) and restricts the
feasible area by $5\%$ or less for 69 players. For 101 players, the
restriction represents less than $10\%$; it is more than $20\%$ for only 20
players. Note that the median value of $A_{2}/A_{1}$ is $5.8\%$. Finally, as
a last means of comparison, we note that there are 110 players with a value
of $A_{2}/A_{1}<11.4\%$, i.e., the value obtained for Mats Wilander, the
player known to have a positive lower bound for the salience weight from the
optimality conditions but a non $0$ ratio for $A_{2}/A_{1}$. We conclude
that the sufficient condition for the salience weight to be positive is
likely satisfied for a large majority of players.

\subsection{Parametric approach}

Our objective is to estimate the salience weight $\delta$ (or equivalently,
the relative preference for winning a multi-shot point $\beta$) to conduct
counterfactual analyses. This requires imposing a parametric structure on
the model, in particular on the functions $f(x)$ and $k(x)$. Recall that the
SOCs of the optimization problem give a restriction on the elasticity of $%
\tilde{y}^{\prime }\left( x\right) $, the perceived marginal probability of
winning a point. Since $\tilde{y}^{\prime }\left( x\right) =f^{\prime
}\left( x\right) +\beta k^{\prime }\left( x\right) $, we propose to
parametrize the elasticity of the marginal probabilities $f^{\prime }\left(
x\right) $ and $k^{\prime }\left( x\right) $ as in the following condition.

\begin{condition}
\label{Assumption:Elasticity}$x\frac{f^{\prime \prime }\left( x\right) }{%
f^{\prime }\left( x\right) }=x\frac{k^{\prime \prime }\left( x\right) }{%
k^{\prime }\left( x\right) }=\lambda -1>0$.\footnote{%
Note that this assumption corresponds to the CRRA\ assumption of the von
Neumann-Morgenstern utility function in the context of a classical problem
of expected utility. In Online Appendix (\ref{Section:Softmax}) we present
the model using the CARA assumption instead, i.e., $\frac{f^{\prime \prime
}\left( x\right) }{f^{\prime }\left( x\right) }=\frac{k^{\prime \prime
}\left( x\right) }{k^{\prime }\left( x\right) }=\lambda >0$. The associated
functions are of the form $f\left( x\right) =a_{f}+\tau _{f}\exp \left(
\lambda x\right) $.}
\end{condition}

Condition (\ref{Assumption:Elasticity}) imposes that the elasticities of the
marginal probability of winning a point with one-shot and with multi-shot,
conditional on the serve being in, are equal to each other and to a strictly
positive constant $\lambda -1$.

This condition has three main implications. First, the elasticity of both
the perceived and true marginal probability of winning a point is also equal
to $\lambda -1$, as under condition (\ref{Assumption:Elasticity}) one has $x%
\frac{\tilde{y}^{\prime \prime }\left( x\right) }{\tilde{y}^{\prime }\left(
x\right) }=x\frac{y^{\prime \prime }\left( x\right) }{y^{\prime }\left(
x\right) }=\lambda -1$.\footnote{%
Indeed, since one has $x\frac{f^{\prime \prime }\left( x\right) }{f^{\prime
}\left( x\right) }=x\frac{k^{\prime \prime }\left( x\right) }{k^{\prime
}\left( x\right) }=\lambda -1$, it follows that $xk^{\prime \prime }\left(
x\right) =\left( \lambda -1\right) k^{\prime }\left( x\right) $ and $%
xf^{\prime \prime }\left( x\right) =\left( \lambda -1\right) f^{\prime
}\left( x\right) $ and since $xk^{\prime \prime }\left( x\right) +xf^{\prime
\prime }\left( x\right) =xy^{\prime \prime }\left( x\right) $ and $%
xk^{\prime \prime }\left( x\right) +x\beta f^{\prime \prime }\left( x\right)
=x\tilde{y}^{\prime \prime }\left( x\right) $, one has%
\begin{eqnarray*}
x\tilde{y}^{\prime \prime }\left( x\right) &=&\left( \lambda -1\right) 
\tilde{y}^{\prime }\left( x\right) , \\
xy^{\prime \prime }\left( x\right) &=&\left( \lambda -1\right) y^{\prime
}\left( x\right) ,
\end{eqnarray*}%
which for $\tilde{y}^{\prime }\left( x\right) ,y^{\prime }\left( x\right) >0$
$\ $yields the result in the text.} Furthermore, since, by assumption $%
y^{\prime }\left( x\right) <0$, this means that $y\left( x\right) $ follows
the law of diminishing marginal \emph{returns} (read conditional probability
of winning a point).

Second, it means that $f\left( x\right) $ and $k\left( x\right) $ are power
functions of the form $f\left( x\right) =\frac{a_{f}-x^{\lambda }}{\tau _{f}}
$ and $k\left( x\right) =\frac{a_{k}-x^{\lambda }}{\tau _{k}}$, offering
great flexibility with only 5 unknown parameters.

Third, as a by product of the two preceeding remarks, the conditional
probability of winning a point $y\left( x\right) $ is itself a power
function with power $\lambda $, as indeed $y\left( x\right) =f\left(
x\right) +k\left( x\right) =\frac{a-x^{\lambda }}{\tau }$, where $\tau =%
\frac{\tau _{f}\tau _{k}}{\tau _{f}+\tau _{k}}$ and $a=\frac{a_{f}\tau
_{k}+a_{k}\tau _{f}}{\tau _{f}+\tau _{k}}$, see Online Appendix (\ref%
{SumPower}).\footnote{%
This corresponds to the functional shape assumed in Klaassen and Magnus
(2009) for $y\left( x\right) $.}

Associated with this parametric choice, Conditions (\ref{Assumption:Theory})
are met with the following restrictions on the parameters of $f\left(
x\right) $ and $k\left( x\right) $.

\begin{condition}
\label{Assumption:Parametric}i) $\lambda > 1$, ii) $\tau _{f}>0$ and either $%
\tau _{k}>0$ or $-\tau _{k}>\max \left( \tau _{f},\beta \tau _{f}\right) $,
iii) $0<\frac{a_{f}\tau _{k}+a_{k}\tau _{f}}{\tau _{k}+\tau _{f}}<\lambda +1$%
.
\end{condition}

Note that Condition (\ref{Assumption:Parametric}.ii) garantees that $y\left(
x\right) =f\left( x\right) +k\left( x\right) $ is decreasing, as indeed it
leads to $\tau =\frac{\tau _{f}\tau _{k}}{\tau _{f}+\tau _{k}}>0$, and $%
f^{\prime }\left( x\right) <0$ from $\tau _{f}>0$.\footnote{%
In our data, these conditions are met for all players except for the
condition $-\tau _{k}>\beta \tau _{f}$, which is not met for 3 of them.}

With these parametric choices, the optimal first and second serve strategies
can be derived in closed form from the first-order conditions (see Online
Appendix (\ref{FOCs:Parametric})). One obtains for the second serve strategy%
\begin{equation*}
x_{2}^{\ast \lambda }=\frac{a_{f}+a_{k}\beta \frac{\tau _{f}}{\tau _{k}}}{%
\left( 1+\lambda \right) \left( 1+\beta \frac{\tau _{f}}{\tau _{k}}\right) },
\end{equation*}%
yielding an interior solution, i.e., $0<x_{2}^{\ast }<1$, if and only if $%
\lambda +1>\frac{a_{f}+a_{k}\beta \frac{\tau _{f}}{\tau _{k}}}{1+\beta \frac{%
\tau _{f}}{\tau _{k}}}>0.$

And, the first serve strategy, therefore obtains, after simple substitution,
as%
\begin{equation*}
x_{1}^{\ast \lambda }=x_{2}^{\ast \lambda }\left( 1-\frac{\lambda }{%
1+\lambda }x_{2}^{\ast }\right) .
\end{equation*}

This is a remarkable result\footnote{%
This result arises not only with power functions but also with softmax
functions, see Online Appendix (\ref{Section:Softmax}).} indicating that,
when players are optimizers, data on first and second serve percentages
uniquely identify the curvature parameter $\lambda $ as shown in section (%
\ref{Section:IdentificationComputation}).

To summarize, under our standing assumptions, at optimality, the server
adopts the following service strategy

\begin{equation*}
\left( x_{1}^{\ast \lambda },x_{2}^{\ast \lambda }\right) =\left(
x_{2}^{\ast \lambda }-\frac{\lambda }{1+\lambda }x_{2}^{\ast }x_{2}^{\ast
\lambda },\frac{a_{f}+a_{k}\beta \frac{\tau _{f}}{\tau _{k}}}{\left(
1+\lambda \right) \left( 1+\beta \frac{\tau _{f}}{\tau _{k}}\right) }\right)
,
\end{equation*}%
and enjoys the following conditional probabilities of winning a point in one
and multiple shots on first and second serve:%
\begin{eqnarray*}
f\left( x_{1}^{\ast }\right) &=&\frac{a_{f}-x_{1}^{\ast \lambda }}{\tau _{f}}%
,\text{ }k\left( x_{1}^{\ast }\right) =\frac{a_{k}-x_{1}^{\ast \lambda }}{%
\tau _{k}}, \\
f\left( x_{2}^{\ast }\right) &=&\frac{a_{f}-x_{2}^{\ast \lambda }}{\tau _{f}}%
,\text{ }k\left( x_{2}^{\ast }\right) =\frac{a_{k}-x_{2}^{\ast \lambda }}{%
\tau _{k}}.
\end{eqnarray*}

An important remark is that although the preference parameter $\beta $ and
the relative skills $\frac{\tau _{f}}{\tau _{k}}$ only enter the expressions
of the optimal serve strategy on first and second serve, through the term $%
\beta \frac{\tau _{f}}{\tau _{k}}$, the expressions for the conditional
probabilities of winning a one-shot or multi-shot point at the optimum
depend respectively, only (directly) on the skills parameters $\tau _{f}$
and $\tau _{k}$. At same value of $\beta \frac{\tau _{f}}{\tau _{k}}$, i.e.,
at same optimal serve strategy, players with different skills $\tau _{f}$
and $\tau _{k}$ have different optimal conditional probabilities of winning
a one-shot or multi-shot point. This is the source for the separate
identification of the preference parameter and the skills parameters to
exploit in the data.

$\left( x_{1}^{\ast },x_{2}^{\ast }\right) $ is an optimum if the expected
utility is concave at $\left( x_{1}^{\ast },x_{2}^{\ast }\right) $. With the
parametric shapes assumed above, the conditions for the expected utility to
be concave are given as%
\begin{eqnarray*}
-\tau _{f}\lambda \left( 1+\lambda \right) \left( x_{1}^{\ast }\right)
^{\lambda -1}\left( 1+\beta \frac{\tau _{f}}{\tau _{k}}\right) &\leq &0 \\
-\tau _{f}\left( 1-x_{1}\right) \lambda \left( 1+\lambda \right) \left(
x_{2}^{\ast }\right) ^{\lambda -1}\left( 1+\beta \frac{\tau _{f}}{\tau _{k}}%
\right) &\leq &0
\end{eqnarray*}

Since $\tau _{f}>0$, $\lambda >0$, $x_{1}^{\ast }>0$, and from Condition (%
\ref{Assumption:Parametric}.ii) one has either $\tau _{k}>0$ or $-\tau
_{k}>\beta \tau _{f}$ so that $1+\beta \frac{\tau _{f}}{\tau _{k}}>0$, we
conclude that the expected utility is concave.

It is easy to show by simple substitution that when $\beta =1$, the optimal
strategy is%
\begin{equation*}
\left( x_{1}^{\ast \lambda },x_{2}^{\ast \lambda }\right) =\left(
x_{2}^{\ast \lambda }\left( 1-\frac{\lambda }{\lambda +1}x_{2}^{\ast
}\right) ,\frac{a}{\lambda +1}\right)
\end{equation*}%
and the SOC reads as $\tau \geq 0$.\footnote{%
See Online Appendix \ref{CompStatics} for comparative statics.}

\subsection{Identification and computation\label%
{Section:IdentificationComputation}}

For each player $i=1,...,N$, we observe the probabilities $\left(
x_{1i},x_{2i},f_{1i},f_{2i},k_{1i},k_{2i}\right) $ and have 5 unknown skill
parameters $\left( \lambda _{i},a_{fi},\tau _{fi},a_{ki},\tau _{ki}\right) $
and 1 unknown preference parameter $\beta _{i}$. The two points $\left(
x_{1i},f_{1i}\right) $ and $\left( x_{2i},f_{2i}\right) $ can be used to
identify 2 of the 3 parameters of the function $f\left( x\right) $, whereas
the two points $\left( x_{1i},k_{1i}\right) $ and $\left(
x_{2i},k_{2i}\right) $ can be used to identify 2 of the 3 parameters of the
function $k\left( x\right) $. Since $f\left( x\right) $ and $k\left(
x\right) $ have one parameter in common, i.e., $\lambda _{i}$, this means
that the four points $\left( x_{1i},f_{1i}\right) $, $\left(
x_{1i},k_{1i}\right) $, $\left( x_{2i},f_{2i}\right) $ and $\left(
x_{2i},k_{2i}\right) $ together only identify 4 of the 5 skills parameters.
Assuming that these players are perfect optimizers, the optimality
conditions imply that $\left( x_{1i},x_{2i}\right) =\left( x_{1i}^{\ast
},x_{2i}^{\ast }\right) $ which provides 2 restrictions to the system of
equations. Hence, we have 6 parameters to be identified by 4 data points and
2 optimality restrictions.

\subsubsection{Identification\label{subSection:Identification}}

To show the identification of parameters $\left( \lambda _{i},a_{fi},\tau
_{fi},a_{ki},\tau _{ki},\beta _{i}\right) $ given data $\left(
x_{1i},x_{2i},f_{1i},f_{2i},k_{1i},k_{2i}\right) $, we first show
identification of parameters $\left( a_{fi},\tau _{fi},a_{ki},\tau
_{ki},\beta _{i}\right) $ given data $\left(
x_{1i},x_{2i},f_{1i},f_{2i},k_{1i},k_{2i}\right) $ conditional on $\lambda
_{i}$, and propose a bisection algorithm that searches for the curvature
parameter $\lambda $ given the data $\left( x_{1i},x_{2i}\right) $.

First, the slope and constant terms of the functions $f\left( x\right) $ and 
$k\left( x\right) $ are identified given $\lambda _{i}=\lambda $. Indeed, as
soon as the value of $\lambda $ is known, data $\left( x_{1i},x_{2i}\right) $
can be used to compute $\left( z_{1i},z_{2i}\right) =\left( x_{1i}^{\lambda
},x_{2i}^{\lambda }\right) $. It follows that, using the functional form for 
$f\left( x\right) $ and $k\left( x\right) $, from the points $\left(
z_{1i},f_{1i}\right) $ and $\left( z_{2i},f_{2i}\right) $, by simply
rearranging terms, one can deduce the slopes%
\begin{equation*}
\tau _{fi}=-\frac{z_{2i}}{f_{2i}}\frac{\Delta z_{i}}{\Delta f_{i}},\quad
\tau _{ki}=-\frac{z_{2i}}{k_{2i}}\frac{\Delta z_{i}}{\Delta k_{i}},
\end{equation*}%
where $\Delta l_{i}=\frac{l_{1i}-l_{2i}}{l_{2i}}$ $\forall l=z,f,k$, and
then the constants%
\begin{equation*}
a_{fi}=\tau _{fi}f_{1i}+z_{1i},\quad a_{ki}=\tau _{ki}k_{1i}+z_{1i}.
\end{equation*}

Second, one can uncover the preference parameter $\beta _{i}$ once the slope
and constant parameters of $f\left( x\right) $ and $k\left( x\right) $ are
known. This is done by using the previous results together with the equation
for the optimal second serve strategy to isolate%
\begin{equation*}
\beta _{i}=-\frac{\tau _{ki}}{\tau _{fi}}\frac{a_{fi}-z_{2i}\left( 1+\lambda
\right) }{a_{ki}-z_{2i}\left( 1+\lambda \right) }
\end{equation*}%
provided $z_{2i}\left( 1+\lambda \right) -a_{ki}\neq 0$. Using the
expressions previously obtained, for $x_{2i}>0$, $f_{2i}>0$ and $k_{2i}>0$,
it can be shown that in fact

\begin{equation*}
\beta _{i}=-\frac{f_{2i}}{k_{2i}}\frac{1+\lambda \frac{\Delta f_{i}}{\Delta
z_{i}}}{1+\lambda \frac{\Delta k_{i}}{\Delta z_{i}}},
\end{equation*}%
and it follows that one must have $\Delta z_{i}+\lambda \Delta k_{i}\neq 0$.

Note that after following these steps, the curvature parameter is the only
remaining unknown parameter. We can then use the last remaining condition,
i.e., the FOC for the optimal first serve strategy, to implicitly solve for
the curvature parameter and obtain%
\begin{equation*}
\lambda _{i}=\frac{a_{fi}+a_{ki}\beta _{i}\frac{\tau _{fi}}{\tau _{ki}}%
-x_{2i}\left( a_{fi}+a_{ki}\beta _{i}\frac{\tau _{fi}}{\tau _{ki}}-\left(
1+\beta _{i}\frac{\tau _{fi}}{\tau _{ki}}\right) z_{2i}\right) }{\left(
1+\beta _{i}\frac{\tau _{fi}}{\tau _{ki}}\right) z_{1i}}-1
\end{equation*}%
since $z_{1i}>0$.

Importantly, as shown in Online Appendix (\ref{ExpressionLambda}), this
expression simplifies considerably to read as%
\begin{equation*}
\lambda _{i}=\frac{z_{2i}}{z_{1i}}\left( 1+\lambda \left( 1-x_{2i}\right)
\right) -1.
\end{equation*}

This expression actually shows that the curvature parameter only depends on
data $x_{1i}$ and $x_{2i}$ through $z_{1i}$ and $z_{2i}$ and the initial
value of $\lambda $ selected. Of course, the value $\lambda _{i}$ herewith
obtained might be different than the $\lambda $ used to compute $z_{1i}$ and 
$z_{2i}$. However, as it turns out, this equation together with the
structure of the problem, provides a fixed-point, so that there exists, for
each player $i$, a $\lambda $ such that $\lambda _{i}=\lambda $. We shall
see below that a relatively simple algorithm allows us to recover this
value, for all the players in the data, and that this value is larger than
unity for all players.

\subsubsection{Computation}

We propose the following algorithm to compute the parameter $\lambda _{i}$
of each player $i$, from the associated data $\left( x_{1i},x_{2i}\right) $.

\begin{algorithm}
\label{Algorithm}Data $\left( x_{1i},x_{2i}\right) $, tolerance $\varepsilon 
$, Output $\lambda _{i}$.

\begin{enumerate}
\item Initializing step. Set $t=0$ and initialize the interval $\left[
l^{\left( t\right) },u^{\left( t\right) }\right] $ so that $\Lambda
_{i}\left( l^{\left( t\right) }\right) <l^{\left( t\right) }$ and $\Lambda
_{i}\left( u^{\left( t\right) }\right) >u^{\left( t\right) }$. In practice, $%
l^{\left( t\right) }$ should be close to $0$ and $u^{\left( t\right) }$
large enough.

\item Solving step. Set $\lambda =\frac{l^{(t)}+u^{\left( t\right) }}{2}$
and using data $\left( x_{1i},x_{2i}\right) $ compute%
\begin{equation*}
\lambda ^{new}=\left( \frac{x_{2i}}{x_{1i}}\right) ^{\lambda }\left(
1+\lambda \left( 1-x_{2i}\right) \right) -1.
\end{equation*}

\item Incrementation step. If $\lambda ^{new}<\lambda $, $l^{\left( t\right)
}=\lambda $, else $u^{\left( t\right) }=\lambda $. If $\left\vert l^{\left(
t\right) }-u^{\left( t\right) }\right\vert <\varepsilon $, stop, else, set $%
t\leftarrow t+1$ and go back to solving step.
\end{enumerate}
\end{algorithm}

Note that the iteration step of Algorithm (\ref{Algorithm}) defines a map $%
\Lambda _{i}:%
\mathbb{R}
_{0}^{+}\rightarrow 
\mathbb{R}
_{0}^{+}$, associating for each $\lambda ^{(t)}$ a new value $\lambda
^{(t+1)}$ for the curvature parameter given the data $\left(
x_{1i},x_{2i}\right) $. If it exists, a solution is therefore a fixed-point $%
\Lambda _{i}\left( \lambda _{i}^{\ast }\right) =\lambda _{i}^{\ast }$. The
following theorem shows that such a solution exists and is unique under mild
conditions for the structure of the data $\left( x_{1i},x_{2i}\right) $.

\begin{theorem}[Existence and uniqueness of the solution]
\label{thm:lambda_unique} For each player $i$, with data $\left(
x_{1i},x_{2i}\right) $ so that: (A1): $x_{1i},x_{2i}>0$ with $%
0<x_{1i}<x_{2i}<1$, (A2): $\ln \frac{x_{2i}}{x_{1i}}<x_{2i}$, (A3): $2\left(
x_{2i}-x_{1i}\right) <\left( x_{2i}\right) ^{2}$, there exists a unique $%
\lambda _{i}^{\ast }\geq 1$ such that $\lambda _{i}^{\ast }=\Lambda
_{i}(\lambda _{i}^{\ast })$.
\end{theorem}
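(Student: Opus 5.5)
The plan is to recast the fixed-point equation as a root-finding problem for a single scalar function and exploit its convexity. Write $r:=x_{2i}/x_{1i}$, $c:=\ln r$ and $s:=1-x_{2i}$. By (A1), $r>1$, so $c>0$, and $s\in(0,1)$. Define
\begin{equation*}
h(\lambda):=\Lambda_i(\lambda)-\lambda=r^{\lambda}\left(1+s\lambda\right)-1-\lambda ,\qquad \lambda\geq 0 .
\end{equation*}
A fixed point of $\Lambda_i$ is exactly a zero of $h$. I will show three things: $h$ is strictly convex, $h$ starts at $0$ with negative slope, and $h(1)<0$. Together these give exactly one positive zero, and it lies strictly above $1$.

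\textbf{Step 1 (shape of $h$).} Differentiating gives $h'(\lambda)=r^{\lambda}\left(c(1+s\lambda)+s\right)-1$ and $h''(\lambda)=r^{\lambda}\left(c^{2}(1+s\lambda)+2cs\right)$. Since $c>0$ and $s>0$, we have $h''>0$, so $h$ is strictly convex on $[0,\infty)$. Moreover $h(0)=0$ always; $\lambda=0$ is a trivial fixed point, which the restriction $\lambda^{\ast}\geq 1$ in the statement is meant to exclude. At the origin, $h'(0)=c+s-1=\ln(x_{2i}/x_{1i})-x_{2i}$, which is strictly negative by (A2). Finally, $h(\lambda)\to+\infty$ as $\lambda\to\infty$, because $r^{\lambda}$ grows exponentially while the subtracted term $1+\lambda$ is only linear.

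\textbf{Step 2 (existence and uniqueness of a positive root).} Since $h(0)=0$ and $h'(0)<0$, $h$ is negative just to the right of $0$. Since $h\to+\infty$, the intermediate value theorem yields some $\lambda^{\ast}>0$ with $h(\lambda^{\ast})=0$. For uniqueness, note that a strictly convex function has at most two zeros, and one of them is already $\lambda=0$. Hence $\lambda^{\ast}$ is the unique positive zero. Convexity also gives the sign pattern: $h<0$ on $(0,\lambda^{\ast})$ and $h>0$ on $(\lambda^{\ast},\infty)$. This is exactly the sign pattern that Algorithm~\ref{Algorithm} relies on, with $\Lambda_i(l)<l$ below the fixed point and $\Lambda_i(u)>u$ above it, so the bisection is well posed.

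\textbf{Step 3 (locating $\lambda^{\ast}$ above $1$).} This is the step where (A3) enters, and the key computation is short. Evaluating at $\lambda=1$ gives
\begin{equation*}
h(1)=\frac{x_{2i}}{x_{1i}}\left(2-x_{2i}\right)-2 .
\end{equation*}
Then $h(1)<0$ if and only if $2x_{2i}-x_{2i}^{2}<2x_{1i}$, which is equivalent to $2(x_{2i}-x_{1i})<x_{2i}^{2}$, i.e.\ precisely (A3). By the sign pattern from Step 2, $h(1)<0$ forces $1\in(0,\lambda^{\ast})$, so $\lambda^{\ast}>1$. Uniqueness among $\lambda\geq 1$ then follows from uniqueness among $\lambda>0$.

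The only delicate point is recognising the role of the trivial root. Convexity alone would allow two zeros, and the argument needs (A2) to ensure that $\lambda=0$ is the left zero rather than the only one. The remaining work consists of routine sign checks.
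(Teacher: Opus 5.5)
Your proof is correct and follows the paper's argument essentially step for step. The paper also studies $g_i(\lambda)=\Lambda_i(\lambda)-\lambda$ and uses $g_i(0^+)=0$, $g_i'(0^+)=\ln(x_{2i}/x_{1i})-x_{2i}<0$ from (A2), $g_i\to+\infty$, strict convexity for uniqueness, and $g_i(1)<0$ from (A3). Two remarks on where you differ. First, your explicit handling of the trivial zero at $\lambda=0$ (a strictly convex function can have two zeros) is more careful than the paper's ``at most one root'' phrasing. Second, by convexity $h(1)\geq h(0)+h'(0)=h'(0)$, so (A3) alone already forces $h'(0)<0$; (A2) is therefore redundant rather than needed, contrary to your closing remark.
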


\begin{proof}
See Appendix (\ref{App:Proof})
\end{proof}

We illustrate the theorem using the example of Roger Federer. Figure (\ref%
{fig_FixedPointRF}) plots the map $\Lambda _{i}(\lambda )$ for Roger
Federer. We see that the properties of the map exploited in the proof, are
so that the map crosses the 45 degree line only ones at $\lambda ^{\ast }$,
the point $\left( \lambda ^{\ast },\Lambda _{i}(\lambda ^{\ast })\right) $
forming the unique (non zero) fixed-point of the map. Suppose that the
algorithm starts with lower bound $l^{\left( 0\right) }=1$ and upper bound $%
u^{\left( 0\right) }=4$. The mid point value is 2.5. At $\lambda =2.5$, the
value of the map is below the 45 degree line and the lower bound is updated
to 2.5. The termination condition is not satisfied (at conventional levels
of precision) and the algortihm goes back to the iteration step with lower
bound 2.5 and upper bound 4. The mid point is now 3.75 so that the value of
the map is higher than the 45 degree line. The upper bound is updated to
3.75, etc.. The algorithm converges very fast to the value $\lambda ^{\ast
}=2.81$.

\begin{figure}[tbph]
\caption{Fixed-point of the map $\Lambda (\protect\lambda )$: Roger Federer.}
\label{fig_FixedPointRF}\centering\includegraphics[width=0.6%
\textwidth]{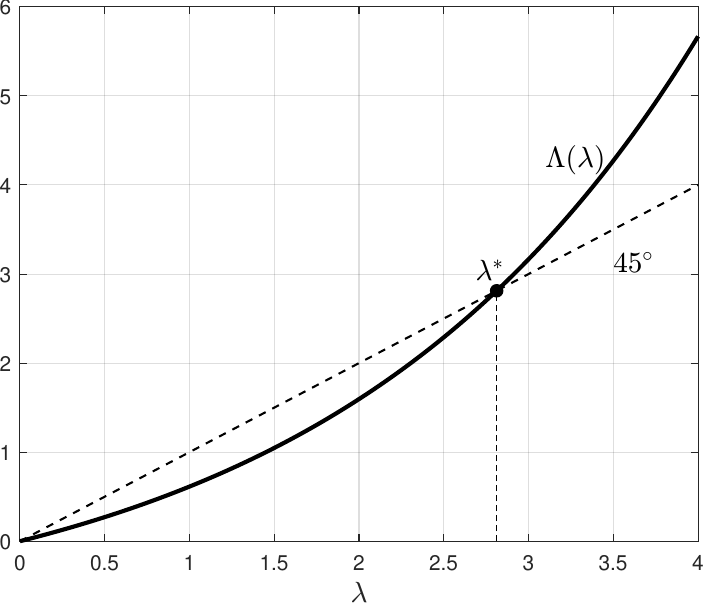}
\end{figure}

Once the curvature parameter $\lambda _{i}$ is obtained from the algorithm,
one can compute the remaining parameters using the steps outlined in the
Section (\ref{subSection:Identification}).

\section{Data}

We use the Match Charting Project by Jeff Sackmann,\footnote{%
https://github.com/JeffSackmann/tennis\_MatchChartingProject} which collects
information about professional tennis matches, encoded by dozens of
contributors. In particular, we use the point-by-point data for men's
matches with information on more than 1,200,000 rallies of over 7,100
matches by the end of January 2026.

The unit of observation is a rally in a match. For each rally, we know who
is serving, whether it is a rally on first or second serve, and the length
of the rally, i.e., the number of shots that were recorded ``in" the court.
A rally of length 1 is necessarily ending with either an ace or an
unreturned serve. The server wins all rallies of odd length, whereas the
returner wins rallies of even length. For our analysis, for each player, we
need to observe a large number of rallies on their own serve. For this
reason, we only select those players with at least 20 matches charted in the
data. 151 players satisfy this criterion.

\begin{table}
\caption{First and second serves statistics for selected players and aggregate sample.\label{SmallDesc}}
\footnotesize
\resizebox{\textwidth}{!}{ 
\begin{tabular}{l|c|cccc|cccc}
\toprule
 & & \multicolumn{4}{c}{First serve} & \multicolumn{4}{c}{Second serve} \\
\cline{3-6} \cline{7-10}
\hline
Player & Nb pt. & $\%$ in &  $\%$ Multi & $\%$ One won & $\%$ Multi won & $\%$ in & $\%$ Multi & $\%$ One won & $\%$ Multi won  \\
 & ($\times 1000$) & $x_{1}$ &  & $f(x_{1})$ & $k(x_{1})$ & $x_{2}$ & & $f(x_{2})$ & $k(x_{2})$  \\
\hline
Novak Djokovic & 44.8 & 64.9 & 65.9 & 34.1 & 39.2 & 92.0 & 83.2 & 16.8 & 42.0 \\
Rafael Nadal & 34.0 & 68.6 & 72.4 & 27.6 & 42.9 & 92.6 & 83.9 & 16.1 & 43.4 \\
Roger Federer & 58.0 & 61.8 & 58.5 & 41.5 & 35.4 & 94.3 & 80.8 & 19.2 & 39.0 \\
Pete Sampras & 20.3 & 56.5 & 46.4 & 53.6 & 26.7 & 89.7 & 78.5 & 21.5 & 34.2 \\
... &  &  &  &  &  &  &  &  &   \\
Boris Becker & 15.7 & 55.0 & 52.3 & 47.7 & 29.5 & 88.2 & 80.5 & 19.5 & 32.9 \\
... &  &  &  &  &  &  &  &  &   \\
Carlos Alcaraz & 16.5 & 64.6 & 68.0 & 32.0 & 40.0 & 90.9 & 82.2 & 17.8 & 42.0 \\
Jannik Sinner & 20.1 & 60.3 & 61.6 & 38.4 & 37.4 & 93.3 & 83.0 & 17.0 & 42.2 \\
... &  &  &  &  &  &  &  &  &   \\
Ivo Karlovic & 3.6 & 65.4 & 43.0 & 57.0 & 23.8 & 86.3 & 69.0 & 31.0 & 28.8 \\
John Isner & 7.9 & 69.5 & 45.9 & 54.1 & 23.5 & 91.9 & 73.6 & 26.4 & 32.1 \\
Reilly Opelka & 4.4 & 65.6 & 45.0 & 55.0 & 22.4 & 91.9 & 74.0 & 26.0 & 33.8 \\
... &  &  &  &  &  &  &  &  &   \\
David Ferrer & 6.0 & 61.7 & 76.0 & 24.0 & 40.2 & 91.3 & 86.8 & 13.2 & 38.9 \\
Diego Schwartzman & 3.9 & 63.9 & 78.4 & 21.6 & 42.0 & 89.6 & 85.7 & 14.3 & 41.2 \\
... &  &  &  &  &  &  &  &  &   \\
\hline
Mean & 6.4 & 61.1 & 63.6 & 36.4 & 35.2 & 90.9 & 83.1 & 16.9 & 37.4 \\
Std & 7.5 & 4.1 & 8.6 & 8.6 & 4.8 & 3.1 & 3.9 & 3.9 & 3.2 \\
Min & 1.3 & 51.5 & 39.4 & 19.5 & 21.3 & 74.0 & 59.3 & 10.2 & 27.1 \\
Median & 3.7 & 61.4 & 63.7 & 36.3 & 35.5 & 91.6 & 83.7 & 16.3 & 37.6 \\
Max & 58.0 & 71.3 & 80.5 & 60.6 & 43.3 & 96.8 & 89.8 & 40.7 & 43.7 \\
\bottomrule
\end{tabular}
}
\end{table}

Table~(\ref{SmallDesc}) presents service statistics for selected players,
along with sample descriptive statistics. The selected players include the
four players with the most Grand Slam titles\footnote{%
The four most prestigious tournaments of the year are the Australian Open,
Roland Garros, Wimbledon, and the US Open. These tournaments also have the
most generous prize money distributions.}, i.e., the Greatest Of All Times
(GOATs): Novak Djokovic, Rafael Nadal, Roger Federer, and Pete Sampras;
Boris Becker; the best two players of the current generation, Carlos Alcaraz
and Jannik Sinner; three players known for their big serves, John Isner,
Reilly Opelka, and Ivo Karlovic; and two players known for their baseline
game, David Ferrer and Diego Schwartzman.

The data include between $1{,}300$ and $58{,}000$ rallies per player. On
average, the first serve percentage is about $61\%$, while the second serve
percentage is roughly $91\%$. Players win approximately $72\%$ of points
played on their first serve, with an equal distribution between one-shot and
multi-shot rallies, whereas $63\%$ of points on the first serve are
multi-shot rallies. On the second serve, being more conservative, players
win fewer one-shot points ($17\%$) and about $37\%$ of multi-shot rallies.
The percentage of multi-shot rallies on second serves is $83\%$, roughly $20$
percentage points higher than on the first serve.

There are also notable disparities across players. Big servers, such as John
Isner, win about twice as many one-shot points as multi-shot points on their
first serve, whereas the reverse is nearly true for baseline specialists
like David Ferrer. Comparing newer top players to earlier ones, Carlos
Alcaraz's statistics resemble those of Rafael Nadal, and Jannik Sinner's
resemble those of Roger Federer. Finally, the percentage of points won on
one-shot rallies ranges from $10\%$ to $41\%$ on the second serve and from $%
20\%$ to $61\%$ on the first serve, while for multi-shot rallies, the
corresponding ranges are $27\%$ to $44\%$ on the second serve and $21\%$ to $%
43\%$ on the first serve.

\subsection{Maximum likelihood estimation of $\left(
x_{1i},x_{2i},f_{1i},f_{2i},k_{1i},k_{2i}\right) $}

Let $\theta _{i}=\left( x_{1i},x_{2i},f_{1i},f_{2i},k_{1i},k_{2i}\right) $
be the unknown probabilities of interest. From the data, for each player $%
i=1,...,N$, we observe $N_{i}$ points played on his serve. We can compute
the number of points played on the first serve, i.e., $n_{x_{1}i}$ and on
the second serve $n_{x_{2}i}=N_{i}-n_{x_{1}i}$ and since the data identifies
for each point, the length of the rally, i.e., the number of shots played
into the court during the point, we can also compute the number of rallies
of length 1 on first and second serves, i.e., $n_{f_{1}i}$ and $n_{f_{2}i}$
and the number of other rallies of odd length on the first and second serves
as well, i.e., $n_{k_{1}i}$ and $n_{k_{2}i}$.

Note that \cite{klaassen01a} showed that even though points in Tennis are
not i.i.d., the deviations from the i.i.d. hypothesis are small. As a
result, the i.i.d. hypothesis can still be used as a good approximation when
aggregating over a large number of points as we do in this paper, where we
use \textquotedblleft averages\textquotedblright\ over points by players. We
therefore maintain the assumption that points are i.i.d. so that each of the
aforementioned variables follows a binomial distribution.\footnote{%
See Online Appendix (\ref{Tree}) for a tree representation of the statistics
for each player.} For instance, on the first serve, one has:

\begin{enumerate}
\item the number of first serves in, $n_{x_{1}i}$, represents the number of
successes resulting from $N_{i}$ independent draws with unknown probability
of success $x_{1i}$, meaning that%
\begin{equation*}
n_{x_{1}i}\sim B\left( N_{i},x_{1i}\right) ,
\end{equation*}

\item the number of rallies of length 1 on first serves, $n_{f_{1}i}$,
represents the number of successes resulting from $n_{x_{1}i}$ independent
draws with unknown probability of success $f_{1i}$, meaning that%
\begin{equation*}
n_{f_{1}i}\sim B\left( n_{x_{1}i},f_{1i}\right) ,
\end{equation*}

\item the number of rallies won with multiple shots on first serve, $%
n_{k_{1}i}$, represents the number of successes resulting from $%
n_{x_{1}i}-n_{f_{1}i}$ independent draws with unknown probability of success 
$k_{1i}$, meaning that\footnote{%
Note that, conditional on the serve being in, a multi-shot rally only occurs
if a one-shot rally fails, giving tennis rallies data, as structured in this
paper, a very natural representation in terms of (conditional, sequential)
binomial distributions.}%
\begin{equation*}
n_{k_{1}i}\sim B\left( n_{x_{1}i}-n_{f_{1}i},k_{1i}\right) .
\end{equation*}
\end{enumerate}

It implies, for instance, that the log-likelihood of observing data $\left(
n_{f_{1}i},n_{x_{1}i}\right) $ given probability $f_{1i}$ is%
\begin{equation*}
l_{f_{1}i}\left( n_{f_{1}i},n_{x_{1}i}|\theta _{i}\right) =\log \binom{%
n_{x_{1}i}}{n_{f_{1}i}}+n_{f_{1}i}\log f_{1i}+\left(
n_{x_{1}i}-n_{f_{1}i}\right) \log \left( 1-f_{1i}\right) .
\end{equation*}

Applying the same logic to all data and collecting the associated terms of
the log-likelihood, obtains%
\begin{eqnarray*}
l_{.i}\left(
N_{i},n_{x_{1}i},n_{x_{2}i},n_{f_{1}i},n_{f_{2}i},n_{k_{1}i},n_{k_{2}i}|%
\theta _{i}\right) &=&l_{x_{1}i}\left( n_{x_{1}i},N_{i}|\theta _{i}\right)
+l_{x_{2}i}\left( n_{x_{2}i},N_{i}-x_{1i}|\theta _{i}\right) \\
&&+l_{f_{1}i}\left( n_{f_{1}i},n_{x_{1}i}|\theta _{i}\right)
+l_{f_{2}i}\left( n_{f_{2}i},n_{x_{2}i}|\theta _{i}\right) \\
&&+l_{k_{1}i}\left( n_{k_{1}i},n_{x_{1}i}-n_{f_{1}i}|\theta _{i}\right)
+l_{k_{2}i}\left( n_{k_{2}i},n_{x_{2}i}-n_{f_{2}i}|\theta _{i}\right) .
\end{eqnarray*}

The first order condition of the maximum likelihood with respect to for
instance $f_{1i}$ requires that%
\begin{eqnarray*}
\frac{\partial l_{.i}\left(
N_{i},n_{x_{1}i},n_{x_{2}i},n_{f_{1}i},n_{f_{2}i},n_{k_{1}i},n_{k_{2}i}|%
\theta _{i}\right) }{\partial f_{1i}} &=&0\Leftrightarrow \\
\frac{n_{f_{1}i}}{f_{1i}}-\frac{n_{x_{1}i}-n_{f_{1}i}}{1-f_{1i}}
&=&0\Leftrightarrow f_{1i}=\frac{n_{f_{1}i}}{n_{x_{1}i}}.
\end{eqnarray*}

Hence, the frequency $\frac{n_{f_{1}i}}{n_{x_{1}i}}$ is the maximum
likelihood estimate of probability $f_{1i}$. A similar argument holds for
all other probabilities. Let $\hat{f}_{1i}=\frac{n_{f_{1}i}}{n_{x_{1}i}}$
denote the empirical of frequency of $f_{1i}$ with a similar notation for
the other terms. Then, by maximum likelihood, the empirical frequencies $%
\left( \hat{x}_{1i},\hat{x}_{2i},\hat{f}_{1i},\hat{f}_{2i},\hat{k}_{1i},\hat{%
k}_{2i}\right) $ are the estimates of $\theta _{i}=\left(
x_{1i},x_{2i},f_{1i},f_{2i},k_{1i},k_{2i}\right) $.

\subsection{Checking the data}

Theorem (\ref{thm:lambda_unique}) applies on data satisfying conditions
(A1)-(A3), which garantees the convergence of Algorithm (\ref{Algorithm}) to
a solution $\lambda $ larger than unity. As indicated in Table (\ref%
{SmallDesc}), for all professional tennis players in our data, the first
serve percentage ($x_{1}$) ranges between $0.51$ and $0.72$ while the second
serve percentage ranges between\footnote{%
In fact, Maxime Cressy and Alexandre Bublik are the only two players with
second serve percentages lower than $84\%$.} $0.73$ and $0.97$. This
trivially shows that condition (A1) is always met. Since $x_{1i}>0.5$, one
has%
\begin{equation*}
\max_{i}\ln \left( \frac{x_{2i}}{x_{1i}}\right) <\ln 2\simeq
0.69<\min_{i}x_{2i}=0.73,
\end{equation*}%
which also garantees that Condition (A2) is satisfied for the ranges of
values for $x_{1}$ and $x_{2}$ observed in the data. It follows that, for
all players in the data, there exists a unique solution $\lambda _{i}^{\ast
} $. Last, note that Condition (A3) rewrites as $x_{2i}\left(
2-x_{2i}\right) <2x_{1i}$ and in the data,%
\begin{equation*}
\max_{i}x_{2i}\left( 2-x_{2i}\right) <1<\min_{i}2x_{1i}=1.02
\end{equation*}%
so that this condition is also satisfied for all players in the data. Hence,
the unique solution $\lambda _{i}^{\ast }$ is strictly greater than unity
for all $i=1,..,N$.

\section{Results}

\begin{table}[H]
\centering
\caption{Parameter estimates with 95\% bootstrap confidence intervals for selected players\label{SmallEsti_CI}}
\resizebox{\textwidth}{!}{%
\begin{tabular}{l|c|c|cc|cc}
\toprule
Player & Salience weight & Curvature & Slope $f$ & Const. $f$ & Slope $k$ & Const. $k$ \\
 & $\delta$ & $\lambda$ & $\tau_f$ & $a_f$ & $\tau_k$ & $a_k$ \\
\midrule
Novak Djokovic & \shortstack[l]{$0.27$\textsuperscript{***} \\ \footnotesize [0.15, 0.41]} & \shortstack[l]{$3.67$\textsuperscript{***} \\ \footnotesize [3.56, 3.81]} & \shortstack[l]{$3.09$\textsuperscript{***} \\ \footnotesize [2.93, 3.26]} & \shortstack[l]{$1.26$\textsuperscript{***} \\ \footnotesize [1.21, 1.30]} & \shortstack[l]{$-19.58$\textsuperscript{***} \\ \footnotesize [-31.23, -14.32]} & \shortstack[l]{$-7.48$\textsuperscript{***} \\ \footnotesize [-12.11, -5.36]} \\
Rafael Nadal & \shortstack[l]{$0.21$\textsuperscript{***} \\ \footnotesize [0.03, 0.47]} & \shortstack[l]{$4.89$\textsuperscript{***} \\ \footnotesize [4.70, 5.10]} & \shortstack[l]{$4.62$\textsuperscript{***} \\ \footnotesize [4.23, 5.07]} & \shortstack[l]{$1.43$\textsuperscript{***} \\ \footnotesize [1.34, 1.54]} & \shortstack[l]{$-105.09$\textsuperscript{} \\ \footnotesize [-1079.64, 953.52]} & \shortstack[l]{$-44.96$\textsuperscript{} \\ \footnotesize [-463.38, 410.43]} \\
Roger Federer & \shortstack[l]{$0.32$\textsuperscript{***} \\ \footnotesize [0.23, 0.42]} & \shortstack[l]{$2.81$\textsuperscript{***} \\ \footnotesize [2.73, 2.90]} & \shortstack[l]{$2.64$\textsuperscript{***} \\ \footnotesize [2.55, 2.74]} & \shortstack[l]{$1.35$\textsuperscript{***} \\ \footnotesize [1.32, 1.39]} & \shortstack[l]{$-16.27$\textsuperscript{***} \\ \footnotesize [-21.06, -13.41]} & \shortstack[l]{$-5.50$\textsuperscript{***} \\ \footnotesize [-7.26, -4.44]} \\
Pete Sampras & \shortstack[l]{$0.42$\textsuperscript{***} \\ \footnotesize [0.31, 0.56]} & \shortstack[l]{$1.93$\textsuperscript{***} \\ \footnotesize [1.82, 2.05]} & \shortstack[l]{$1.49$\textsuperscript{***} \\ \footnotesize [1.43, 1.56]} & \shortstack[l]{$1.13$\textsuperscript{***} \\ \footnotesize [1.10, 1.16]} & \shortstack[l]{$-6.41$\textsuperscript{***} \\ \footnotesize [-7.86, -5.45]} & \shortstack[l]{$-1.38$\textsuperscript{***} \\ \footnotesize [-1.80, -1.10]} \\
Boris Becker & \shortstack[l]{$0.56$\textsuperscript{***} \\ \footnotesize [0.39, 0.79]} & \shortstack[l]{$1.73$\textsuperscript{***} \\ \footnotesize [1.62, 1.85]} & \shortstack[l]{$1.59$\textsuperscript{***} \\ \footnotesize [1.51, 1.68]} & \shortstack[l]{$1.12$\textsuperscript{***} \\ \footnotesize [1.08, 1.15]} & \shortstack[l]{$-13.23$\textsuperscript{***} \\ \footnotesize [-24.26, -9.18]} & \shortstack[l]{$-3.55$\textsuperscript{***} \\ \footnotesize [-6.93, -2.30]} \\
Carlos Alcaraz & \shortstack[l]{$0.05$\textsuperscript{} \\ \footnotesize [-0.12, 0.26]} & \shortstack[l]{$3.66$\textsuperscript{***} \\ \footnotesize [3.45, 3.88]} & \shortstack[l]{$3.56$\textsuperscript{***} \\ \footnotesize [3.22, 3.98]} & \shortstack[l]{$1.34$\textsuperscript{***} \\ \footnotesize [1.24, 1.46]} & \shortstack[l]{$-25.78$\textsuperscript{***} \\ \footnotesize [-115.53, -13.94]} & \shortstack[l]{$-10.11$\textsuperscript{***} \\ \footnotesize [-46.84, -5.27]} \\
Jannik Sinner & \shortstack[l]{$0.06$\textsuperscript{} \\ \footnotesize [-0.04, 0.19]} & \shortstack[l]{$2.52$\textsuperscript{***} \\ \footnotesize [2.39, 2.66]} & \shortstack[l]{$2.61$\textsuperscript{***} \\ \footnotesize [2.46, 2.77]} & \shortstack[l]{$1.28$\textsuperscript{***} \\ \footnotesize [1.24, 1.33]} & \shortstack[l]{$-11.65$\textsuperscript{***} \\ \footnotesize [-16.24, -9.01]} & \shortstack[l]{$-4.08$\textsuperscript{***} \\ \footnotesize [-5.87, -3.04]} \\
Ivo Karlovic & \shortstack[l]{$1.19$\textsuperscript{***} \\ \footnotesize [0.63, 2.36]} & \shortstack[l]{$4.37$\textsuperscript{***} \\ \footnotesize [3.78, 5.11]} & \shortstack[l]{$1.42$\textsuperscript{***} \\ \footnotesize [1.13, 1.74]} & \shortstack[l]{$0.96$\textsuperscript{***} \\ \footnotesize [0.78, 1.15]} & \shortstack[l]{$-7.41$\textsuperscript{***} \\ \footnotesize [-20.52, -4.43]} & \shortstack[l]{$-1.61$\textsuperscript{***} \\ \footnotesize [-4.88, -0.88]} \\
John Isner & \shortstack[l]{$0.79$\textsuperscript{***} \\ \footnotesize [0.52, 1.16]} & \shortstack[l]{$5.33$\textsuperscript{***} \\ \footnotesize [4.89, 5.83]} & \shortstack[l]{$1.78$\textsuperscript{***} \\ \footnotesize [1.59, 2.01]} & \shortstack[l]{$1.11$\textsuperscript{***} \\ \footnotesize [1.00, 1.23]} & \shortstack[l]{$-5.68$\textsuperscript{***} \\ \footnotesize [-7.92, -4.45]} & \shortstack[l]{$-1.19$\textsuperscript{***} \\ \footnotesize [-1.75, -0.88]} \\
Reilly Opelka & \shortstack[l]{$0.36$\textsuperscript{***} \\ \footnotesize [0.14, 0.65]} & \shortstack[l]{$3.89$\textsuperscript{***} \\ \footnotesize [3.47, 4.33]} & \shortstack[l]{$1.81$\textsuperscript{***} \\ \footnotesize [1.60, 2.06]} & \shortstack[l]{$1.19$\textsuperscript{***} \\ \footnotesize [1.07, 1.32]} & \shortstack[l]{$-4.63$\textsuperscript{***} \\ \footnotesize [-6.32, -3.61]} & \shortstack[l]{$-0.84$\textsuperscript{***} \\ \footnotesize [-1.29, -0.58]} \\
David Ferrer & \shortstack[l]{$-0.01$\textsuperscript{} \\ \footnotesize [-0.32, 0.61]} & \shortstack[l]{$2.90$\textsuperscript{***} \\ \footnotesize [2.62, 3.24]} & \shortstack[l]{$4.82$\textsuperscript{***} \\ \footnotesize [4.04, 5.83]} & \shortstack[l]{$1.41$\textsuperscript{***} \\ \footnotesize [1.24, 1.62]} & \shortstack[l]{$40.72$\textsuperscript{} \\ \footnotesize [-431.32, 261.69]} & \shortstack[l]{$16.62$\textsuperscript{} \\ \footnotesize [-172.34, 103.65]} \\
Diego Schwartzman & \shortstack[l]{$-0.38$\textsuperscript{} \\ \footnotesize [-0.75, 0.30]} & \shortstack[l]{$3.57$\textsuperscript{***} \\ \footnotesize [3.15, 4.07]} & \shortstack[l]{$6.45$\textsuperscript{***} \\ \footnotesize [4.70, 9.97]} & \shortstack[l]{$1.60$\textsuperscript{***} \\ \footnotesize [1.26, 2.24]} & \shortstack[l]{$62.62$\textsuperscript{} \\ \footnotesize [-427.40, 425.94]} & \shortstack[l]{$26.49$\textsuperscript{} \\ \footnotesize [-174.09, 177.09]} \\
\hline
Mean & 0.34 & 2.94 & 2.96 & 1.24 & 3.96 & 2.52 \\
Std & 0.70 & 0.98 & 1.36 & 0.19 & 96.83 & 38.20 \\
Min & -3.86 & 1.28 & 1.04 & 0.62 & -352.94 & -129.32 \\
Median & 0.33 & 2.86 & 2.62 & 1.22 & -10.22 & -3.17 \\
Max & 2.79 & 6.27 & 7.99 & 1.69 & 900.62 & 359.60 \\
\bottomrule
\end{tabular}}

\vspace{0.5em}
{\footnotesize Notes: 95\% confidence intervals obtained by bootstrap with 300 replications.$^{***}$ indicates significance at the 5\% level: for $\lambda$, CI excludes 1; for other parameters, CI excludes 0.}
\end{table}

Table (\ref{SmallEsti_CI}) presents the estimates of the parameters for our
selected players as well as the summary statistics for all players (bottom
rows). The mean and median salience weights in our sample are both $1/3$,
indicating a preference for winning multi-shot rallies. In fact, $79\%$
(119/151) of the players have a positive salience weight, and for about $%
64\% $ (76/119) of them, that coefficient is statistically significant at $%
5\%$. In contrast, there are 32 players with negative salience weight, and
only 3 of them for whom that estimate is statistically significant at $5\%$.
The first four listed players in the table are the GOATs. We see that for
all of them, the salience weight is positive and significant. The salience
weight is also positive and significant for big servers like John Isner,
Ralley Opelka, and Ivo Karlovic. Interestingly, the salience weight for the
new top players, Carlos Alcaraz and Jannik Sinner, is positive (0.05) but
not significant. In particular, although the statistics in Table (\ref%
{SmallDesc}) for Jannik Sinner were quite close to those of Roger Federer
and those of Carlos Alcaraz to those of Rafael Nadal, their salience weights
are quite different. This reflects the fact that the identification of the
salience weight shown in Section (\ref{subSection:Identification}) is
non-trivial.

The curvature parameter $\lambda$ is on average 2.9 and ranges from 1.3 to
6.3. It is significantly different from unity for all players in the data,
so that the conditional probability of winning points of all players abbeys
the law of diminishing marginal returns. The slope and constant parameters
of its constituents ($f(x)$ and $k(x)$) are also mostly significantly
different from 0. They can be best interpreted using a graphical
representation, as in Figure (\ref{fig_skills_goats}). This figure shows the
skills parameters of the four GOATs through the plot of $f$ (top panel) and $%
k$ (bottom panel) of these players. The figure clearly indicates that $f$ is
decreasing and concave for these players (true for all players) while $k$ is
increasing and convex (for a few players, i.e., David Ferrer and Diego
Schwartzman, for instance, $k$ is slightly decreasing and concave).
Interestingly, we clearly see that Pete Sampras has the most efficient serve
(measured as the probability to win one-shot rallies) for serve percentages
between 0 and about $80\%$ where Roger Federer's serve becomes more
efficient. However, in terms of winning-point percentages on multi-shot
rallies, Figure (\ref{fig_skills_goats}) shows that Rafael Nadal is the
dominating player at all serve percentages, with a relatively flat profile,
and the profile of Pete Sampras lies at the other extreme.

\begin{figure}[tbph]
\caption{Conditional probabilities of winning one-shot (top) and multi-shot
(bottom) rallies of the four GOATs.}
\label{fig_skills_goats}\centering
\includegraphics[width=0.8%
\textwidth]{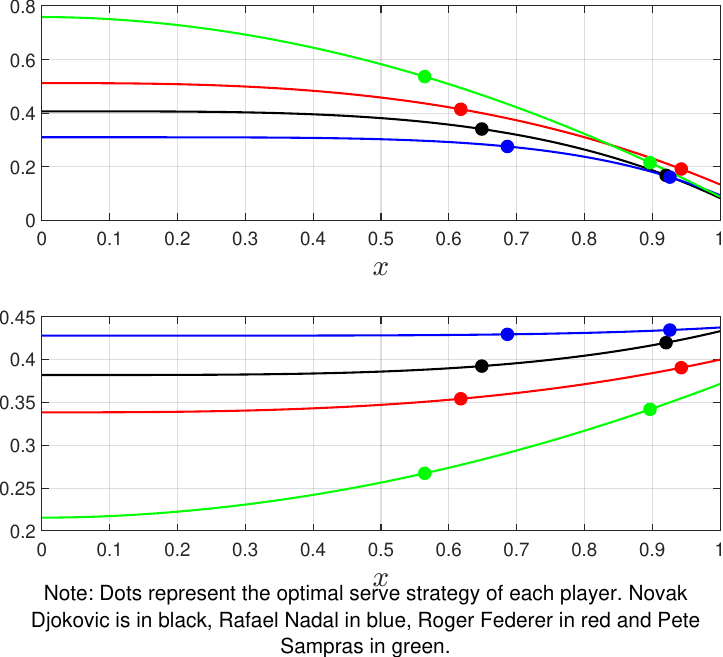}
\end{figure}

Having estimated the salience weight for each player, we can now ask the
natural question: what would have been the optimal service strategy if
players were mere outcome-maximizers, that is, if their salience weight were
zero and they paid no attention to process utility? In other words, this
exercise quantifies how much players are willing to sacrifice in
point-winning probability to enjoy a more appealing style of play.

The model outlined in this paper allows us to answer this question through a
simple counterfactual exercise. Using the estimated parameters for each
player, we set the salience weight to zero and compute the optimal serve
strategy, along with the corresponding probabilities of winning points on
first and second serves. Table (\ref{TabCounter}) presents the differences
in optimal serve strategies between the observed and counterfactual
scenarios ($\delta=0$) in columns~3 and~4. As expected, the change in serve
percentage is negative for almost all players, meaning their strategies
would be more aggressive under the counterfactual, which is especially true
on the second serve. Setting the salience weight to zero increases the
point-winning probability on a player's own serve by approximately
0.39\,\%-pt. While this change appears small, the cumulative nature of
tennis scoring amplifies its impact.

To illustrate, we compute the probability of winning a set and a match
(best-of-five). For this exercise, we assume that the opponent's probability
of winning their own serve equals the player's observed probability under
the estimated $\delta $.\footnote{%
Hence, the player has a $50\%$ chance of winning a set and a match under the
estimated $\delta $.} Columns~7 and~8 show the corresponding changes in set-
and match-winning probabilities. For instance, although a player increases
the probability of winning a point on their serve by only 0.39\thinspace
\%-pt on average, their probability of winning a best-of-five match
increases by 2.42\thinspace \%-pt.

Finally, we can estimate the probability of reaching each round of a Grand
Slam and the expected prize money under both observed and counterfactual
strategies. Using the 2025 US Open prize distribution, Table (\ref%
{TabCounter}) shows that, on average, players forego approximately $\$33{,}%
000$, i.e., $13.5\%$, in expected price money per Grand Slam by optimizing
for both outcome and process utility, with $50\%$ of the players forgoing
more than $\$12{,}880$, i.e., $5.5\%$. When interpreting these results, one
should also bear in mind that there are four Grand Slam tournaments per
year, and while smaller tournaments offer lower prize money, this amount
accumulates across tournaments over a professional career.

\begin{table}
\caption{Counterfactual optimal service strategy when $\delta=0$, by player and aggregate effects.\label{TabCounter}}
\footnotesize
\resizebox{\textwidth}{!}{ 
\begin{tabular}{l|c|cc|cccc|c}
\toprule
Player & Salience weight & $\Delta \%-st$ & $\Delta \%-nd$ & $\Delta \%-pt$ & $\Delta \%-gm$ & $\Delta \%-set$ & $\Delta \%-mat$ & $\Delta$ Prize $(\times\$1000)$ \\
 & $\delta$  & in        &   in      & won       & won         & won        & won        &  won   \\
\hline
Novak Djokovic & 0.27 & -1.06 & -4.31 & 0.12 & 0.18 & 0.39 & 0.73 & 8.07  \\
Rafael Nadal & 0.21 & -0.63 & -2.36 & 0.04 & 0.06 & 0.13 & 0.24 & 2.56  \\
Roger Federer & 0.32 & -1.07 & -6.01 & 0.20 & 0.27 & 0.64 & 1.20 & 13.51  \\
Pete Sampras & 0.42 & -2.25 & -9.94 & 0.57 & 0.82 & 1.82 & 3.41 & 42.01  \\
... &  &  &  &  &  &  &  &  \\
Boris Becker & 0.56 & -2.53 & -10.79 & 0.64 & 1.14 & 2.13 & 3.99 & 50.26  \\
... &  &  &  &  &  &  &  &  \\
Carlos Alcaraz & 0.05 & -0.21 & -0.89 & 0.00 & 0.01 & 0.02 & 0.03 & 0.31  \\
Jannik Sinner & 0.06 & -0.23 & -1.61 & 0.01 & 0.02 & 0.04 & 0.07 & 0.80  \\
... &  &  &  &  &  &  &  &   \\
Ivo Karlovic & 1.19 & -4.69 & -10.79 & 1.53 & 1.61 & 4.64 & 8.66 & 132.87  \\
John Isner & 0.79 & -3.87 & -10.18 & 1.20 & 1.33 & 3.67 & 6.86 & 97.63  \\
Reilly Opelka & 0.36 & -2.24 & -7.71 & 0.48 & 0.57 & 1.48 & 2.78 & 33.36  \\
... &  &  &  &  &  &  &  &    \\
David Ferrer & -0.01 & 0.04 & 0.21 & 0.00 & 0.00 & 0.00 & 0.00 & 0.01  \\
Diego Schwartzman & -0.38 & 1.14 & 6.25 & 0.17 & 0.38 & 0.60 & 1.12 & 12.56  \\
\hline
Mean & 0.34 & -1.46 & -4.94 & 0.39 & 0.66 & 1.29 & 2.41 & 33.22  \\
Std & 0.70 & 1.52 & 5.48 & 0.47 & 0.77 & 1.53 & 2.85 & 46.76  \\
Min & -3.86 & -6.33 & -16.26 & 0.00 & 0.00 & 0.00 & 0.00 & 0.00  \\
Median & 0.33 & -1.17 & -5.54 & 0.18 & 0.33 & 0.61 & 1.15 & 12.88  \\
Max & 2.79 & 1.40 & 11.63 & 2.44 & 4.61 & 8.15 & 15.02 & 304.06  \\
\bottomrule
\end{tabular}
}
\begin{minipage}{\textwidth}
\footnotesize
Notes: $\Delta$ means the difference in the variable concerned under the counterfactual $\delta=0$ and the observed situation. To derive $\Delta$ Prize, we compute the probability that the player reaches each of round of the US Open, first, when he has 50$\%$ chances of winning each match (best-of-five), and then when his chances are 50+$\Delta \%-mat$.
 We use the prize money distribution of the 2025 US open for this computation.\end{minipage}
\end{table}


\section{Robustness checks}

Our parametric results rely crucially on Condition (\ref%
{Assumption:Elasticity}), which has two components: first, the elasticity of
the marginal probability of winning one-shot points equals that of winning
multi-shot points, and second, this elasticity is constant. We conduct two
robustness checks by relaxing these assumptions one at a time.

In the first robustness check, we allow the elasticity of the marginal
conditional probabilities to vary linearly with $x$, i.e., proportional to $%
x\lambda$ with $\lambda>0$ (Condition (\ref{Assumption:Elasticity2}), Online
Appendix (\ref{Section:Softmax})) rather than being constant. Under this
specification of the model, we find that the estimated salience weights are
slightly larger than in the baseline, indicating that our earlier
computations of trade-offs are robust and, if anything, conservative (see
Table (\ref{SmallEsti_softmax_CI}), Online Appendix (\ref{Section:Softmax})).

The second robustness check relaxes the equality of elasticities across
one-shot and multi-shot points. Specifically, we let the former be $\lambda
-1$ and the latter $t\lambda -1$ with $t>0$ (Condition (\ref%
{Assumption:Elasticity3}), Online Appendix (\ref{Section:DiffCurv})). Note
that, for most players, the conditional probabilities of winning multi-shot
rallies on first and second serve, $k(x_1)$ and $k(x_2)$, are closer to each
other than for one-shot points, so that $k(x)$ is likely to be ``flatter''
than $f(x)$, which would obtain for $t<1$. Nevertheless, we evaluate the
model over a grid of $t $ values from 0.5 to 2, so that the curvature of $k$
ranges from half to twice that of $f$. This grid approach is necessary
because the data do not provide sufficient degrees of freedom to estimate $t$
jointly with the other parameters. Across this range, the estimated salience
weights are remarkably stable as shown in Table (\ref{SmallEsti_t}) of
Online Appendix (\ref{Section:DiffCurv}).

We conclude from these results that the estimates of salience weights
presented above are very robust to departure from Condition (\ref%
{Assumption:Elasticity}).

To further challenge our findings, we consider two alternative explanations
for the apparent preference for winning multi-shot rallies and resulting
conservative serve strategies adopted by tennis players. First, one could
argue that players value one-shot and multi-shot rallies differently because
these rallies require different levels of effort. One-shot rallies by nature
are less demanding and more energy-efficient, so if effort considerations
were driving deviations from outcome-maximization, players would be expected
to put more importance on winning one-shot rallies, conserving energy for
later points on their opponents' serve. We actually find the opposite:
players put more weight on winning multi-shot rallies on their own serve,
suggesting that enjoyment, rather than energy considerations, drives these
choices.

Second, one might argue that conservative second-serve strategies arise due
to risk aversion and, in particular, aversion to double faults, rather than
process utility. To examine this, we develop and estimate a model (in Online
Appendix~(\ref{Section:AversionDF})) where players have a disutility for
double faults, rather than a utility for process, and derive the associated
optimal serve strategies. Importantly, this alternative model does not rely
on the distinction between the conditional probabilities of winning one-shot
and multi-shot rallies, only on the sum of the two. Therefore, if
double-fault aversion was the true mechanism, the estimated disutility
parameter should be unrelated to the probability of winning one-shot rallies
conditional on serve strategy and point-winning probabilities. However, our
analysis shows that $\mathbb{E}\left[ \gamma \times
f_{j}|x_{1},x_{2},y_{1},y_{2}\right] $ for $j=1,2$, is systematically
positive and significant at $5\%$. This indicates that there is additional
information in the conditional probability of winning one-shot rallies that
is being forced into the parameter of double-fault aversion. This, in turns,
supports the relevance of process utility in explaining strategic deviations
from outcome-maximizing behavior.

\section{Conclusion and discussion}

We examine how individuals trade off outcome (\textquotedblleft
what\textquotedblright ) and process (\textquotedblleft
how\textquotedblright ) utility in high-stakes strategic decisions, namely
in professional tennis. We first develop a nonparametric identification
strategy based solely on optimality conditions and the second-serve rule
that delivers a sufficient condition under which the nonparametric bounds on
process utility are positive. Under mild shape restrictions, we show that
this sufficient condition is likely met for a large majority of players. We
then propose a parametric approach to estimate each player's salience weight
and to conduct counterfactual analyses. We show that professional tennis
players in our sample are willing to trade off a lower point-winning
probability on their serve (outcome utility), on average $0.4\%$-pt, in
exchange for a higher probability of winning multi-shot rallies (process
utility). Although these differences are small in probability terms, due to
the rules of tennis, they translate into a substantial forgone expected
prize money. Using the 2025 US Open prize distribution as an example, we
find that, on average, players forgo $\$33,000$, i.e., $13.5\%$, in prize
money per tournament (Grand Slam), with $50\%$ of the players sacrificing $%
\$13,000$ or more, i.e., $5.5\%$, per Grand Slam. These results demonstrate
that process utility---i.e., the enjoyment of winning multi-shot
rallies---is a significant component of tennis players' utility. That
players are willing to sacrifice substantial expected prize money aligns
well with the psychological literature on flow and intrinsic motivation.
Indeed, according to \cite{csik90}, when individuals enter a state of flow,
they become fully absorbed in the activity itself, such that the experience
is intrinsically rewarding even at a great material cost. This
interpretation is further consistent with Self-Determination Theory, which
emphasizes that intrinsic motivation is fostered when individuals experience
autonomy and competence in the activity itself (e.g., \cite{RyanDeci24}). In
this light, multi-shot rallies may provide a richer environment for such
experiences than one-shot outcomes, helping to rationalize the observed
willingness to trade off performance for process enjoyment.

As discussed in the introduction, the tennis setting offers several
advantages for identifying and estimating individual-specific salience
weights while capturing a mechanism likely to be present in many other
economic contexts, particularly in labor markets. A growing body of evidence
on compensating differentials suggests that individuals are willing to make
substantial monetary sacrifices to access intrinsically rewarding job
attributes. Existing evidence, however, is typically based on either
revealed preferences over discrete job choices or on experimentally elicited
stated or incentivized choices. For instance, \cite{Stern04} uses multiple
job offers received by PhD students to measure the preference for
independent research and shows that scientists are willing to forgo
approximately 19$\%$ of their wages to engage in independent research, while 
\cite{Mas17} estimates willingness to pay for non-wage job amenities using
incentivized discrete-choice experiments over alternative work arrangements,
finding wage trade-offs of approximately 8$\%$ to 20$\%$ for working from
home and having regular time schedules respectively. More broadly, similar
trade-offs between material outcomes and process-related attributes are also
well documented in consumer markets, suggesting that the type of preferences
identified in this paper may represent a general feature of economic
behavior across domains. Consistent with these findings, this paper shows
that such trade-offs can be identified from revealed repeated,
high-frequency, continuous choices in a competitive field setting, thereby
providing a novel approach to measuring process utility from observed
behavior.

Our results are also important for policy interventions. For instance, in
the tennis context, one might be tempted to conclude that, from a coaching
perspective, a possible policy intervention would be to encourage players to
set aside their desire for enjoyment and adopt strategies more closely
aligned with outcome maximization. However, while such an approach may be
effective in certain points of a match, the long-run implications of
systematically neglecting process utility may be detrimental to performance.
As suggested by flow and self-determination theories, suppressing the need
for enjoyment and undermining autonomy and competence during play can lead
to boredom, anxiety, stress, or ``controled motivation'' which in turn may
reduce performance and contribute to adverse long-term outcomes, including
burnout, disengagement and withdrawal (e.g., \cite{RyanDeci24}).

\newpage


\bibliographystyle{ecta}
\bibliography{biblio_Pref}

\newpage

\appendix

\setcounter{section}{0} \renewcommand{\thesection}{\Alph{section}}


\section{Appendix}

\renewcommand{\thesubsection}{\thesection\arabic{subsection}}

\subsection{Derivatives of the map $\Lambda \left( \protect\lambda \right) $%
\label{Derivatives:Map}}

\begin{lemma}
\label{Lemma:IncrConcMap}The map $\Lambda \left( \lambda \right) $ defined
by Algorithm (\ref{Algorithm}) is strictly increasing and strictly convex on 
$%
\mathbb{R}
_{0}^{+}$ for all $1>x_{2i}>x_{1i}>0$.
\end{lemma}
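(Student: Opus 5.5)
The plan is to compute the first two derivatives of the map directly and read off their signs. The map depends on the data only through two constants. Write $r := x_{2i}/x_{1i}$ and $c := 1-x_{2i}$. The standing hypothesis $1>x_{2i}>x_{1i}>0$ gives $r>1$, hence $\ell := \ln r >0$, and $c\in(0,1)$. The map in Algorithm (\ref{Algorithm}) then reads
\begin{equation*}
\Lambda(\lambda) = e^{\ell\lambda}\left(1+c\lambda\right)-1 ,
\end{equation*}
which is the product of a smooth exponential and an affine function. It is therefore infinitely differentiable on $\mathbb{R}_0^+$.

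Next, differentiate using the product rule. Since $\frac{d}{d\lambda}e^{\ell\lambda}=\ell e^{\ell\lambda}$, we obtain
\begin{equation*}
\Lambda'(\lambda) = e^{\ell\lambda}\left[\ell\left(1+c\lambda\right)+c\right],
\qquad
\Lambda''(\lambda) = e^{\ell\lambda}\left[\ell^{2}\left(1+c\lambda\right)+2\ell c\right].
\end{equation*}
For $\lambda\geq 0$ we have $e^{\ell\lambda}>0$ and $1+c\lambda\geq 1>0$. Together with $\ell>0$ and $c>0$, both brackets are sums of strictly positive terms. Hence $\Lambda'(\lambda)>0$ and $\Lambda''(\lambda)>0$ on all of $\mathbb{R}_0^+$.

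Strict positivity of $\Lambda'$ gives strict monotonicity by the mean value theorem. Strict positivity of $\Lambda''$ gives strict convexity, since $\Lambda'$ is then strictly increasing. This completes the argument.

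There is no real obstacle here. The only step needing care is checking that every sign ingredient follows from the stated hypothesis. In particular, $\ell>0$ requires $x_{2i}>x_{1i}$, and $c>0$ requires $x_{2i}<1$. If $x_{2i}=1$ were allowed, $\Lambda$ would reduce to $e^{\ell\lambda}-1$, which is still strictly increasing and convex. The case $x_{2i}=x_{1i}$, however, would make $\Lambda$ affine and break strict convexity, so the strict inequality $x_{2i}>x_{1i}$ is what is really used.

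I would also note, for use in the proof of Theorem (\ref{thm:lambda_unique}), two facts: $\Lambda(0)=0$ and $\Lambda'(0)=\ell+c$. Combined with strict convexity, these are the properties the fixed-point argument will exploit.
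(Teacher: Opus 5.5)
Your argument is correct and is essentially the paper's: both differentiate twice and check signs, and your $\Lambda''=e^{\ell\lambda}\bigl[\ell^{2}(1+c\lambda)+2\ell c\bigr]$ is exactly the paper's $\ln\frac{x_{2i}}{x_{1i}}\bigl(\Lambda'(\lambda)+(1-x_{2i})(x_{2i}/x_{1i})^{\lambda}\bigr)$ written out. Factoring out $e^{\ell\lambda}$ is slightly cleaner than the paper's route of writing $\Lambda'$ in terms of $\Lambda+1$ and invoking $\Lambda>0$: only $\Lambda+1>0$ is actually needed, and since $\Lambda(0)=0$ your version also covers the endpoint $\lambda=0$ without further comment.
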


\begin{proof}
Recall that%
\begin{equation*}
\Lambda \left( \lambda \right) =\left( \frac{x_{2i}}{x_{1i}}\right)
^{\lambda }\left( 1+\lambda \left( 1-x_{2i}\right) \right) -1
\end{equation*}%
and $\Lambda \left( \lambda \right) >0$ on $%
\mathbb{R}
_{0}^{+}$, for all $1>x_{2i}>x_{1i}>0$.

One has 
\begin{eqnarray*}
\Lambda ^{\prime }\left( \lambda \right) &=&\frac{\partial \left( \left( 
\frac{x_{2i}}{x_{1i}}\right) ^{\lambda }\left( 1+\lambda \left(
1-x_{2i}\right) \right) -1\right) }{\partial \lambda } \\
&=&\left( \frac{x_{2i}}{x_{1i}}\right) ^{\lambda }\left( \ln \frac{x_{2i}}{%
x_{1i}}\left( 1+\lambda \left( 1-x_{2i}\right) \right) +1-x_{2i}\right) \\
&=&\ln \frac{x_{2i}}{x_{1i}}\left( \frac{x_{2i}}{x_{1i}}\right) ^{\lambda
}\left( 1+\lambda \left( 1-x_{2i}\right) \right) +\left( \frac{x_{2i}}{x_{1i}%
}\right) ^{\lambda }\left( 1-x_{2i}\right) \\
&=&\ln \frac{x_{2i}}{x_{1i}}\left[ \left( \frac{x_{2i}}{x_{1i}}\right)
^{\lambda }\left( 1+\lambda \left( 1-x_{2i}\right) \right) -1+1\right]
+\left( \frac{x_{2i}}{x_{1i}}\right) ^{\lambda }\left( 1-x_{2i}\right) \\
&=&\left( \Lambda \left( \lambda \right) +1\right) \ln \frac{x_{2i}}{x_{1i}}%
+\left( 1-x_{2i}\right) \left( \frac{x_{2i}}{x_{1i}}\right) ^{\lambda }
\end{eqnarray*}

Since $\Lambda \left( \lambda \right) >0$ on $%
\mathbb{R}
_{0}^{+}$ and $1\geq x_{2i}>x_{1i}$, we conclude that $\Lambda ^{\prime
}\left( \lambda \right) >0$ on $%
\mathbb{R}
_{0}^{+}$. The second derivative is%
\begin{eqnarray*}
\Lambda ^{\prime \prime }\left( \lambda \right)  &=&\Lambda ^{\prime }\left(
\lambda \right) \ln \frac{x_{2i}}{x_{1i}}+\left( 1-x_{2i}\right) \left( 
\frac{x_{2i}}{x_{1i}}\right) ^{\lambda }\ln \frac{x_{2i}}{x_{1i}} \\
&=&\left( \Lambda ^{\prime }\left( \lambda \right) +\left( 1-x_{2i}\right)
\left( \frac{x_{2i}}{x_{1i}}\right) ^{\lambda }\right) \ln \frac{x_{2i}}{%
x_{1i}}.
\end{eqnarray*}

It follows that since $\Lambda ^{\prime }\left( \lambda \right) >0$ on $%
\mathbb{R}
_{0}^{+}$ and $x_{2i}>x_{1i}$, one has $\Lambda ^{\prime \prime }\left(
\lambda \right) >0$ on $%
\mathbb{R}
_{0}^{+}$. Hence the map $\Lambda \left( \lambda \right) $ is strictly
increasing and strictly convex on $%
\mathbb{R}
_{0}^{+}$.
\end{proof}


\subsection{Proof Lemma (\protect\ref{LemmaNonPara}).\label{App:Lemma}}

\begin{proof}
From Condition (\ref{Assumption:Theory}), since $f\left( x\right) $ and $%
k\left( x\right) $ are continuous, so are $B\left( x\right) $, $C\left(
x\right) $ and hence $A\left( x\right) $. From (b) and (c), an application
of the Intermediate Value Theorem, indicates that there exists a $x_{0}\in %
\left[ x_{1}^{\ast },x_{2}^{\ast }\right] $ so that $A\left( x_{0}\right) =0$%
. Under the assumption that $y\left( x\right) $ is strictly decreasing and
concave, i.e., $y^{\prime }\left( x\right) <0$ and $y^{\prime \prime }\left(
x\right) \leq 0$, $A\left( x\right) $ must also be strictly concave, i.e., $%
A^{\prime \prime }\left( x\right) =2y^{\prime }\left( x\right) +xy^{\prime
\prime }\left( x\right) <0$, so that $x_{0}$ is unique.
\end{proof}

\subsection{Proof Lemma (\protect\ref{LemmaNonPara2}).}

\begin{proof}
Let $x\in \left[ x_{1}^{\ast },x_{2}^{\ast }\right] $ and $m_{k}^{\ast
}=m\left( x_{k}^{\ast }\right) $ for $k=1,2$, and $m\left( x\right)
=xf\left( x\right) $. Note that $m\left( x\right) $ is strictly concave as $%
m^{\prime \prime }\left( x\right) =2f^{\prime }\left( x\right) +xf^{\prime
\prime }\left( x\right) <0$ since $f^{\prime }\left( x\right) <0$ and $%
f^{\prime \prime }\left( x\right) <0$ under Condition (\ref%
{Assumption:Theory}). We consider the sufficient condition represented in
the following inequality%
\begin{equation}
m\left( x\right) <m_{2}^{\ast }x+b_{12}\text{, }\forall x\in \left[
x_{1}^{\ast },x_{2}^{\ast }\right] ,  \label{eqSuffCondStrict}
\end{equation}%
where $b_{12}=\left( 1-x_{1}^{\ast }\right) m_{1}^{\ast }+m_{1}^{\ast
}-m_{2}^{\ast }$.

Under condition (b), one can deduce from concavity that $m\left(
x_{0}\right) >m_{2}^{\ast }$ and also verify that the sufficient condition
is met at the boundaries, since for $x=x_{1}^{\ast }$, one has%
\begin{equation*}
m_{1}^{\ast }<m_{2}^{\ast }x_{1}^{\ast }+b_{12}\Leftrightarrow m_{1}^{\ast
}+\left( 1-x_{1}^{\ast }\right) m_{2}^{\ast }<m_{1}^{\ast }+\left(
1-x_{1}^{\ast }\right) m_{1}^{\ast }
\end{equation*}%
which obtains from $\left( 1-x_{1}^{\ast }\right) m_{2}^{\ast }<\left(
1-x_{1}^{\ast }\right) m_{1}^{\ast }$ and $m_{2}^{\ast }<m_{1}^{\ast }$, and
for $x=x_{2}^{\ast }$, one has%
\begin{equation*}
m_{2}^{\ast }<m_{1}^{\ast }<m_{2}^{\ast }x_{1}^{\ast }+b_{12}<m_{2}^{\ast
}x_{2}^{\ast }+b_{12}
\end{equation*}%
which follows from $x_{2}^{\ast }>x_{1}^{\ast }$.

Because of the concavity of $m\left( x\right) $ and since $m\left( 0\right)
=0$, it must be that the graph of $m\left( x\right) $ lies above that of $%
l_{1}\left( x\right) :=\frac{m_{1}}{x_{1}}x$ (i.e., the line passing through 
$\left( 0,0\right) $ and $\left( x_{1},m_{1}\right) $) on $x\in \left[
0,x_{1}^{\ast }\right] $ and below $l_{1}\left( x\right) $ on $x\in \left[
x_{1}^{\ast },1\right] $. Similarly, because of the concavity of $m\left(
x\right) $ and since $m\left( 1\right) =f\left( x\right) \geq 0$, the graph
of $m\left( x\right) $ lies below that of $l_{2}\left( x\right) :=\frac{m_{2}%
}{1-x_{2}}\left( 1-x\right) $ (the line passing through $\left(
x_{2},m_{2}\right) $ and $\left( 1,0\right) $) on $x\in \left[ x_{2}^{\ast
},1\right] $ and above it on $x\in \left[ 0,x_{2}^{\ast }\right] $. Finally,
still because $m\left( x\right) $ is concave, its graph lies above that of $%
l_{3}\left( x\right) :=m_{1}-\frac{m_{2}-m_{1}}{x_{2}-x_{1}}x_{1}+\frac{%
m_{2}-m_{1}}{x_{2}-x_{1}}x$ (the line passing through $\left(
x_{1},m_{1}\right) $ and $\left( x_{2},m_{2}\right) $) on $x\in \left[
x_{1}^{\ast },x_{2}^{\ast }\right] $. This means that the following
inequalities are satisfied%
\begin{equation*}
l_{3}\left( x\right) <m\left( x\right) <\max \left\{ l_{1}\left( x\right)
,l_{2}\left( x\right) \right\} \text{ }\forall x_{1}^{\ast }<x<x_{2}^{\ast }.
\end{equation*}

The lines $l_{1}\left( x\right) $, $l_{2}\left( x\right) $ and $l_{3}\left(
x\right) $ form a triangle (convex polyhedron) within which the graph of the
concave function $m\left( x\right) $ must lie on $\left[ x_{1}^{\ast
},x_{2}^{\ast }\right] $. Moreover, by definition, $l_{1}\left( x\right) $
and $l_{2}\left( x\right) $ cross each other at $x_{12}=\frac{m_{2}^{\ast
}x_{1}^{\ast }}{\left( 1-x_{2}^{\ast }\right) m_{1}^{\ast }+m_{2}^{\ast
}x_{1}^{\ast }}\in \left[ x_{1}^{\ast },x_{2}^{\ast }\right] $ with $%
l_{1}\left( x_{12}\right) =\frac{m_{1}^{\ast }m_{2}^{\ast }}{\left(
1-x_{2}^{\ast }\right) m_{1}^{\ast }+m_{2}^{\ast }x_{1}^{\ast }}>l_{1}\left(
x_{1}^{\ast }\right) =m_{1}^{\ast }>m_{2}^{\ast }$, where one notes that the
index "$12$" indexes the x-coordintate of the point at which line $%
l_{1}\left( x\right) $ and line $l_{2}\left( x\right) $ cross each other.
This triangle, therefore, contains all feasible paths of the graph of $%
m\left( x\right) $ on $\left[ x_{1}^{\ast },x_{2}^{\ast }\right] $. Its area
can be computed using either integrals, as%
\begin{equation*}
A_{1}=\int_{x_{1}^{\ast }}^{x_{12}}\left( l_{1}\left( x\right) -l_{3}\left(
x\right) \right) dx+\int_{x_{12}}^{x_{2}^{\ast }}\left( l_{2}\left( x\right)
-l_{3}\left( x\right) \right) dx,
\end{equation*}%
or geometry, and obtains as%
\begin{equation*}
A_{1}=\frac{1}{2}\left\vert \frac{\left( m_{1}^{\ast }x_{2}^{\ast
}-m_{2}^{\ast }x_{1}^{\ast }\right) \left( \left( 1-x_{1}^{\ast }\right)
m_{2}^{\ast }-\left( 1-x_{2}^{\ast }\right) m_{1}^{\ast }\right) }{\left(
1-x_{2}^{\ast }\right) m_{1}^{\ast }+m_{2}^{\ast }x_{1}^{\ast }}\right\vert .
\end{equation*}

Let the right hand side of Inequality (\ref{eqSuffCondStrict}) be the line $%
l_{4}\left( x\right) :=b_{12}+m_{2}^{\ast }x$, of slope $m_{2}^{\ast }$ and
constant at origin $b_{12}>0$. Hence, the associated sufficient condition
requires that all points $\left( x,m\left( x\right) \right) $ on the graph
of $m\left( x\right) $ lie below the point $\left( x,m_{2}^{\ast
}x+b_{12}\right) $ for all $x\in \left[ x_{1}^{\ast },x_{2}^{\ast }\right] $%
, implying it is also true for $x=x_{0}$. We already know that $l_{4}\left(
x_{1}^{\ast }\right) =m_{2}^{\ast }x_{1}^{\ast }+b_{12}>m_{1}^{\ast }$.
Hence, either this line crosses the triangle defined above at say $\left(
x_{14},l_{4}\left( x_{14}\right) \right) $ and $\left( x_{24},l_{4}\left(
x_{24}\right) \right) $ with $x_{1}^{\ast }<x_{14}<x_{24}<x_{2}^{\ast }$ and 
$m_{1}^{\ast }<l_{4}\left( x_{14}\right) <l_{4}\left( x_{24}\right) $,
implying a restriction on the paths that $m\left( x\right) $ can take and
still satisfy the inequality, or it does not, which guarantees all paths of $%
m\left( x\right) $ satisfy the inequality.

To check this, one can easily obtain $x_{14}=b_{12}\frac{x_{1}^{\ast }}{%
m_{1}^{\ast }-m_{2}^{\ast }x_{1}^{\ast }}$ and $x_{24}=\frac{m_{2}^{\ast
}-\left( 1-x_{2}^{\ast }\right) b_{12}}{m_{2}^{\ast }\left( 2-x_{2}^{\ast
}\right) }$ as the respective solutions to $l_{1}\left( x\right)
=l_{4}\left( x\right) $ and $l_{2}\left( x\right) =l_{4}\left( x\right) $,
and then compute the area of the triangle above this constraint, say $A_{2}$%
. To do so, let $l_{4}\left( x\right) =m_{2}^{\ast }x+b_{12}$ and compute%
\begin{equation*}
A_{2}=\int_{x_{14}}^{x_{12}}\left( l_{1}\left( x\right) -l_{4}\left(
x\right) \right) dx+\int_{x_{12}}^{x_{24}}\left( l_{2}\left( x\right)
-l_{4}\left( x\right) \right) dx,
\end{equation*}%
or geometrically as%
\begin{equation*}
A_{2}=\frac{1}{2}\left\vert \left( x_{12}-x_{14}\right) \left( \frac{%
x_{2}^{\ast }f_{2}^{\ast }}{1-x_{2}^{\ast }}\left( 1-x_{24}\right)
-f_{1}^{\ast }x_{24}\right) \right\vert 
\end{equation*}%
when $x_{14}<x_{24}$ and $A_{2}=0$ otherwise. One can then check the share $%
A_{2}/A_{1}$ representing the extent to which this sufficient condition
restricts the feasible paths the graph of $m\left( x\right) $ can take on $%
\left[ x_{1}^{\ast },x_{2}^{\ast }\right] $ and still satisfy the condition.
\end{proof}


\subsection{Proof Theorem (\protect\ref{thm:lambda_unique}).\label{App:Proof}%
}

\begin{proof}
The proof proceeds in 6 steps.

\begin{enumerate}
\item \textit{Step 1: Definition of the map }$\Lambda _{i}$ and its
properties.

The map $\Lambda _{i}:%
\mathbb{R}
_{0}^{+}\rightarrow \mathbb{R}_{0}^{+}$, defined as%
\begin{equation*}
\Lambda _{i}\left( \lambda \right) =\left( \frac{x_{2i}}{x_{1i}}\right)
^{\lambda }\left( 1+\lambda \left( 1-x_{2i}\right) \right) -1,
\end{equation*}%
is strictly increasing and convex on $%
\mathbb{R}
_{0}^{+}$ from Lemma (\ref{Lemma:IncrConcMap}) in Appendix (\ref%
{Derivatives:Map}).

Its first and second derivatives read as%
\begin{eqnarray*}
\Lambda _{i}^{\prime }\left( \lambda \right) &=&\left( \Lambda _{i}\left(
\lambda \right) +1\right) \ln \frac{x_{2i}}{x_{1i}}+\left( 1-x_{2i}\right)
\left( \frac{x_{2i}}{x_{1i}}\right) ^{\lambda }>0, \\
\Lambda ^{\prime \prime }\left( \lambda \right) &=&\left( \Lambda ^{\prime
}\left( \lambda \right) +\left( 1-x_{2i}\right) \left( \frac{x_{2i}}{x_{1i}}%
\right) ^{\lambda }\right) \ln \frac{x_{2i}}{x_{1i}}>0.
\end{eqnarray*}%
Define the function $g_{i}\left( \lambda \right) $ as%
\begin{equation*}
g_{i}\left( \lambda \right) =\Lambda _{i}(\lambda )-\lambda =\left( \frac{%
x_{2i}}{x_{1i}}\right) ^{\lambda }\left( 1+\lambda \left( 1-x_{2i}\right)
\right) -1-\lambda ,\qquad \lambda >0.
\end{equation*}

\item \textit{Step 2: Continuity of }$g_{i}$\textit{.}

Since $\Lambda _{i}$ is continuously twice differentiable on $%
\mathbb{R}
_{0}^{+}$ so is $g_{i}\left( \lambda \right) $.

\item \textit{Step 3: Boundary behavior.}

For $\lambda \rightarrow 0$, $\Lambda _{i}(\lambda )\rightarrow 0$, since we
have 
\begin{eqnarray*}
\lim_{\lambda \rightarrow 0^{+}}g_{i}\left( \lambda \right) 
&=&\lim_{\lambda \rightarrow 0^{+}}\Lambda _{i}(\lambda )=\lim_{\lambda
\rightarrow 0^{+}}\left( \frac{x_{2i}}{x_{1i}}\right) ^{\lambda }\left(
1+\lambda \left( 1-x_{2i}\right) \right) -1 \\
&=&0.
\end{eqnarray*}%
In contrast, as $\lambda \rightarrow \infty $, $g_{i}(\lambda )\rightarrow
+\infty $ since%
\begin{eqnarray*}
\lim_{\lambda \rightarrow \infty }g_{i}(\lambda ) &=&\lim_{\lambda
\rightarrow \infty }\left[ \left( \frac{x_{2i}}{x_{1i}}\right) ^{\lambda
}\left( 1+\lambda \left( 1-x_{2i}\right) \right) -1-\lambda \right]  \\
&=&\left( 1-x_{2i}\right) \lim_{\lambda \rightarrow \infty }\lambda \left( 
\frac{x_{2i}}{x_{1i}}\right) ^{\lambda }=+\infty ,
\end{eqnarray*}%
using (A1).

\item \textit{Step 4: First and second derivatives of }$g_{i}\left( \lambda
\right) $\textit{.}

By definition 
\begin{eqnarray*}
g_{i}^{\prime }\left( \lambda \right)  &=&\Lambda _{i}^{\prime }\left(
\lambda \right) -1 \\
&=&\left( \Lambda _{i}\left( \lambda \right) +1\right) \ln \frac{x_{2i}}{%
x_{1i}}+\left( 1-x_{2i}\right) \left( \frac{x_{2i}}{x_{1i}}\right) ^{\lambda
}-1.
\end{eqnarray*}%
Hence,%
\begin{eqnarray*}
\lim_{\lambda \rightarrow 0^{+}}g_{i}^{\prime }\left( \lambda \right) 
&=&\lim_{\lambda \rightarrow 0^{+}}\left[ \left( \Lambda _{i}\left( \lambda
\right) +1\right) \ln \frac{x_{2i}}{x_{1i}}+\left( 1-x_{2i}\right) \left( 
\frac{x_{2i}}{x_{1i}}\right) ^{\lambda }\right] -1 \\
&=&\left[ \left( \lim_{\lambda \rightarrow 0^{+}}\Lambda _{i}\left( \lambda
\right) +1\right) \ln \frac{x_{2i}}{x_{1i}}+\left( 1-x_{2i}\right)
\lim_{\lambda \rightarrow 0^{+}}\left( \frac{x_{2i}}{x_{1i}}\right)
^{\lambda }\right] -1 \\
&=&\ln \frac{x_{2i}}{x_{1i}}-x_{2i}
\end{eqnarray*}%
This means that for $\ln \frac{x_{2i}}{x_{1i}}<x_{2i}$, $\lim_{\lambda
\rightarrow 0^{+}}g_{i}^{\prime }\left( \lambda \right) <0$.

One also has%
\begin{eqnarray*}
\lim_{\lambda \rightarrow +\infty }g^{\prime }\left( \lambda \right) 
&=&\lim_{\lambda \rightarrow +\infty }\left[ \left( \Lambda _{i}\left(
\lambda \right) +1\right) \ln \frac{x_{2i}}{x_{1i}}+\left( 1-x_{2i}\right)
\left( \frac{x_{2i}}{x_{1i}}\right) ^{\lambda }\right]  \\
&=&\left( \lim_{\lambda \rightarrow +\infty }\Lambda _{i}\left( \lambda
\right) +1\right) \ln \frac{x_{2i}}{x_{1i}}+\left( 1-x_{2i}\right)
\lim_{\lambda \rightarrow +\infty }\left( \frac{x_{2i}}{x_{1i}}\right)
^{\lambda } \\
&=&\ln \frac{x_{2i}}{x_{1i}}\lim_{\lambda \rightarrow +\infty }\Lambda
_{i}\left( \lambda \right) +\left( 1-x_{2i}\right) \lim_{\lambda \rightarrow
+\infty }\left( \frac{x_{2i}}{x_{1i}}\right) ^{\lambda }=+\infty .
\end{eqnarray*}%
Finally, note that%
\begin{equation*}
g_{i}^{\prime \prime }\left( \lambda \right) =\Lambda _{i}^{\prime \prime
}\left( \lambda \right) >0.
\end{equation*}

\item \textit{Step 5: }$g_{i}\left( 1\right) $\textit{.}

Note that%
\begin{equation*}
g_{i}\left( 1\right) =\left( \frac{x_{2i}}{x_{1i}}\right) \left( 1+\left(
1-x_{2i}\right) \right) -2.
\end{equation*}%
Hence, $g_{i}\left( 1\right) <0$ is implied by (A3) as indeed%
\begin{eqnarray*}
g_{i}\left( 1\right)  &<&0\Leftrightarrow x_{2i}\left( 2-x_{2i}\right)
<2x_{1i} \\
&\Leftrightarrow &2\left( x_{2i}-x_{1i}\right) <\left( x_{2i}\right) ^{2}.
\end{eqnarray*}

\item \textit{Step 6: conclusion.}

The function $g_{i}\left( \lambda \right) $ tends to 0 when $\lambda
\rightarrow 0$, is first decreasing for values of $\lambda $ close to $0$
under (A2), i.e., $\ln \frac{x_{2i}}{x_{1i}}<x_{2i}$, hence becomes negative
as $\lambda $ increases away from $0$, but tends to $+\infty \,\ $as $%
\lambda \rightarrow \infty $. Since $g_{i}$ is continuous, this means that $%
g_{i}\left( \lambda \right) $ must have at least one root on $\mathbb{R}%
_{0}^{+}$. However, since it is strictly convex on $\mathbb{R}_{0}^{+}$,
there can be at most one root.

So, under (A1)-(A2), for all $i$, the associated map $\Lambda _{i}(\lambda )$
has exactly one solution $\lambda _{i}^{\ast }=\Lambda _{i}(\lambda
_{i}^{\ast })$. Moreover, under (A3), this solution $\lambda _{i}^{\ast }$
is strictly greater than unity.
\end{enumerate}
\end{proof}

\newpage

\section{Online Appendix}

\numberwithin{table}{section} 
\numberwithin{figure}{section} 
\numberwithin{equation}{section} 
\renewcommand{\thecondition}{\thesection.\arabic{condition}}

\subsection{Players' Testimonies\label{App:Testimonies}}

Professional tennis players frequently emphasize enjoyment, intrinsic
motivation, and the pursuit of flow as key drivers of their performance. The
following quotes illustrate that top athletes prioritize the process of play
alongside outcome maximization:

\begin{quote}
\textquotedblleft \emph{I just want to step on court, try to do my things,
follow my goals, and try to enjoy as much as I can.}\textquotedblright\ ---
Carlos Alcaraz, before his 2022 US Open semifinal vs. Novak Djokovic. 
\newline
https://www.atptour.com/en/news/alcaraz-lehecka-us-open-2025-qf

\textquotedblleft \emph{I love the winning, I can take the losing, but most
of all I love to play.} \textquotedblright\ --- Boris Becker
https://www.allgreatquotes.com/quote-56608

\textquotedblleft \emph{My personal goal is to have fun and enjoy the
moment, not put too much pressure on myself.}\textquotedblright\ --- Venus
Williams, interview by Rory Carroll, 2025. \newline
https://www.reuters.com/sports/tennis/venus-williams-prioritising-fun-she-returns-after-16-month-absence-2025-07-20/

\textquotedblleft \emph{Every match that I've played since the beginning, I
was just trying to win every point, [...] but the most important thing is to
just have fun and enjoy.}\textquotedblright\ --- Mirra Andreeva, Tennis.com,
Wimbledon 2025. \newline
https://www.tennis.com/news/articles/carlos-alcaraz-joy-wimbledon-final

\textquotedblleft \emph{[...] I spend every single moment enjoying the play,
enjoying the fact that you're competing and just make it a fun game [...]
every decision I went for felt like it was absolutely the right decision at
the right time. It's what I like to call flow.}\textquotedblright\ ---
Stefanos Tsitsipas, Interview by James Richardson, 2023. \newline
https://www.tennis365.com/tennis-news/stefanos-tsitsipas-reveals-what-means-flow

\textquotedblleft \emph{I felt like I was moving in a flow.}%
\textquotedblright\ --- Iga Swiatek, post-match interview, Australian Open
2023. \newline
https://www.tntsports.co.uk/tennis/australian-open/2023/i-felt-like-i-was-moving-in-a-flow-iga-swiateks-post-match-interview-after-latest-win-at-australian-open\_vid1815515/video.shtml

\textquotedblleft \emph{I go out there because I love tennis and I love
playing.}\textquotedblright\ --- Bethanie Mattek-Sands. \newline
https://www.edwardssports.co.uk/news/post/practice-tennis-like-a-pro

\textquotedblleft \emph{The glory is being happy. The glory is not winning
here or winning there. The glory is enjoying practicing, enjoy every day,
enjoying to work hard, trying to be a better player than before.}%
\textquotedblright\ --- Rafael Nadal, Sports Illustrated (interview by Jon
Wertheim), 2010. \newline
https://www.si.com/more-sports/2010/07/16/nadal-interview

\textquotedblleft \emph{I play tennis for the love of the game, having fun
and what pleasure the game affords me.}\textquotedblright\ --- Evonne
Goolagong. \newline
https://www.tennis-prose.com/articles/scoop/some-fantastic-tennis-quotes

\textquotedblleft \emph{What separates good tennis players from the truly
great? Their ability to enter flow state.}\textquotedblright\ \newline
https://www.streetfamilytennis.com/flow-state-in-tennis-mastering-the-mental-game
\end{quote}

Taken together, these testimonies highlight a consistent theme: elite
athletes derive substantial intrinsic utility from the act of playing
itself, aligning closely with the concept of \emph{autotelic experience} in
flow theory.

\subsection{Relation to Klaassen and Magnus (2009)\label{Model:KM}}

In the Klaassen and Magnus (2009) (KM from now on) model, the probability of
a serve being in is denoted $x$. This probability reflects the choice of the
server. As depicted in the above observations, if the server wants to take
more risk, he will choose a lower value of $x$, hence a lower serve
percentage. In contrast, if he wants to take fewer risks, he will choose a
higher value of $x$, a higher serve percentage. Since in tennis, players can
serve a second serve if the first is out, the server's strategy consists, in
fact, of two numbers $x_{1}$ and $x_{2}$ reflecting respectively the
probability of the first and second serve to be in. A player's probability
to win a point on his serve, denoted $p\left( x_{1},x_{2}\right) $, depends
on his strategy $\left( x_{1},x_{2}\right) $. Denote $y\left( x\right) $ the
probability of winning a point conditional on the probability of the serve
being in, $x$. $y\left( x\right) $ reflects the skills of the player,
encompassing both his serving and rally skills. $y\left( x\right) $ is
assumed to be twice differentiable and in particular to be strictly
decreasing, i.e., $y^{\prime }\left( x\right) <0$, so that the easier the
serve, that is, the higher the probability that it is in, the lower the
probability to win the point conditional on the serve being in and concave, $%
y^{\prime \prime }\left( x\right) <0$.

With these definitions, the probability to win a point on one's own serve,
the outcome utility, reads as,%
\begin{eqnarray*}
p\left( x_{1},x_{2}\right) &=&x_{1}y\left( x_{1}\right) +\left(
1-x_{1}\right) x_{2}y\left( x_{2}\right) \\
&=&w\left( x_{1}\right) +\left( 1-x_{1}\right) w\left( x_{2}\right) ,
\end{eqnarray*}%
where $w\left( x\right) :=xy\left( x\right) $ is the unconditional
probability to win a point when the serve is in.

A server aiming at maximizing his probability to win a point on his serve
then does%
\begin{equation*}
\max_{x_{1},x_{2}}p\left( x_{1},x_{2}\right) .
\end{equation*}

With mild regular conditions on the conditional probability $y\left(
x\right) $, KM show that a unique solution exists.

The FOCs of this problem read as%
\begin{eqnarray*}
\frac{\partial p\left( x_{1},x_{2}\right) }{\partial x_{1}} &=&w^{\prime
}\left( x_{1}\right) -w\left( x_{2}\right) =0, \\
\frac{\partial p\left( x_{1},x_{2}\right) }{\partial x_{2}} &=&\left(
1-x_{1}\right) w^{\prime }\left( x_{2}\right) =0.
\end{eqnarray*}

Denote the unique optimal solution $\left( x_{1}^{\ast },x_{2}^{\ast
}\right) $. KM have shown that this solution is so that%
\begin{equation*}
x_{1}^{\ast }<x_{2}^{\ast },\text{ }y\left( x_{1}^{\ast }\right) >y\left(
x_{2}^{\ast }\right) ,\text{ }w\left( x_{1}^{\ast }\right) <w\left(
x_{2}^{\ast }\right) .
\end{equation*}

For further analysis, a parametric functional form must be adopted for the
conditional probability $y\left( x\right) $. A particularily attractive
shape is the power function, which, with only 3 parameters offers a very
good trade-off between tractability and flexibility\footnote{%
Power functions appear to fit well sports performance data in general. The
relationship between world record time and distance in running, speed
skating and swimming for instance is best fit with a power function.}%
\begin{equation*}
y\left( x\right) =\frac{a-x^{\lambda }}{\tau }
\end{equation*}%
where the regularity conditions are met when $1\leq a\leq \tau
+x_{0}^{\lambda }<\tau +x^{\ast \lambda }$.

The parameters $a$, $\tau $ and $\lambda $ are skills parameters in the
sense that conditional on the server's decision $x$, they determine the
probability of winning a point on own serve. Interestingly, if the curve $%
y\left( x\right) $ was linear, i.e., $\lambda =1$, it would suffice to
observe two points on the curve to identify the parameters $\left( \tau
,\alpha \right) $, i.e., slope and constant. Hence, data for the first and
second serve would be enough to identify the skills parameters. However, the
presence of the curvature parameter requires at least a third point or an
additional restriction for its identification.

To understand the identification strategy of KM, let us first assume that
players differ in their skills and hence in their conditional probability to
win a point and if $i$ indexes players, then $\left( a_{i},\tau _{i},\lambda
_{i}\right) $ are the skills parameters of player $i$. The KM approach to
identify the parameters $\left( a_{i},\tau _{i},\lambda _{i}\right) $ for
all players $i=1,..,N$, using data on $\left(
x_{1i},x_{2i},y_{1i},y_{2i}\right) $ for $i=1,...,N$, is the following.
First, note that the data for each player offer two points on the curve $%
y_{i}\left( x\right) $, namely $\left( x_{1i},y_{1i}\right) $ and $\left(
x_{2i},y_{2i}\right) $. This means that given a curvature parameter $\lambda
_{i}$, one can, for each player $i$, identify the slope $\tau _{i}$ and
constant $a_{i}$. Noting this, KM therefore first randomly draw, for each
player, a curvature parameter $\lambda _{i}$ from a Gaussian distribution
with mean $\mu _{\lambda }$ and standard deviation $\sigma _{\lambda }$ and
then use the two data points $\left( x_{1i},y_{1i}\right) $ and $\left(
x_{2i},y_{2i}\right) $ for each player $i$, to compute $\left( \alpha
_{i},\tau _{i}\right) $ given $\lambda _{i}$. Given the draw of $\lambda _{i}
$, the shape of $y_{i}\left( x\right) $ is fully determined and the optimal
solution $\left( x_{1i}^{\ast },x_{2i}^{\ast }\right) $ can be derived from
the FOCs. This yields two additional points $\left( x_{1i}^{\ast
},y_{1i}^{\ast }\right) $ and $\left( x_{2i}^{\ast },y_{2i}^{\ast }\right) $
that can be used to pin down the (distribution of the) curvature parameters $%
\lambda _{i}$, $i=1,...,N$. The KM strategy consists in picking the value of
the mean and standard deviation $\left( \mu _{\lambda },\sigma _{\lambda
}\right) $ that maximizes average efficiency across players and whose
definition is given by%
\begin{equation*}
eff=E\left( \frac{p\left( x_{1},x_{2}\right) }{p\left( x_{1}^{\ast
},x_{2}^{\ast }\right) }\right) \simeq \frac{1}{n}\sum_{i=1}^{n}\frac{%
x_{1i}y_{1i}+\left( 1-x_{1i}\right) x_{2i}y_{2i}}{x_{1i}^{\ast }y_{1i}^{\ast
}+\left( 1-x_{1i}^{\ast }\right) x_{2i}^{\ast }y_{2i}^{\ast }}.
\end{equation*}

With respect to our project that are 4 important findings from their
analysis: $\lambda $ is virtually the same across (male) players as $\sigma
_{\lambda }$ is not statistically different from $0$, the mean is $\mu
_{\lambda }=3.07$, players are not perfect optimizers but close enough (the
average/median player has an inefficiency of 1.1\%, $eff=0.989$), there is
more inefficiency on the second serve than on the first and, better players
are better optimizers.


\subsection{Second serve strategy illustrated\label{SecondServe}}

The additional source of identification provided by distinguishing between
one-shot and multi-shot rallies can be best illustrated using Figure (\ref%
{fig_Opt2ndRF}). This figure shows how the optimal second service strategy
is determined given the skills parameters of the player (the shapes of $f$
and $k$) and his preference parameter for multi-shot rallies ($\beta $) and
how this generates one point on the conditional probability of one-shot
rallies and one point on the conditional probability of multi-shot rallies.
We herewith use the example of Roger Federer.

\begin{figure}[tbph]
\caption{Optimal second serve strategy: Roger Federer.}
\label{fig_Opt2ndRF}\centering\includegraphics[width=0.8%
\textwidth]{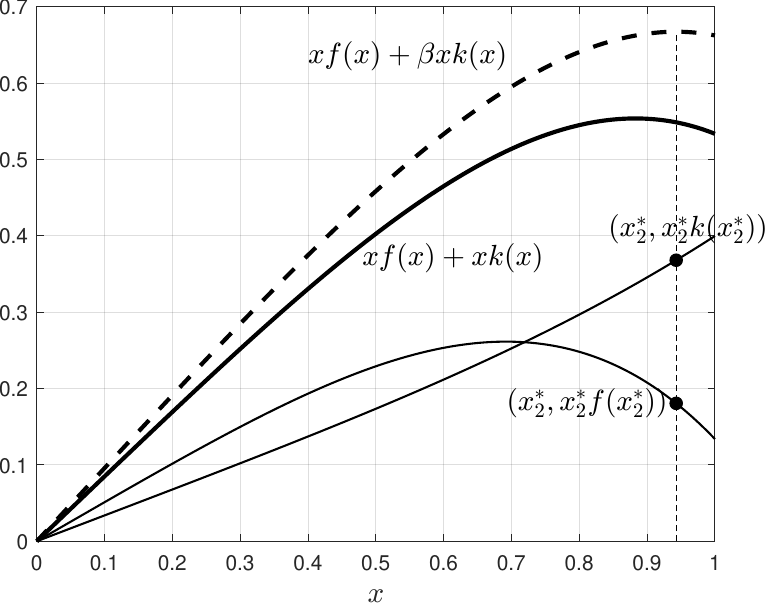}
\end{figure}

The optimal second serve strategy is obtained as the value $x_{2}^{\ast }$
that maximizes $\tilde{w}(x)=xf(x)+\beta xk(x)$. Note that it does not
maximize $w(x)=xf(x)+xk(x)$. Although changing the value of $\beta $ might
result in a non-negligible change in the second serve percentage, we see
that the probability of winning a point on one's serve might actually only
change marginally, $w(x_{2}^{\ast })$ being close to $\max_{x}(w(x))$. This
marginal change, however, because of the scoring system in tennis, magnifies
into a larger change in the probability to win a game, a set, and eventually
a match. This, in turn, transforms into significantly larger expected prize
money. 

\subsection{Deriving the first and second order conditions}

\subsubsection{Theory\label{FOCs:Theory}}

The first order condition, after substitution of the expression $\tilde{y}%
\left( x\right) $ yields 
\begin{eqnarray*}
\frac{\partial \tilde{w}\left( x_{1}\right) }{\partial x}-\tilde{w}\left(
x_{2}\right)  &:&=\tilde{y}\left( x_{1}\right) +x_{1}\frac{\partial \tilde{y}%
\left( x_{1}\right) }{\partial x}-x_{2}\tilde{y}\left( x_{2}\right) =0 \\
&\Leftrightarrow & \\
\left[ f\left( x_{1}\right) +\beta k\left( x_{1}\right) \right] +x_{1}\left[ 
\frac{\partial f\left( x_{1}\right) }{\partial x}+\beta \frac{\partial
k\left( x_{1}\right) }{\partial x}\right]  &=&x_{2}\tilde{y}\left(
x_{2}\right) 
\end{eqnarray*}%
while proceeding similarly for the second FOC yields%
\begin{eqnarray*}
\frac{\partial \tilde{w}\left( x_{2}\right) }{\partial x} &:&=\frac{\partial
x_{2}\tilde{y}\left( x_{2}\right) }{\partial x}=\tilde{y}\left( x_{2}\right)
+x_{2}\frac{\partial \tilde{y}\left( x_{2}\right) }{\partial x} \\
&=&\left[ f\left( x_{2}\right) +\beta k\left( x_{2}\right) \right] +x_{2}%
\left[ \frac{\partial f\left( x_{2}\right) }{\partial x}+\beta \frac{%
\partial k\left( x_{2}\right) }{\partial x}\right] =0.
\end{eqnarray*}

To derive the Hessian we need the second order partial derivatives of the
expected utility function. One has%
\begin{eqnarray*}
\frac{\partial ^{2}E\left( x_{1},x_{2}\right) }{\partial x_{1}^{2}} &=&\frac{%
\partial ^{2}\left( \tilde{w}\left( x_{1}\right) +\left( 1-x_{1}\right) 
\tilde{w}\left( x_{2}\right) \right) }{\partial x_{1}^{2}} \\
&=&\frac{\partial \left( \tilde{y}\left( x_{1}\right) +x_{1}\frac{\partial 
\tilde{y}\left( x_{1}\right) }{\partial x_{1}}-x_{2}\tilde{y}\left(
x_{2}\right) \right) }{\partial x_{1}} \\
&=&2\frac{\partial \tilde{y}\left( x_{1}\right) }{\partial x_{1}}+x_{1}\frac{%
\partial ^{2}\tilde{y}\left( x_{1}\right) }{\partial x_{1}^{2}} \\
&=&2\left( \frac{\partial f\left( x_{1}\right) }{\partial x_{1}}+\beta \frac{%
\partial k\left( x_{1}\right) }{\partial x_{1}}\right) +x_{1}\left( \frac{%
\partial ^{2}f\left( x_{1}\right) }{\partial x_{1}^{2}}+\beta \frac{\partial
^{2}k\left( x_{1}\right) }{\partial x_{1}^{2}}\right) 
\end{eqnarray*}%
and,%
\begin{eqnarray*}
\frac{\partial ^{2}E\left( x_{1},x_{2}\right) }{\partial x_{1}\partial x_{2}}
&=&\frac{\partial ^{2}\left( x_{1}\tilde{y}\left( x_{1}\right) +\left(
1-x_{1}\right) x_{2}\tilde{y}\left( x_{2}\right) \right) }{\partial
x_{1}\partial x_{2}} \\
&=&\frac{\partial \left( \left( 1-x_{1}\right) \left( \tilde{y}\left(
x_{2}\right) +x_{2}\frac{\partial \tilde{y}\left( x_{2}\right) }{\partial
x_{2}}\right) \right) }{\partial x_{1}}=-\left( \tilde{y}\left( x_{2}\right)
+x_{2}\frac{\partial \tilde{y}\left( x_{2}\right) }{\partial x_{2}}\right) 
\\
&=&-\left( f\left( x_{2}\right) +\beta k\left( x_{2}\right) +x_{2}\left( 
\frac{\partial f\left( x_{2}\right) }{\partial x_{2}}+\beta \frac{\partial
k\left( x_{2}\right) }{\partial x_{2}}\right) \right) 
\end{eqnarray*}%
and%
\begin{eqnarray*}
\frac{\partial ^{2}E\left( x_{1},x_{2}\right) }{\partial x_{2}^{2}} &=&\frac{%
\partial \left( \left( 1-x_{1}\right) \left( \tilde{y}\left( x_{2}\right)
+x_{2}\frac{\partial \tilde{y}\left( x_{2}\right) }{\partial x_{2}}\right)
\right) }{\partial x_{2}}=\left( 1-x_{1}\right) \left( 2\frac{\partial 
\tilde{y}\left( x_{2}\right) }{\partial x_{2}}+x_{2}\frac{\partial ^{2}%
\tilde{y}\left( x_{2}\right) }{\partial x_{2}^{2}}\right)  \\
&=&\left( 1-x_{1}\right) \left( 2\left( \frac{\partial f\left( x_{2}\right) 
}{\partial x_{2}}+\beta \frac{\partial k\left( x_{2}\right) }{\partial x_{2}}%
\right) +x_{2}\left( \frac{\partial ^{2}f\left( x_{2}\right) }{\partial
x_{2}^{2}}+\beta \frac{\partial ^{2}k\left( x_{2}\right) }{\partial x_{2}^{2}%
}\right) \right) .
\end{eqnarray*}

Collecting terms, this yields%
\begin{equation*}
\nabla ^{2}E\left( x_{1},x_{2}\right) =\left( 
\begin{array}{cc}
\begin{array}{c}
2\left( \frac{\partial f\left( x_{1}\right) }{\partial x_{1}}+\beta \frac{%
\partial k\left( x_{1}\right) }{\partial x_{1}}\right) \\ 
+x_{1}\left( \frac{\partial ^{2}f\left( x_{1}\right) }{\partial x_{1}^{2}}%
+\beta \frac{\partial ^{2}k\left( x_{1}\right) }{\partial x_{1}^{2}}\right)%
\end{array}
& -\left( 
\begin{array}{c}
f\left( x_{2}\right) +\beta k\left( x_{2}\right) \\ 
+x_{2}\left( \frac{\partial f\left( x_{2}\right) }{\partial x_{2}}+\beta 
\frac{\partial k\left( x_{2}\right) }{\partial x_{2}}\right)%
\end{array}%
\right) \\ 
-\left( 
\begin{array}{c}
f\left( x_{2}\right) +\beta k\left( x_{2}\right) \\ 
+x_{2}\left( \frac{\partial f\left( x_{2}\right) }{\partial x_{2}}+\beta 
\frac{\partial k\left( x_{2}\right) }{\partial x_{2}}\right)%
\end{array}%
\right) & \left( 1-x_{1}\right) \left( 
\begin{array}{c}
2\left( \frac{\partial f\left( x_{2}\right) }{\partial x_{2}}+\beta \frac{%
\partial k\left( x_{2}\right) }{\partial x_{2}}\right) \\ 
+x_{2}\left( \frac{\partial ^{2}f\left( x_{2}\right) }{\partial x_{2}^{2}}%
+\beta \frac{\partial ^{2}k\left( x_{2}\right) }{\partial x_{2}^{2}}\right)%
\end{array}%
\right)%
\end{array}%
\right)
\end{equation*}

However, note that, by the second FOC, one has $\frac{\partial ^{2}E\left(
x_{1},x_{2}^{\ast }\right) }{\partial x_{1}\partial x_{2}}=0$ so that the
Hessian at optimum second serve reads as%
\begin{equation*}
\nabla ^{2}E\left( x_{1},x_{2}^{\ast }\right) =\left( 
\begin{array}{cc}
\begin{array}{c}
2\left( \frac{\partial f\left( x_{1}\right) }{\partial x_{1}}+\beta \frac{%
\partial k\left( x_{1}\right) }{\partial x_{1}}\right) \\ 
+x_{1}\left( \frac{\partial ^{2}f\left( x_{1}\right) }{\partial x_{1}^{2}}%
+\beta \frac{\partial ^{2}k\left( x_{1}\right) }{\partial x_{1}^{2}}\right)%
\end{array}
& 0 \\ 
0 & \left( 1-x_{1}\right) \left( 
\begin{array}{c}
2\left( \frac{\partial f\left( x_{2}^{\ast }\right) }{\partial x_{2}}+\beta 
\frac{\partial k\left( x_{2}^{\ast }\right) }{\partial x_{2}}\right) \\ 
+x_{2}^{\ast }\left( \frac{\partial ^{2}f\left( x_{2}^{\ast }\right) }{%
\partial x_{2}^{2}}+\beta \frac{\partial ^{2}k\left( x_{2}^{\ast }\right) }{%
\partial x_{2}^{2}}\right)%
\end{array}%
\right)%
\end{array}%
\right) .
\end{equation*}

\subsubsection{Parametric\label{FOCs:Parametric}}

Using the parametric assumptions of the paper, the FOC for second serve
strategy reads as%
\begin{eqnarray*}
\left[ f\left( x_{2}\right) +\beta k\left( x_{2}\right) \right] +x_{2}\left[ 
\frac{\partial f\left( x_{2}\right) }{\partial x}+\beta \frac{\partial
k\left( x_{2}\right) }{\partial x}\right]  &=&0 \\
&\Leftrightarrow & \\
\left[ \frac{a_{f}-x_{2}^{\lambda }}{\tau _{f}}+\beta \frac{%
a_{k}-x_{2}^{\lambda }}{\tau _{k}}\right] -\lambda \left[ \frac{%
x_{2}^{\lambda }}{\tau _{f}}+\beta \frac{x_{2}^{\lambda }}{\tau _{k}}\right]
&=&0 \\
&\Leftrightarrow & \\
\frac{a_{f}}{\tau _{f}}+\beta \frac{a_{k}}{\tau _{k}}-\frac{x_{2}^{\lambda }%
}{\tau _{f}}-\beta \frac{x_{2}^{\lambda }}{\tau _{k}} &=&\lambda \frac{%
x_{2}^{\lambda }}{\tau _{f}}+\lambda \beta \frac{x_{2}^{\lambda }}{\tau _{k}}
\\
&\Leftrightarrow & \\
\left( \lambda \frac{\tau _{k}}{\tau _{f}\tau _{k}}+\frac{\tau _{k}}{\tau
_{f}\tau _{k}}+\lambda \beta \frac{\tau _{f}}{\tau _{f}\tau _{k}}+\beta 
\frac{\tau _{f}}{\tau _{f}\tau _{k}}\right) x_{2}^{\lambda } &=&\frac{%
a_{f}\tau _{k}}{\tau _{f}\tau _{k}}+\beta \frac{a_{k}\tau _{f}}{\tau
_{f}\tau _{k}} \\
&\Leftrightarrow & \\
\left( \lambda \tau _{k}+\tau _{k}+\lambda \beta \tau _{f}+\beta \tau
_{f}\right) x_{2}^{\lambda } &=&a_{f}\tau _{k}+\beta a_{k}\tau _{f} \\
&\Leftrightarrow & \\
x_{2}^{\ast } &=&\left( \frac{1}{\lambda +1}\frac{a_{f}\tau _{k}+\beta
a_{k}\tau _{f}}{\tau _{k}+\beta \tau _{f}}\right) ^{\frac{1}{\lambda }}
\end{eqnarray*}

The expression for the FOC for the first serve optimal strategy reads as 
\begin{eqnarray*}
\left[ f\left( x_{1}\right) +\beta k\left( x_{1}\right) \right] +x_{1}\left[ 
\frac{\partial f\left( x_{1}\right) }{\partial x}+\beta \frac{\partial
k\left( x_{1}\right) }{\partial x}\right]  &=&x_{2}\tilde{y}\left(
x_{2}\right)  \\
&\Leftrightarrow & \\
\left[ \frac{a_{f}-x_{1}^{\lambda }}{\tau _{f}}+\beta \frac{%
a_{k}-x_{1}^{\lambda }}{\tau _{k}}\right] -\lambda \left[ \frac{%
x_{1}^{\lambda }}{\tau _{f}}+\beta \frac{x_{1}^{\lambda }}{\tau _{k}}\right]
&=&x_{2}\left( \frac{a_{f}-x_{2}^{\lambda }}{\tau _{f}}+\beta \frac{%
a_{k}-x_{2}^{\lambda }}{\tau _{k}}\right)  \\
&\Leftrightarrow & \\
\left[ \frac{a_{f}}{\tau _{f}}-\frac{x_{1}^{\lambda }}{\tau _{f}}+\beta 
\frac{a_{k}}{\tau _{k}}-\beta \frac{x_{1}^{\lambda }}{\tau _{k}}\right] -%
\left[ \lambda \frac{x_{1}^{\lambda }}{\tau _{f}}+\lambda \beta \frac{%
x_{1}^{\lambda }}{\tau _{k}}\right]  &=&x_{2}\left( \frac{%
a_{f}-x_{2}^{\lambda }}{\tau _{f}}+\beta \frac{a_{k}-x_{2}^{\lambda }}{\tau
_{k}}\right)  \\
&\Leftrightarrow & \\
\left[ \frac{a_{f}\tau _{k}}{\tau _{f}\tau _{k}}-\frac{\tau
_{k}x_{1}^{\lambda }}{\tau _{f}\tau _{k}}+\beta \frac{a_{k}\tau _{f}}{\tau
_{f}\tau _{k}}-\beta \frac{\tau _{f}x_{1}^{\lambda }}{\tau _{f}\tau _{k}}%
\right] -\left[ \lambda \frac{\tau _{k}x_{1}^{\lambda }}{\tau _{f}\tau _{k}}%
+\lambda \beta \frac{\tau _{f}x_{1}^{\lambda }}{\tau _{f}\tau _{k}}\right] 
&=&x_{2}\left( \frac{a_{f}\tau _{k}-\tau _{k}x_{2}^{\lambda }}{\tau _{f}\tau
_{k}}+\beta \frac{a_{k}\tau _{f}-\tau _{f}x_{2}^{\lambda }}{\tau _{f}\tau
_{k}}\right)  \\
&\Leftrightarrow & \\
\left( \tau _{k}+\beta \tau _{f}+\lambda \tau _{k}+\lambda \beta \tau
_{f}\right) x_{1}^{\lambda } &=&a_{f}\tau _{k}+\beta a_{k}\tau _{f} \\
&&-x_{2}\left( a_{f}\tau _{k}-\tau _{k}x_{2}^{\lambda }+\beta a_{k}\tau
_{f}-\beta \tau _{f}x_{2}^{\lambda }\right)  \\
&\Leftrightarrow & \\
\left( \frac{a_{f}\tau _{k}+\beta a_{k}\tau _{f}-x_{2}\left( a_{f}\tau
_{k}+\beta a_{k}\tau _{f}-\left( \tau _{k}+\beta \tau _{f}\right)
x_{2}^{\lambda }\right) }{\left( 1+\lambda \right) \left( \tau _{k}+\beta
\tau _{f}\right) }\right) ^{\frac{1}{\lambda }} &=&x_{1}
\end{eqnarray*}

The expression for the Hessian of the expected utility at optimum reads as%
\begin{equation*}
\nabla ^{2}E\left( x_{1}^{\ast },x_{2}^{\ast }\right) =\left( 
\begin{array}{cc}
-\lambda \left( 1+\lambda \right) \left( x_{1}^{\ast }\right) ^{\lambda
-1}\left( \frac{1}{\tau _{f}}+\frac{\beta }{\tau _{k}}\right) & 0 \\ 
0 & -\left( 1-x_{1}\right) \lambda \left( 1+\lambda \right) \left(
x_{2}^{\ast }\right) ^{\lambda -1}\left( \frac{1}{\tau _{f}}+\frac{\beta }{%
\tau _{k}}\right)%
\end{array}%
\right)
\end{equation*}

Interestingly, as shown below, the optimal service strategy of the player
only depends on the relative slope $\frac{\tau _{f}}{\tau _{k}}$ as indeed
both expressions rewrite as%
\begin{eqnarray*}
x_{2}^{\ast \lambda } &=&\frac{a_{f}+\beta a_{k}\frac{\tau _{f}}{\tau _{k}}}{%
\left( 1+\lambda \right) \left( 1+\beta \frac{\tau _{f}}{\tau _{k}}\right) }
\\
x_{1}^{\ast \lambda } &=&\frac{a_{f}+\beta a_{k}\frac{\tau _{f}}{\tau _{k}}%
-x_{2}^{\ast }\left( a_{f}+\beta a_{k}\frac{\tau _{f}}{\tau _{k}}-\left(
1+\beta \frac{\tau _{f}}{\tau _{k}}\right) x_{2}^{\ast \lambda }\right) }{%
\left( 1+\lambda \right) \left( 1+\beta \frac{\tau _{f}}{\tau _{k}}\right) }
\end{eqnarray*}

and,

\begin{eqnarray*}
x_{2}^{\ast \lambda } &=&\frac{a_{f}\tau _{k}+\beta a_{k}\tau _{f}}{\left(
1+\lambda \right) \left( \tau _{k}+\beta \tau _{f}\right) } \\
&=&\frac{a_{f}+\beta a_{k}\frac{\tau _{f}}{\tau _{k}}}{\left( 1+\lambda
\right) \left( 1+\beta \frac{\tau _{f}}{\tau _{k}}\right) }
\end{eqnarray*}%
and for $x_{1}^{\ast \lambda }$:%
\begin{eqnarray*}
x_{1}^{\ast \lambda } &=&\frac{a_{f}\tau _{k}+\beta a_{k}\tau
_{f}-x_{2}^{\ast }\left( a_{f}\tau _{k}+\beta a_{k}\tau _{f}-\left( \tau
_{k}+\beta \tau _{f}\right) x_{2}^{\ast \lambda }\right) }{\left( 1+\lambda
\right) \left( \tau _{k}+\beta \tau _{f}\right) } \\
&=&\frac{a_{f}+\beta a_{k}\frac{\tau _{f}}{\tau _{k}}-x_{2}^{\ast }\left(
a_{f}+\beta a_{k}\frac{\tau _{f}}{\tau _{k}}-\left( 1+\beta \frac{\tau _{f}}{%
\tau _{k}}\right) x_{2}^{\ast \lambda }\right) }{\left( 1+\lambda \right)
\left( 1+\beta \frac{\tau _{f}}{\tau _{k}}\right) }
\end{eqnarray*}

\subsection{Comparative statics\label{CompStatics}}

Note that for $\lambda >1$, as estimated in the data, the signs of the
derivatives $\frac{\partial x_{j}^{\ast }}{\partial \beta }$ for $j=1,2$,
agree with those of $\frac{\partial x_{j}^{\ast \lambda }}{\partial \beta }$
which are simpler to derive. Hence, w.l.o.g., we produce comparative statics
on the preference parameter $\beta $ on the optimal strategy of the server
using $x_{j}^{\ast \lambda }$ rather than $x_{j}^{\ast }$.

Consider a small change in the preference parameter for winning multi-shot
rallies on the optimal probability of serving first and second serves in,
one has%
\begin{equation*}
\frac{\partial x_{1}^{\ast \lambda }}{\partial \beta }=-\tau _{f}\tau _{k}%
\frac{a_{f}-a_{k}}{\left( 1+\lambda \right) \left( \tau _{k}+\beta \tau
_{f}\right) ^{2}}\left( 1-x_{2}^{\ast }\right)
\end{equation*}%
so that $sign\left( \frac{\partial x_{1}^{\ast \lambda }}{\partial \beta }%
\right) =-sign\left( \tau _{k}\left( a_{f}-a_{k}\right) \right) $ since $%
x_{2}^{\ast }<1$ and

\begin{equation*}
\frac{\partial x_{2}^{\ast \lambda }}{\partial \beta }=\frac{\partial \frac{%
a_{f}+\beta a_{k}\frac{\tau _{f}}{\tau _{k}}}{\left( 1+\lambda \right)
\left( 1+\beta \frac{\tau _{f}}{\tau _{k}}\right) }}{\partial \beta }=-\tau
_{f}\tau _{k}\frac{a_{f}-a_{k}}{\left( 1+\lambda \right) \left( 1+\beta \tau
_{f}\right) ^{2}}
\end{equation*}%
so that $sign\left( \frac{\partial x_{2}^{\ast \lambda }}{\partial \beta }%
\right) =-sign\left( \tau _{k}\left( a_{f}-a_{k}\right) \right) =sign\left( 
\frac{\partial x_{2}^{\ast }}{\partial \beta }\right) $.

Hence, the optimal probability of serving first or second serves in, is
increasing with the preference parameter put on winning multi-shot points if 
$\tau _{k}\left( a_{f}-a_{k}\right) <0$ and vise versa.\footnote{%
In our data, all but one players meet the condition $\tau _{k}\left(
a_{f}-a_{k}\right) <0$.}

\begin{proposition}
\label{Proposition:CompStat}If $\tau _{k}\left( a_{f}-a_{k}\right) <0$
(resp. $\tau _{k}\left( a_{f}-a_{k}\right) >0$), the optimal probability of
serving first or second serves in is increasing (resp. decreasing) with the
preference for winning multi-shot rallies.
\end{proposition}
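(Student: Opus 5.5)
The plan is to work with the closed-form optimal strategy from the parametric first-order conditions (Online Appendix \ref{FOCs:Parametric}). I will first compute the sign of $\partial x_{2}^{\ast\lambda}/\partial\beta$. I will then get $x_{1}^{\ast}$ through the relation $x_{1}^{\ast\lambda}=x_{2}^{\ast\lambda}\left(1-\frac{\lambda}{1+\lambda}x_{2}^{\ast}\right)$, so that the first serve simply inherits the direction of movement of the second serve.

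Throughout, the skill parameters $(\lambda,a_{f},\tau_{f},a_{k},\tau_{k})$ are held fixed and only $\beta$ varies. I also assume the solution stays interior, $0<x_{1}^{\ast}<x_{2}^{\ast}<1$. Condition \ref{Assumption:Parametric}.ii guarantees $1+\beta\tau_{f}/\tau_{k}>0$, hence $\tau_{k}+\beta\tau_{f}\neq 0$, so the expressions below are smooth in $\beta$.

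\textbf{Second serve.} Write
\begin{equation*}
x_{2}^{\ast\lambda}=\frac{a_{f}\tau_{k}+\beta a_{k}\tau_{f}}{(1+\lambda)(\tau_{k}+\beta\tau_{f})}.
\end{equation*}
Differentiating with respect to $\beta$, the numerator of the quotient-rule derivative is
\begin{equation*}
a_{k}\tau_{f}(\tau_{k}+\beta\tau_{f})-\tau_{f}(a_{f}\tau_{k}+\beta a_{k}\tau_{f})=-\tau_{f}\tau_{k}(a_{f}-a_{k}).
\end{equation*}
Hence
\begin{equation*}
\frac{\partial x_{2}^{\ast\lambda}}{\partial\beta}=-\frac{\tau_{f}\tau_{k}(a_{f}-a_{k})}{(1+\lambda)(\tau_{k}+\beta\tau_{f})^{2}}.
\end{equation*}
Since $\tau_{f}>0$ and $\lambda>1$, its sign is $-\mathrm{sign}\left(\tau_{k}(a_{f}-a_{k})\right)$. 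Because $z\mapsto z^{1/\lambda}$ is strictly increasing on $z>0$, $x_{2}^{\ast}$ moves in the same direction as $x_{2}^{\ast\lambda}$.

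\textbf{First serve.} Rather than differentiating the long expression for $x_{1}^{\ast\lambda}$ directly, I will use $x_{1}^{\ast\lambda}=h(x_{2}^{\ast})$ with $h(x)=x^{\lambda}-\frac{\lambda}{1+\lambda}x^{\lambda+1}$. Its derivative is
\begin{equation*}
h^{\prime}(x)=\lambda x^{\lambda-1}(1-x),
\end{equation*}
which is strictly positive on $(0,1)$. By the chain rule, $\partial x_{1}^{\ast\lambda}/\partial\beta=h^{\prime}(x_{2}^{\ast})\,\partial x_{2}^{\ast}/\partial\beta$, so this has the same sign as the second-serve derivative. The same monotone-transformation argument then transfers the sign from $x_{1}^{\ast\lambda}$ to $x_{1}^{\ast}$.

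Combining the two steps: if $\tau_{k}(a_{f}-a_{k})<0$, both serve percentages increase with $\beta$, and if $\tau_{k}(a_{f}-a_{k})>0$, both decrease. This is the statement of Proposition \ref{Proposition:CompStat}.

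The main obstacle is bookkeeping rather than substance. First, the first-serve derivative displayed just above the proposition should be cross-checked against the chain-rule route; I will rely on the chain-rule version, which makes the factor $(1-x_{2}^{\ast})>0$ transparent. Second, I must state explicitly that interiority and $\tau_{k}+\beta\tau_{f}\neq0$ hold along the range of $\beta$ considered, so that the formulas remain valid as $\beta$ varies.
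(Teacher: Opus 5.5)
Your proof is correct and follows the paper's route. You differentiate the closed form of $x_2^{\ast\lambda}$ in $\beta$ to get $-\tau_f\tau_k(a_f-a_k)/[(1+\lambda)(\tau_k+\beta\tau_f)^2]$, and your chain-rule step through $h$ gives $\partial x_1^{\ast\lambda}/\partial\beta=(1-x_2^{\ast})\,\partial x_2^{\ast\lambda}/\partial\beta$, which is exactly the paper's displayed first-serve derivative, so the cross-check you flagged comes out consistent.
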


Note that, as can already be anticipated from the above comparative statics,
whenever $a_{f}=a_{k}$, the optimal serving strategy does not depend on the
preference parameter for winning multi-shot points of the player. This is
comfirmed by looking at this strategy for $a_{f}=a_{k}=a$. One indeed has%
\begin{equation*}
\left( x_{1}^{\ast \lambda },x_{2}^{\ast \lambda }\right) =\left( a\frac{%
1-x_{2}^{\ast }\left( 1-\frac{1}{a}x_{2}^{\ast \lambda }\right) }{1+\lambda }%
,a\frac{1}{1+\lambda }\right) .
\end{equation*}

These expressions do not depend on $\beta $.

\subsection{Sum of power functions with same curvature\label{SumPower}}

We aim to show that the sum of two power functions with same curvature
parameter $\lambda $, yield a power function with curvature parameter $%
\lambda $. Let $f\left( x\right) =\frac{a_{f}-x^{\lambda }}{\tau _{f}}$ and $%
k\left( x\right) =\frac{a_{k}-x^{\lambda }}{\tau _{k}}$. Then one has%
\begin{eqnarray*}
f\left( x\right) +k\left( x\right)  &=&\frac{a_{f}-x^{\lambda }}{\tau _{f}}+%
\frac{a_{k}-x^{\lambda }}{\tau _{k}} \\
&=&\frac{a_{f}\tau _{k}+a_{k}\tau _{f}}{\tau _{f}\tau _{k}}-\frac{\tau
_{f}+\tau _{k}}{\tau _{f}\tau _{k}}x^{\lambda } \\
&=&\frac{\frac{a_{f}\tau _{k}+a_{k}\tau _{f}}{\tau _{f}+\tau _{k}}}{\frac{%
\tau _{f}\tau _{k}}{\tau _{f}+\tau _{k}}}-\frac{1}{\frac{\tau _{f}\tau _{k}}{%
\tau _{f}+\tau _{k}}}x^{\lambda }
\end{eqnarray*}%
so that letting $\tau =\frac{\tau _{f}\tau _{k}}{\tau _{f}+\tau _{k}}$ and $%
a=\frac{a_{f}\tau _{k}+a_{k}\tau _{f}}{\tau _{f}+\tau _{k}}$ one indeed has%
\begin{equation*}
y\left( x\right) =f\left( x\right) +k\left( x\right) =\frac{a_{f}-x^{\lambda
}}{\tau _{f}}+\frac{a_{k}-x^{\lambda }}{\tau _{k}}=\frac{a-x^{\lambda }}{%
\tau }
\end{equation*}%
and 
\begin{eqnarray*}
\frac{\partial y\left( x\right) }{\partial x} &=&\frac{\partial f\left(
x\right) }{\partial x}+\frac{\partial k\left( x\right) }{\partial x}%
=-\lambda \left( \frac{x^{\lambda -1}}{\tau _{f}}+\frac{x^{\lambda -1}}{\tau
_{k}}\right)  \\
&=&-\lambda x^{\lambda -1}\left( \frac{\tau _{k}+\tau _{j}}{\tau _{f}\tau
_{k}}\right) =-\lambda \frac{x^{\lambda -1}}{\tau }
\end{eqnarray*}%
with in particular%
\begin{equation*}
x\frac{\partial y\left( x\right) }{\partial x}=-\lambda \frac{x^{\lambda }}{%
\tau }.
\end{equation*}

\subsection{Derivation of expression of $\protect\lambda ^{(t+1)}$\label%
{ExpressionLambda}}

Note first that the slopes, constants and preference parameter have the
following expressions in terms of $\lambda $ and $z_{1i}$ and $z_{2i}$ where
we recall that%
\begin{equation*}
z_{1i}=\left( x_{1i}\right) ^{\lambda },\text{ }z_{2i}=\left( x_{2i}\right)
^{\lambda }.
\end{equation*}

The slopes parameters read as%
\begin{equation*}
\tau _{fi}=-\frac{z_{1i}-z_{2i}}{f_{1i}-f_{2i}},\text{ }\tau _{ki}=-\frac{%
z_{1i}-z_{2i}}{k_{1i}-k_{2i}}.
\end{equation*}%
\bigskip 

The constants read as

\begin{equation*}
a_{fi}=-\frac{z_{1i}-z_{2i}}{f_{1i}-f_{2i}}f_{1i}+z_{1i},\text{ }a_{ki}=-%
\frac{z_{1i}-z_{2i}}{k_{1i}-k_{2i}}k_{1i}+z_{1i},
\end{equation*}%
and the attention parameter as%
\begin{equation*}
\beta _{i}=-\frac{\tau _{ki}}{\tau _{fi}}\frac{a_{fi}-z_{2i}\left( 1+\lambda
\right) }{a_{ki}-z_{2i}\left( 1+\lambda \right) }.
\end{equation*}

Substituting the epressions of the previous terms into $\beta _{i}$ yields 
\begin{eqnarray*}
\beta _{i} &=&-\frac{\tau _{ki}}{\tau _{fi}}\frac{\tau
_{fi}f_{2i}+z_{2i}\left( 1-1-\lambda \right) }{\tau _{ki}k_{2i}+z_{2i}\left(
1-1-\lambda \right) } \\
&=&-\frac{f_{1i}-f_{2i}}{k_{1i}-k_{2i}}\frac{\left( -\frac{z_{1i}-z_{2i}}{%
f_{1i}-f_{2i}}\right) f_{2i}-\lambda z_{2i}}{\left( -\frac{z_{1i}-z_{2i}}{%
k_{1i}-k_{2i}}\right) k_{2i}-\lambda z_{2i}} \\
&=&-\frac{\left( z_{1i}-z_{2i}\right) f_{2i}+\lambda z_{2i}\left(
f_{1i}-f_{2i}\right) }{\left( z_{1i}-z_{2i}\right) k_{2i}+\lambda
z_{2i}\left( k_{1i}-k_{2i}\right) } \\
&=&-\frac{z_{1i}f_{2i}+\lambda z_{2i}f_{1i}-z_{2i}f_{2i}\left( 1+\lambda
\right) }{z_{1i}k_{2i}+\lambda z_{2i}k_{1i}-z_{2i}k_{2i}\left( 1+\lambda
\right) } \\
&=&-\frac{f_{2i}}{k_{2i}}\frac{\Delta z_{i}+\lambda \Delta f_{i}}{\Delta
z_{i}+\lambda \Delta k_{i}}
\end{eqnarray*}%
where $a_{fi}=\tau _{fi}f_{2i}+z_{2i}$, $a_{ki}=\tau _{ki}k_{2i}+z_{2i}$, $%
\frac{\tau _{fi}}{\tau _{ki}}=\frac{k_{1i}-k_{2i}}{f_{1i}-f_{2i}}$, $\tau
_{fi}=-\frac{z_{1i}-z_{2i}}{f_{1i}-f_{2i}}$, $\tau _{ki}=-\frac{z_{1i}-z_{2i}%
}{k_{1i}-k_{2i}}$ and $\Delta l_{i}=\frac{l_{1i}-l_{2i}}{l_{2i}}$ $\forall
l=x,f,k$.

Note also that rearranging the expression of $\beta _{i}$ offers the
following interesting expression that is used below%
\begin{equation*}
\beta _{i}\frac{\tau _{fi}}{\tau _{ki}}=-\frac{a_{fi}-z_{2i}\left( 1+\lambda
\right) }{a_{ki}-z_{2i}\left( 1+\lambda \right) }.
\end{equation*}

Recall now the expression of $\lambda _{i}$ and rearranging yields

\begin{eqnarray*}
\lambda _{i} &=&\frac{a_{fi}+a_{ki}\beta _{i}\frac{\tau _{fi}}{\tau _{ki}}}{%
\left( 1+\beta _{i}\frac{\tau _{fi}}{\tau _{ki}}\right) z_{1i}}\left(
1-x_{2i}\right) +x_{2i}\frac{z_{2i}}{z_{1i}}-1 \\
&=&\frac{a_{fi}+a_{ki}\beta _{i}\frac{\tau _{fi}}{\tau _{ki}}}{1+\beta _{i}%
\frac{\tau _{fi}}{\tau _{ki}}}\frac{1-x_{2i}}{z_{1i}}+x_{2i}\frac{z_{2i}}{%
z_{1i}}-1=\Gamma \frac{1-x_{2i}}{z_{1i}}+x_{2i}\frac{z_{2i}}{z_{1i}}-1
\end{eqnarray*}

where $\Gamma =\frac{a_{fi}+a_{ki}\beta _{i}\frac{\tau _{fi}}{\tau _{ki}}}{%
1+\beta _{i}\frac{\tau _{fi}}{\tau _{ki}}}$.

Let us now rewrite the term $\Gamma $ by using the interesting expression
derived above for $\beta _{i}\frac{\tau _{fi}}{\tau _{ki}}$ so that%
\begin{eqnarray*}
\Gamma  &=&\frac{a_{fi}\left( a_{ki}-z_{2i}\left( 1+\lambda \right) \right)
-a_{ki}\left( a_{fi}-z_{2i}\left( 1+\lambda \right) \right) }{%
a_{ki}-z_{2i}\left( 1+\lambda \right) -\left( a_{fi}-z_{2i}\left( 1+\lambda
\right) \right) } \\
&=&\frac{a_{fi}a_{ki}-a_{fi}a_{ki}+a_{ki}z_{2i}\left( 1+\lambda \right)
-a_{fi}z_{2i}\left( 1+\lambda \right) }{a_{ki}-a_{fi}+z_{2i}\left( 1+\lambda
\right) -z_{2i}\left( 1+\lambda \right) } \\
&=&\frac{a_{ki}-a_{fi}}{a_{ki}-a_{fi}}z_{2i}\left( 1+\lambda \right)
=z_{2i}\left( 1+\lambda \right) 
\end{eqnarray*}

Substituting this into the expression of $\lambda _{i}$, it follows that 
\begin{eqnarray*}
\lambda _{i} &=&\Gamma \frac{1-x_{2i}}{z_{1i}}+x_{2i}\frac{z_{2i}}{z_{1i}}-1
\\
&=&z_{2i}\left( 1+\lambda \right) \frac{1-x_{2i}}{z_{1i}}+x_{2i}\frac{z_{2i}%
}{z_{1i}}-1 \\
&=&\frac{z_{2i}}{z_{1i}}\left( \left( 1+\lambda \right) \left(
1-x_{2i}\right) +x_{2i}\right) -1 \\
&=&\frac{z_{2i}}{z_{1i}}\left( 1+\lambda \left( 1-x_{2i}\right) \right) -1
\end{eqnarray*}

Hence the compact expression of $\lambda _{i}$ as%
\begin{equation*}
\lambda _{i}=\left( \frac{x_{2i}}{x_{1i}}\right) ^{\lambda }\left( 1+\lambda
\left( 1-x_{2i}\right) \right) -1.
\end{equation*}

Note that $\Lambda \left( \lambda \right) >0$ since%
\begin{eqnarray*}
\left( \frac{x_{2i}}{x_{1i}}\right) ^{\lambda }\left( 1+\lambda \left(
1-x_{2i}\right) \right) -1 &>&0 \\
&\Leftrightarrow & \\
\left( \frac{x_{2i}}{x_{1i}}\right) ^{\lambda }\left( 1+\lambda \left(
1-x_{2i}\right) \right)  &>&1\Leftrightarrow \left( \frac{x_{2i}}{x_{1i}}%
\right) ^{\lambda }>\frac{1}{1+\lambda \left( 1-x_{2i}\right) }
\end{eqnarray*}%
since $\left( \frac{x_{2i}}{x_{1i}}\right) ^{\lambda }>1$ for $x_{2i}>x_{1i}$
and $\frac{1}{1+\lambda \left( 1-x_{2i}\right) }<1$ since 
\begin{equation*}
\frac{1}{1+\lambda \left( 1-x_{2i}\right) }<1\Leftrightarrow 1<1+\lambda
\left( 1-x_{2i}\right) \Leftrightarrow 0<\lambda \left( 1-x_{2i}\right) 
\end{equation*}%
which obtains from $\lambda >0$ and $0<x_{2i}<1$.

\newpage

\subsection{Tree representation of the statistics for each player $i$.\label%
{Tree}}

In Figure (\ref{fig:tree}) below, each node indicates the mass of points
observed for player $i$, and each edge represents a probability of success
or failure, conditional on having reached the previous node. At the staring
node $N_{i}$, the two edges represent the probability that the first serve
is in or out. If the service is in, we move to node $n_{x_{1}i}$ where the
two edges indicate the probability that the first serve is returned in or
not, where the former occurs with probability $f_{1i}$ and the latter with
probability $1-f_{1i}$. The terminal nodes that are accessed with horizontal
edges, correspond to the masses of points won by the server, and the paths
leading to them are highlighted in bold.

\begin{figure}[htbp]
\caption{Tree representation of tennis statistics for player $i$.}
\label{fig:tree}\makebox[\textwidth][c]{
\begin{tikzpicture}[>=latex]

\def\L{5cm}   
\def\A{45}    

\tikzset{
  treenode/.style={align=center, draw, rectangle, rounded corners, inner sep=2pt}
}

\node[treenode] (Ni) at (0,0) {$N_i$};

\node[treenode] (nx1)  at ($(Ni)+(\A:\L)$) {$n_{x_{1}i}$};
\node[treenode] (Nmx1) at ($(Ni)+(-\A:\L)$) {$N_i-n_{x_{1}i}$};

\draw[->, line width=1.5pt] (Ni) -- (nx1) node[midway, sloped, above] {$x_{1i}$};
\draw[->, line width=1.5pt] (Ni) -- (Nmx1) node[midway, sloped, above] {$1-x_{1i}$};

\node[treenode] (nf1) at ($(nx1)+(\L,0)$) {$n_{f_{1}i}$};             
\node[treenode] (nx1mf1) at ($(nx1)+(\A:\L)$) {$n_{x_{1}i}-n_{f_{1}i}$}; 

\draw[->, line width=1.5pt] (nx1) -- (nf1) node[midway, above] {$f_{1i}$};
\draw[->, line width=1.5pt] (nx1) -- (nx1mf1) node[midway, sloped, above] {$1-f_{1i}$};

\node[treenode] (nk1) at ($(nx1mf1)+(\L,0)$) {$n_{k_{1}i}$};             
\node[treenode] (nx1mf1mk1) at ($(nx1mf1)+(\A:\L)$) {$n_{x_{1}i}-n_{f_{1}i}-n_{k_{1}i}$}; 

\draw[->, line width=1.5pt] (nx1mf1) -- (nk1) node[midway, above] {$\dfrac{k_{1i}}{1-f_{1i}}$};             
\draw[->] (nx1mf1) -- (nx1mf1mk1) node[midway, sloped, above] {$1-\dfrac{k_{1i}}{1-f_{1i}}$};  

\node[treenode] (nx2) at ($(Nmx1)+(\A:\L)$) {$n_{x_{2}i}$};          
\node[treenode] (Nmx1mx2) at ($(Nmx1)+(-\A:\L)$) {$N_i-n_{x_{1}i}-n_{x_{2}i}$}; 

\draw[->, line width=1.5pt] (Nmx1) -- (nx2) node[midway, sloped, above] {$x_{2i}$};
\draw[->] (Nmx1) -- (Nmx1mx2) node[midway, sloped, above] {$1-x_{2i}$};

\node[treenode] (nf2) at ($(nx2)+(\L,0)$) {$n_{f_{2}i}$};            
\node[treenode] (nx2mf2) at ($(nx2)+(-\A:\L)$) {$n_{x_{2}i}-n_{f_{2}i}$};

\draw[->, line width=1.5pt] (nx2) -- (nf2) node[midway, above] {$f_{2i}$};
\draw[->, line width=1.5pt] (nx2) -- (nx2mf2) node[midway, sloped, above] {$1-f_{2i}$};

\node[treenode] (nk2) at ($(nx2mf2)+(\L,0)$) {$n_{k_{2}i}$};
\node[treenode] (nx2mf2mk2) at ($(nx2mf2)+(-\A:\L)$) {$n_{x_{2}i}-n_{f_{2}i}-n_{k_{2}i}$};

\draw[->, line width=1.5pt] (nx2mf2) -- (nk2) node[midway, above] {$\dfrac{k_{2i}}{1-f_{2i}}$};
\draw[->] (nx2mf2) -- (nx2mf2mk2) node[midway, sloped, above] {$1-\dfrac{k_{2i}}{1-f_{2i}}$};

\end{tikzpicture}
} 
\end{figure}

\newpage

\subsection{Model with softmax parametrization\label{Section:Softmax}}

Suppose that we replace Condition (\ref{Assumption:Elasticity}) used to
parametrize the elasticity of the marginal probabilities $f^{\prime }\left(
x\right) $ and $k^{\prime }\left( x\right) $, by the following condition:

\begin{condition}
\label{Assumption:Elasticity2}$x\frac{f^{\prime \prime }\left( x\right) }{%
f^{\prime }\left( x\right) }=x\frac{k^{\prime \prime }\left( x\right) }{%
k^{\prime }\left( x\right) }=\lambda x>0$.
\end{condition}

This condition tells us that the elasticity of the marginal conditional
probabilities $f^{\prime }$ and $k^{\prime }$ are not constant anymore but
rather linear. In this case, the associated shape for the function $f$ for
instance is%
\begin{equation*}
f\left( x\right) =a_{f}+\tau _{f}\exp \left( \lambda x\right) ,
\end{equation*}%
with%
\begin{equation*}
f^{\prime }\left( x\right) =\lambda \tau _{f}\exp \left( \lambda x\right) ,%
\text{ }f^{\prime \prime }\left( x\right) =\lambda ^{2}\tau _{f}\exp \left(
\lambda x\right) 
\end{equation*}%
where $\tau _{f}=\frac{f_{2}-f_{1}}{z_{2}-z_{1}}$ and $a_{f}=f_{1}-\tau
_{f}z_{1}$ and $z_{1}=\exp \left( \lambda x_{1}\right) $ and $z_{2}=\exp
\left( \lambda x_{2}\right) $.\footnote{%
Indeed one has 
\begin{eqnarray*}
x\frac{f^{\prime \prime }\left( x\right) }{f^{\prime }\left( x\right) } &=&x%
\frac{\lambda ^{2}\tau _{f}\exp \left( \lambda x\right) }{\lambda \tau
_{f}\exp \left( \lambda x\right) } \\
&=&x\lambda .
\end{eqnarray*}%
}

With the same logic for $k(x)$ and plugging these expressions into the
solution for preference parameter $\beta $, one has

\begin{eqnarray*}
\beta  &=&-\frac{f_{2}+x_{2}f^{\prime }\left( x_{2}\right) }{%
k_{2}+x_{2}k^{\prime }\left( x_{2}\right) }=-\frac{f_{2}+x_{2}\lambda \tau
_{f}\exp \left( \lambda x_{2}\right) }{k_{2}+x_{2}\lambda \tau _{k}\exp
\left( \lambda x_{2}\right) } \\
&=&-\frac{f_{2}+x_{2}\lambda \frac{f_{2}-f_{1}}{z_{2}-z_{1}}z_{2}}{%
k_{2}+x_{2}\lambda \frac{k_{2}-k_{1}}{z_{2}-z_{1}}z_{2}}=-\frac{f_{2}\left(
z_{2}-z_{1}\right) +\lambda x_{2}z_{2}\left( f_{2}-f_{1}\right) }{%
k_{2}\left( z_{2}-z_{1}\right) +\lambda x_{2}z_{2}\left( k_{2}-k_{1}\right) }%
.
\end{eqnarray*}

Similarily, using the FOC for the first serve strategy one obtains an
expression for the curvature parameter $\lambda $ as follows.

Step 1:

Remember that the solution for the curvature parameter obtains from:%
\begin{equation*}
\left[ f\left( x_{1}\right) +\beta k\left( x_{1}\right) \right] +x_{1}\left[
f^{\prime }\left( x_{1}\right) +\beta k^{\prime }\left( x_{1}\right) \right]
=x_{2}\left[ f\left( x_{2}\right) +\beta k\left( x_{2}\right) \right]
\end{equation*}%
which given the parametric assumptions become%
\begin{equation*}
\left[ f_{1}+\beta k_{1}\right] \left( \exp \left( \lambda \left(
x_{2}-x_{1}\right) \right) -1\right) +\lambda x_{1}\left[ f_{2}-f_{1}+\beta
\left( k_{2}-k_{1}\right) \right] =x_{2}\left[ f_{2}+\beta k_{2}\right]
\left( \exp \left( \lambda \left( x_{2}-x_{1}\right) \right) -1\right) .
\end{equation*}

Using the notation $z_{j}=\exp \left( \lambda x_{j}\right) $ this rewrite as%
\begin{equation*}
\left[ f_{1}+\beta k_{1}\right] \left( \frac{z_{2}}{z_{1}}-1\right) +\lambda
x_{1}\left[ f_{2}-f_{1}+\beta \left( k_{2}-k_{1}\right) \right] =x_{2}\left[
f_{2}+\beta k_{2}\right] \left( \frac{z_{2}}{z_{1}}-1\right) .
\end{equation*}

Note that this writes compactly as

\begin{equation*}
A\left( E-1\right) +\lambda B=0
\end{equation*}%
where $E=\frac{z_{2}}{z_{1}}$, $A=f_{1}+\beta k_{1}-x_{2}\left[ f_{2}+\beta
k_{2}\right] $ and $B=x_{1}\left[ f_{2}-f_{1}+\beta \left(
k_{2}-k_{1}\right) \right] $.

Step 2: Note that we can rewrite the preference parameter more compactly too
as 
\begin{equation*}
\beta =-\frac{f_{2}\left( z_{2}-z_{1}\right) +\lambda x_{2}z_{2}\left(
f_{2}-f_{1}\right) }{k_{2}\left( z_{2}-z_{1}\right) +\lambda
x_{2}z_{2}\left( k_{2}-k_{1}\right) }=-\frac{N}{M}
\end{equation*}

where $N=f_{2}+\lambda \left( f_{2}-f_{1}\right) Q$ and $M=k_{2}+\lambda
\left( k_{2}-k_{1}\right) Q$.

Substituting this last expression of $\beta $ into $A$ and $B$ of the
compact expression of the FOC for $x_{1}$ yields%
\begin{eqnarray*}
A &=&f_{1}-\frac{f_{2}+\lambda \left( f_{2}-f_{1}\right) Q}{k_{2}+\lambda
\left( k_{2}-k_{1}\right) Q}k_{1}-x_{2}\left[ f_{2}-\frac{f_{2}+\lambda
\left( f_{2}-f_{1}\right) Q}{k_{2}+\lambda \left( k_{2}-k_{1}\right) Q}k_{2}%
\right] \\
B &=&x_{1}\left[ f_{2}-f_{1}-\frac{f_{2}+\lambda \left( f_{2}-f_{1}\right) Q%
}{k_{2}+\lambda \left( k_{2}-k_{1}\right) Q}\left( k_{2}-k_{1}\right) \right]
,
\end{eqnarray*}%
where $Q=\frac{x_{2}E}{E-1}$ and which rewrites compactly as%
\begin{eqnarray*}
A &=&\frac{Mf_{1}-Nk_{1}-x_{2}\left[ Mf_{2}-Nk_{2}\right] }{M} \\
B &=&x_{1}\frac{M\left( f_{2}-f_{1}\right) -N\left( k_{2}-k_{1}\right) }{M},
\end{eqnarray*}

Note that%
\begin{eqnarray*}
M\left( f_{2}-f_{1}\right) -N\left( k_{2}-k_{1}\right)  &=&\left(
k_{2}+\lambda \left( k_{2}-k_{1}\right) Q\right) \left( f_{2}-f_{1}\right)
-\left( f_{2}+\lambda \left( f_{2}-f_{1}\right) Q\right) \left(
k_{2}-k_{1}\right)  \\
&=&k_{2}\left( f_{2}-f_{1}\right) -f_{2}\left( k_{2}-k_{1}\right)
=k_{2}f_{2}-k_{2}f_{1}-f_{2}k_{2}+f_{2}k_{1} \\
&=&f_{2}k_{1}-k_{2}f_{1},
\end{eqnarray*}%
so that $B=x_{1}\frac{f_{2}k_{1}-k_{2}f_{1}}{M}$.

Note further that $A=\frac{M\left( f_{1}-x_{2}f_{2}\right) -N\left(
k_{1}-x_{2}k_{2}\right) }{M}$, and after substitution and simple
rearranging, one has

\begin{eqnarray*}
M\left( f_{1}-x_{2}f_{2}\right) &=&k_{2}\left( f_{1}-x_{2}f_{2}\right)
+\left( f_{1}-x_{2}f_{2}\right) \left( k_{2}-k_{1}\right) \lambda Q, \\
N\left( k_{1}-x_{2}k_{2}\right) &=&f_{2}\left( k_{1}-x_{2}k_{2}\right)
+\left( k_{1}-x_{2}k_{2}\right) \left( f_{2}-f_{1}\right) \lambda Q.
\end{eqnarray*}

Hence, the numerator of $A$ reads as%
\begin{eqnarray*}
M\left( f_{1}-x_{2}f_{2}\right) -N\left( k_{1}-x_{2}k_{2}\right) 
&=&k_{2}\left( f_{1}-x_{2}f_{2}\right) -f_{2}\left( k_{1}-x_{2}k_{2}\right) 
\\
&&+\left[ \left( f_{1}-x_{2}f_{2}\right) \left( k_{2}-k_{1}\right) -\left(
k_{1}-x_{2}k_{2}\right) \left( f_{2}-f_{1}\right) \right] \lambda Q \\
&=&\left( f_{1}k_{2}-x_{2}f_{2}k_{2}\right) -\left(
f_{2}k_{1}-x_{2}f_{2}k_{2}\right)  \\
&&+\left[ f_{1}\left( k_{2}-k_{1}\right) -x_{2}f_{2}\left(
k_{2}-k_{1}\right) -\left( k_{1}\left( f_{2}-f_{1}\right) -x_{2}k_{2}\left(
f_{2}-f_{1}\right) \right) \right] \lambda Q \\
&=&f_{1}k_{2}-f_{2}k_{1}+\left[ f_{1}\left( k_{2}-k_{1}\right) -k_{1}\left(
f_{2}-f_{1}\right) +x_{2}f_{2}k_{1}-x_{2}k_{2}f_{1}\right] \lambda Q \\
&=&f_{1}k_{2}-f_{2}k_{1}+\left[ f_{1}k_{2}\left( 1-x_{2}\right)
-f_{2}k_{1}\left( 1-x_{2}\right) \right] \lambda Q \\
&=&\left( f_{1}k_{2}-f_{2}k_{1}\right) \left( 1+\lambda Q\left(
1-x_{2}\right) \right) ,
\end{eqnarray*}

and one obtains $A=\frac{\left( f_{1}k_{2}-f_{2}k_{1}\right) \left(
1+\lambda Q\left( 1-x_{2}\right) \right) }{M}$. Now, substituting these
expressions of $A$ and $B$ back into the compact expression of the FOC for $%
x_{1}$ gives

\begin{eqnarray*}
A\left( E-1\right) +\lambda B &=&0 \\
&\Leftrightarrow & \\
\frac{\left( f_{1}k_{2}-f_{2}k_{1}\right) \left( 1+\lambda Q\left(
1-x_{2}\right) \right) }{M}\left( E-1\right) +\lambda x_{1}\frac{%
f_{2}k_{1}-k_{2}f_{1}}{M} &=&0 \\
&\Leftrightarrow & \\
\left( f_{1}k_{2}-f_{2}k_{1}\right) \left( \left( 1+\lambda Q\left(
1-x_{2}\right) \right) \left( E-1\right) -\lambda x_{1}\right)  &=&0 \\
&\Leftrightarrow & \\
\left( 1+\lambda Q\left( 1-x_{2}\right) \right) \left( E-1\right) 
&=&\lambda x_{1} \\
&\Leftrightarrow & \\
\left( E-1\right)  &=&\lambda \left( x_{1}-\left( E-1\right) Q\left(
1-x_{2}\right) \right)  \\
&\Leftrightarrow & \\
\lambda  &=&\frac{E-1}{x_{1}-\left( E-1\right) Q\left( 1-x_{2}\right) }
\end{eqnarray*}%
provided $f_{1}k_{2}\neq f_{2}k_{1}$.

We hence obtain:%
\begin{eqnarray*}
\lambda  &=&\frac{E-1}{x_{1}-\left( E-1\right) Q\left( 1-x_{2}\right) }=%
\frac{E-1}{x_{1}-\left( E-1\right) \frac{x_{2}E}{E-1}\left( 1-x_{2}\right) }
\\
&=&\frac{\frac{z_{2}}{z_{1}}-1}{x_{1}-x_{2}\left( 1-x_{2}\right) \frac{z_{2}%
}{z_{1}}}=\frac{z_{2}-z_{1}}{x_{1}z_{1}-x_{2}\left( 1-x_{2}\right) z_{2}} \\
&=&\frac{\exp \left( \lambda x_{2}\right) -\exp \left( \lambda x_{1}\right) 
}{x_{1}\exp \left( \lambda x_{1}\right) -x_{2}\left( 1-x_{2}\right) \exp
\left( \lambda x_{2}\right) }
\end{eqnarray*}%
where we made use of the definitions: $Q=\frac{x_{2}E}{E-1}$ and $E=\exp
\left( \lambda \left( x_{2}-x_{1}\right) \right) =\frac{z_{2}}{z_{1}}$.

To summarize, in the softmax model we have%
\begin{eqnarray*}
\lambda &=&\frac{\exp \left( \lambda x_{2}\right) -\exp \left( \lambda
x_{1}\right) }{x_{1}\exp \left( \lambda x_{1}\right) -x_{2}\left(
1-x_{2}\right) \exp \left( \lambda x_{2}\right) }, \\
\beta &=&-\frac{f_{2}\left( \exp \left( \lambda x_{2}\right) -\exp \left(
\lambda x_{1}\right) \right) +\lambda \left( f_{2}-f_{1}\right) x_{2}\exp
\left( \lambda x_{2}\right) }{k_{2}\left( \exp \left( \lambda x_{2}\right)
-\exp \left( \lambda x_{1}\right) \right) +\lambda \left( k_{2}-k_{1}\right)
x_{2}\exp \left( \lambda x_{2}\right) }.
\end{eqnarray*}

Table (\ref{SmallEsti_softmax_CI}) shows the estimates of the parameters of
the model with this parametric specification.

\begin{table}[H]
\centering
\caption{Parameter estimates with 95\% bootstrap confidence intervals for selected players\label{SmallEsti_softmax_CI}}
\resizebox{\textwidth}{!}{%
\begin{tabular}{l|c|c|cc|cc}
\toprule
Player & Salience weight & Curvature & Slope $f$ & Const. $f$ & Slope $k$ & Const. $k$ \\
 & $\delta$ & $\lambda$ & $\tau_f$ & $a_f$ & $\tau_k$ & $a_k$ \\
\midrule
Novak Djokovic & \shortstack[l]{$0.32$\textsuperscript{***} \\ \footnotesize [0.20, 0.47]} & \shortstack[l]{$3.54$\textsuperscript{***} \\ \footnotesize [3.40, 3.70]} & \shortstack[l]{$-0.01$\textsuperscript{***} \\ \footnotesize [-0.01, -0.01]} & \shortstack[l]{$0.45$\textsuperscript{***} \\ \footnotesize [0.44, 0.46]} & \shortstack[l]{$-0.00$\textsuperscript{***} \\ \footnotesize [-0.00, -0.00]} & \shortstack[l]{$0.41$\textsuperscript{***} \\ \footnotesize [0.40, 0.41]} \\
Rafael Nadal & \shortstack[l]{$0.27$\textsuperscript{***} \\ \footnotesize [0.08, 0.54]} & \shortstack[l]{$4.97$\textsuperscript{***} \\ \footnotesize [4.75, 5.21]} & \shortstack[l]{$-0.00$\textsuperscript{***} \\ \footnotesize [-0.00, -0.00]} & \shortstack[l]{$0.33$\textsuperscript{***} \\ \footnotesize [0.32, 0.33]} & \shortstack[l]{$-0.00$\textsuperscript{} \\ \footnotesize [-0.00, 0.00]} & \shortstack[l]{$0.43$\textsuperscript{***} \\ \footnotesize [0.43, 0.44]} \\
Roger Federer & \shortstack[l]{$0.38$\textsuperscript{***} \\ \footnotesize [0.28, 0.49]} & \shortstack[l]{$2.46$\textsuperscript{***} \\ \footnotesize [2.35, 2.57]} & \shortstack[l]{$-0.04$\textsuperscript{***} \\ \footnotesize [-0.05, -0.04]} & \shortstack[l]{$0.60$\textsuperscript{***} \\ \footnotesize [0.58, 0.61]} & \shortstack[l]{$-0.01$\textsuperscript{***} \\ \footnotesize [-0.01, -0.00]} & \shortstack[l]{$0.38$\textsuperscript{***} \\ \footnotesize [0.38, 0.39]} \\
Pete Sampras & \shortstack[l]{$0.45$\textsuperscript{***} \\ \footnotesize [0.33, 0.60]} & \shortstack[l]{$1.35$\textsuperscript{***} \\ \footnotesize [1.20, 1.51]} & \shortstack[l]{$-0.26$\textsuperscript{***} \\ \footnotesize [-0.32, -0.21]} & \shortstack[l]{$1.10$\textsuperscript{***} \\ \footnotesize [1.04, 1.18]} & \shortstack[l]{$-0.06$\textsuperscript{***} \\ \footnotesize [-0.08, -0.05]} & \shortstack[l]{$0.40$\textsuperscript{***} \\ \footnotesize [0.38, 0.43]} \\
Boris Becker & \shortstack[l]{$2.73$\textsuperscript{***} \\ \footnotesize [0.50, 3.70]} & \shortstack[l]{$-11.45$\textsuperscript{} \\ \footnotesize [-11.89, 1.25]} & \shortstack[l]{$-0.00$\textsuperscript{***} \\ \footnotesize [-0.35, -0.00]} & \shortstack[l]{$0.49$\textsuperscript{***} \\ \footnotesize [0.48, 1.12]} & \shortstack[l]{$-0.00$\textsuperscript{***} \\ \footnotesize [-0.05, -0.00]} & \shortstack[l]{$0.30$\textsuperscript{***} \\ \footnotesize [0.29, 0.39]} \\
Carlos Alcaraz & \shortstack[l]{$0.10$\textsuperscript{} \\ \footnotesize [-0.08, 0.32]} & \shortstack[l]{$3.55$\textsuperscript{***} \\ \footnotesize [3.28, 3.81]} & \shortstack[l]{$-0.01$\textsuperscript{***} \\ \footnotesize [-0.01, -0.01]} & \shortstack[l]{$0.41$\textsuperscript{***} \\ \footnotesize [0.39, 0.43]} & \shortstack[l]{$-0.00$\textsuperscript{***} \\ \footnotesize [-0.00, -0.00]} & \shortstack[l]{$0.41$\textsuperscript{***} \\ \footnotesize [0.40, 0.42]} \\
Jannik Sinner & \shortstack[l]{$0.10$\textsuperscript{} \\ \footnotesize [-0.01, 0.24]} & \shortstack[l]{$2.11$\textsuperscript{***} \\ \footnotesize [1.93, 2.28]} & \shortstack[l]{$-0.06$\textsuperscript{***} \\ \footnotesize [-0.07, -0.05]} & \shortstack[l]{$0.60$\textsuperscript{***} \\ \footnotesize [0.57, 0.63]} & \shortstack[l]{$-0.01$\textsuperscript{***} \\ \footnotesize [-0.02, -0.01]} & \shortstack[l]{$0.42$\textsuperscript{***} \\ \footnotesize [0.41, 0.43]} \\
Ivo Karlovic & \shortstack[l]{$1.23$\textsuperscript{***} \\ \footnotesize [0.66, 2.44]} & \shortstack[l]{$4.54$\textsuperscript{***} \\ \footnotesize [3.79, 5.49]} & \shortstack[l]{$-0.01$\textsuperscript{***} \\ \footnotesize [-0.02, -0.00]} & \shortstack[l]{$0.73$\textsuperscript{***} \\ \footnotesize [0.69, 0.78]} & \shortstack[l]{$-0.00$\textsuperscript{***} \\ \footnotesize [-0.00, -0.00]} & \shortstack[l]{$0.27$\textsuperscript{***} \\ \footnotesize [0.25, 0.29]} \\
John Isner & \shortstack[l]{$0.82$\textsuperscript{***} \\ \footnotesize [0.54, 1.20]} & \shortstack[l]{$5.50$\textsuperscript{***} \\ \footnotesize [4.98, 6.10]} & \shortstack[l]{$-0.00$\textsuperscript{***} \\ \footnotesize [-0.00, -0.00]} & \shortstack[l]{$0.66$\textsuperscript{***} \\ \footnotesize [0.63, 0.68]} & \shortstack[l]{$-0.00$\textsuperscript{***} \\ \footnotesize [-0.00, -0.00]} & \shortstack[l]{$0.27$\textsuperscript{***} \\ \footnotesize [0.26, 0.28]} \\
Reilly Opelka & \shortstack[l]{$0.39$\textsuperscript{***} \\ \footnotesize [0.17, 0.68]} & \shortstack[l]{$3.80$\textsuperscript{***} \\ \footnotesize [3.28, 4.35]} & \shortstack[l]{$-0.01$\textsuperscript{***} \\ \footnotesize [-0.02, -0.01]} & \shortstack[l]{$0.72$\textsuperscript{***} \\ \footnotesize [0.68, 0.76]} & \shortstack[l]{$-0.01$\textsuperscript{***} \\ \footnotesize [-0.01, -0.00]} & \shortstack[l]{$0.29$\textsuperscript{***} \\ \footnotesize [0.27, 0.31]} \\
David Ferrer & \shortstack[l]{$0.04$\textsuperscript{} \\ \footnotesize [-0.29, 0.74]} & \shortstack[l]{$2.61$\textsuperscript{***} \\ \footnotesize [2.24, 3.02]} & \shortstack[l]{$-0.02$\textsuperscript{***} \\ \footnotesize [-0.03, -0.01]} & \shortstack[l]{$0.33$\textsuperscript{***} \\ \footnotesize [0.30, 0.37]} & \shortstack[l]{$0.00$\textsuperscript{} \\ \footnotesize [-0.00, 0.01]} & \shortstack[l]{$0.39$\textsuperscript{***} \\ \footnotesize [0.37, 0.41]} \\
Diego Schwartzman & \shortstack[l]{$-0.35$\textsuperscript{} \\ \footnotesize [-0.73, 0.38]} & \shortstack[l]{$3.46$\textsuperscript{***} \\ \footnotesize [2.93, 4.09]} & \shortstack[l]{$-0.01$\textsuperscript{***} \\ \footnotesize [-0.01, -0.00]} & \shortstack[l]{$0.27$\textsuperscript{***} \\ \footnotesize [0.24, 0.30]} & \shortstack[l]{$0.00$\textsuperscript{} \\ \footnotesize [-0.00, 0.00]} & \shortstack[l]{$0.41$\textsuperscript{***} \\ \footnotesize [0.39, 0.43]} \\
\hline
Mean & 0.47 & 2.13 & -0.08 & 0.58 & -0.01 & 0.38 \\
Std & 0.78 & 3.15 & 0.15 & 0.23 & 0.03 & 0.05 \\
Min & -3.65 & -14.92 & -1.24 & 0.22 & -0.25 & 0.16 \\
Median & 0.41 & 2.56 & -0.03 & 0.55 & -0.00 & 0.38 \\
Max & 3.09 & 6.66 & -0.00 & 1.98 & 0.17 & 0.64 \\
\bottomrule
\end{tabular}}

\vspace{0.5em}
{\footnotesize Notes: 95\% confidence intervals obtained by bootstrap with 2000 replications.$^{***}$ indicates significance at the 5\% level: for $\lambda$, CI excludes 1; for other parameters, CI excludes 0.}
\end{table}

\subsection{Model with different curvature for $f\left( x\right) $ and $%
k\left( x\right) $\label{Section:DiffCurv}}

How stable are the estimates of the salience weight to departure from the
assumption that the curvatures of $f$ and $k$ are the same? This question
relates to the second aspect of Condition (\ref{Assumption:Elasticity})
which forces the elasticity of the marginal probabilities of winning
one-shot and multi-shot points to be the same. We propose to do so by
replacing Condition (\ref{Assumption:Elasticity}) by:

\begin{condition}
\label{Assumption:Elasticity3}$x\frac{f^{\prime \prime }\left( x\right) }{%
f^{\prime }\left( x\right) }=\lambda - 1$, $x\frac{k^{\prime \prime }\left(
x\right) }{k^{\prime }\left( x\right) }=t\lambda -1$, where $\lambda >0$ and 
$t>0$.
\end{condition}

Note that we cannot estimate the parameter $t$ together with the other
parameters since we have already exhausted all the degrees of freedom
offered by the data. However, one can still check the robustness of our
estimates of the salience weight, by estimating them \emph{given} $t$ and
repeating for various values of $t$. We follow this approach and given
Condition (\ref{Assumption:Elasticity3}) adopt the following parametrization:%
\begin{equation*}
f\left( x\right) =\frac{a_{f}-x_{1}^{\lambda }}{\tau _{f}},\text{ }k\left(
x\right) =\beta \frac{a_{k}-x_{1}^{t\lambda }}{\tau _{k}},
\end{equation*}%
where the parameter $t>0$ allows us to introduce different curvature for the
two functions.

For $t=1$, we recover the main specification from Condition (\ref%
{Assumption:Elasticity}). For $t<1$, there is less curvature in $k\left(
x\right) $ and for $t>1$ there is more curvature in $k\left( x\right) $.

Before we can proceed to the estimation of the salience weights for
different values of $t$, we need to recover the formulas necessary to search
for the value of $\lambda $ given the data and the value of $t$.

This is done by reworking the FOCs under this specification. Recall first
that using the FOC for the second serve optimal strategy one recovers the
value of the salience weight condition on $\lambda $ (and now also
\thinspace $t$). This yields

\begin{eqnarray*}
\left[ f\left( x_{2}\right) +\beta k\left( x_{2}\right) \right] +x_{2}\left[ 
\frac{\partial f\left( x_{2}\right) }{\partial x}+\beta \frac{\partial
k\left( x_{2}\right) }{\partial x}\right]  &=&0 \\
&\Leftrightarrow & \\
\left[ \frac{a_{f}-x_{2}^{\lambda }}{\tau _{f}}+\beta \frac{%
a_{k}-x_{2}^{t\lambda }}{\tau _{k}}\right]  &=&\lambda \left[ \frac{%
x_{2}^{\lambda }}{\tau _{f}}+\beta t\frac{x_{2}^{t\lambda }}{\tau _{k}}%
\right]  \\
&\Leftrightarrow & \\
\frac{a_{f}}{\tau _{f}}+\beta \frac{a_{k}}{\tau _{k}}-\frac{x_{2}^{\lambda }%
}{\tau _{f}}-\beta \frac{x_{2}^{t\lambda }}{\tau _{k}} &=&\lambda \frac{%
x_{2}^{\lambda }}{\tau _{f}}+\lambda t\beta \frac{x_{2}^{t\lambda }}{\tau
_{k}} \\
&\Leftrightarrow & \\
a_{f}-x_{2}^{\lambda }-\lambda x_{2}^{\lambda } &=&\beta \left( t\lambda 
\frac{\tau _{f}}{\tau _{k}}x_{2}^{t\lambda }-\frac{a_{k}\tau _{f}}{\tau _{k}}%
+\frac{\tau _{f}}{\tau _{k}}x_{2}^{t\lambda }\right) 
\end{eqnarray*}%
and hence%
\begin{equation*}
\beta =-\frac{a_{f}\tau _{k}-\tau _{k}\left( 1+\lambda \right) z_{2}}{%
a_{k}\tau _{f}-\tau _{f}\left( 1+t\lambda \right) w_{2}},
\end{equation*}%
where $z_{j}=x_{j}^{\lambda }$ and $w_{j}=x_{j}^{t\lambda }$ for $j=1,2$.

Doing a similar exercise using the FOC for the first serve optimal strategy
to recover the updating equation for $\lambda $ conditional on $t$ yields

\begin{eqnarray*}
\left[ f\left( x_{1}\right) +\beta k\left( x_{1}\right) \right] +x_{1}\left[ 
\frac{\partial f\left( x_{1}\right) }{\partial x}+\beta \frac{\partial
k\left( x_{1}\right) }{\partial x}\right] &=&x_{2}\tilde{y}\left(
x_{2}\right) \\
&\Leftrightarrow & \\
\left[ f\left( x_{1}\right) +\beta k\left( x_{1}\right) \right] +x_{1}\left[ 
\frac{\partial f\left( x_{1}\right) }{\partial x}+\beta \frac{\partial
k\left( x_{1}\right) }{\partial x}\right] &=&x_{2}\left( f\left(
x_{2}\right) +\beta k\left( x_{2}\right) \right) \\
&\Leftrightarrow & \\
\left[ \frac{a_{f}-x_{1}^{\lambda }}{\tau _{f}}+\beta \frac{%
a_{k}-x_{1}^{t\lambda }}{\tau _{k}}\right] -\lambda \left[ \frac{%
x_{1}^{\lambda }}{\tau _{f}}+\beta t\frac{x_{1}^{t\lambda }}{\tau _{k}}%
\right] &=&x_{2}\left( \frac{a_{f}-x_{2}^{\lambda }}{\tau _{f}}+\beta \frac{%
a_{k}-x_{2}^{t\lambda }}{\tau _{k}}\right) \\
&\Leftrightarrow & \\
\left[ \frac{a_{f}}{\tau _{f}}-\frac{x_{1}^{\lambda }}{\tau _{f}}+\beta 
\frac{a_{k}}{\tau _{k}}-\beta \frac{x_{1}^{t\lambda }}{\tau _{k}}\right] -%
\left[ \lambda \frac{x_{1}^{\lambda }}{\tau _{f}}+t\lambda \beta \frac{%
x_{1}^{t\lambda }}{\tau _{k}}\right] &=&x_{2}\left( \frac{%
a_{f}-x_{2}^{\lambda }}{\tau _{f}}+\beta \frac{a_{k}-x_{2}^{t\lambda }}{\tau
_{k}}\right)
\end{eqnarray*}

We can now proceed and isolate $\lambda $,

\begin{eqnarray*}
\left[ \frac{a_{f}}{\tau _{f}}-\frac{x_{1}^{\lambda }}{\tau _{f}}+\beta 
\frac{a_{k}}{\tau _{k}}-\beta \frac{x_{1}^{t\lambda }}{\tau _{k}}\right] -%
\left[ \lambda \frac{x_{1}^{\lambda }}{\tau _{f}}+t\lambda \beta \frac{%
x_{1}^{t\lambda }}{\tau _{k}}\right]  &=&x_{2}\left( \frac{%
a_{f}-x_{2}^{\lambda }}{\tau _{f}}+\beta \frac{a_{k}-x_{2}^{t\lambda }}{\tau
_{k}}\right)  \\
&\Leftrightarrow & \\
\left[ a_{f}-z_{1}+\beta \frac{\tau _{f}a_{k}}{\tau _{k}}-\beta \frac{\tau
_{f}w_{1}}{\tau _{k}}\right] -\left[ \lambda z_{1}+t\lambda \beta \frac{\tau
_{f}w_{1}}{\tau _{k}}\right]  &=&x_{2}\left( a_{f}-z_{2}+\beta \frac{\tau
_{f}\left( a_{k}-w_{2}\right) }{\tau _{k}}\right)  \\
&\Leftrightarrow & \\
\left[ a_{f}-z_{1}+R\left( a_{k}-w_{1}\right) \right] -\left[ \lambda
z_{1}+t\lambda Rw_{1}\right]  &=&x_{2}\left( a_{f}-z_{2}+R\left(
a_{k}-w_{2}\right) \right) ,
\end{eqnarray*}%
where $R=\beta \frac{\tau _{f}}{\tau _{k}}$.

A simple rearrangement yields%
\begin{equation*}
\lambda \left[ z_{1}+tw_{1}R\right] =\left[ a_{f}-z_{1}+R\left(
a_{k}-w_{1}\right) \right] -x_{2}\left( a_{f}-z_{2}+R\left(
a_{k}-w_{2}\right) \right) ,
\end{equation*}%
and it follows that the updating equation given $t$ reads as%
\begin{equation*}
\lambda =\frac{a_{f}-z_{1}+R\left( a_{k}-w_{1}\right) -x_{2}\left(
a_{f}-z_{2}+R\left( a_{k}-w_{2}\right) \right) }{z_{1}+tw_{1}R}.
\end{equation*}

We note that when $t=1$, this equation reads as the one obtained under
Condition (\ref{Assumption:Elasticity}), for then $z_{j}=w_{j}$ for $j=1,2$.

The algorithm presented in the paper can be amended to account for this more
general structure. In particular, it takes as input the data $x_{1}$, $x_{2}$%
, $f_{1}$, $f_{2}$, $k_{1}$ and $k_{2}$ together with a value for $t$. The
output is still the curvature parameter $\lambda $. However, the slopes and
constants of $f(x)$ and $k(x)$ now obtain as follows. Let $%
z_{j}=x_{j}^{\lambda }$ and $w_{j}=x_{j}^{t\lambda }$ for $j=1,2$. Note that%
\begin{equation*}
\tau _{f}=-\frac{z_{2}}{f_{2}}\frac{\frac{z_{1}-z_{2}}{z_{2}}}{\frac{%
f_{1}-f_{2}}{f_{2}}},\text{ }\tau _{k}=-\frac{w_{2}}{k_{2}}\frac{\frac{%
w_{1}-w_{2}}{w_{2}}}{\frac{k_{1}-k_{2}}{k_{2}}},
\end{equation*}%
and%
\begin{equation*}
a_{f}=\tau _{f}f_{1}+z_{1},\text{ }a_{k}=\tau _{k}k_{1}+w_{1}.
\end{equation*}%
The preference parameter is obtained as%
\begin{equation*}
\beta =-\frac{a_{f}\tau _{k}-\tau _{k}\left( 1+\lambda \right) z_{2}}{%
a_{k}\tau _{f}-\tau _{f}\left( 1+t\lambda \right) w_{2}},
\end{equation*}%
and, writing $R=\beta \frac{\tau _{f}}{\tau _{k}}$, the updating equation
now becomes%
\begin{equation*}
\lambda =\frac{a_{f}-z_{1}+R\left( a_{k}-w_{1}\right) -x_{2}\left(
a_{f}-z_{2}+R\left( a_{k}-w_{2}\right) \right) }{z_{1}+tw_{1}R}.
\end{equation*}

Table (\ref{SmallEsti_t}) shows the estimates of the salience weight and
that Table (\ref{SmallEsti_lambda_t}) of the curvature of the model with
this parametric specification. Note that for a few players, e.g., Pete
Sampras and Reilly Opelka, for instance, there is no solution for the
curvature parameter $\lambda$ at $t=0.5$.

\begin{table}[H]
\centering
\caption{Stability of salience weight estimates to differences in relative curvature between $f(x)$ and $k(x)$, for selected players.\label{SmallEsti_t}}
\resizebox{\textwidth}{!}{%
\begin{tabular}{l|ccccccc}
\toprule
Player & \multicolumn{7}{c}{$t$} \\
\cmidrule(lr){2-8}
  & $0.50$ & $0.75$ & $1.00$ & $1.25$ & $1.50$ & $1.75$ & $2.00$ \\
\midrule
Novak Djokovic & $0.26$ & $0.27$ & $0.27$ & $0.27$ & $0.25$ & $0.23$ & $0.19$ \\
Rafael Nadal & $0.21$ & $0.21$ & $0.21$ & $0.21$ & $0.21$ & $0.20$ & $0.19$ \\
Roger Federer & $0.31$ & $0.32$ & $0.32$ & $0.32$ & $0.30$ & $0.27$ & $0.24$ \\
Pete Sampras & n.a.  & $0.42$ & $0.42$ & $0.42$ & $0.39$ & $0.35$ & $0.29$ \\
Boris Becker & $0.56$ & $0.56$ & $0.56$ & $0.56$ & $0.55$ & $0.54$ & $0.51$ \\
Carlos Alcaraz & $0.05$ & $0.05$ & $0.05$ & $0.05$ & $0.04$ & $0.03$ & $0.01$ \\
Jannik Sinner & $0.05$ & $0.06$ & $0.06$ & $0.06$ & $0.04$ & $0.02$ & $-0.01$ \\
Ivo Karlovic & $1.15$ & $1.18$ & $1.19$ & $1.17$ & $1.08$ & $0.95$ & $0.81$ \\
John Isner & $0.73$ & $0.77$ & $0.79$ & $0.73$ & $0.59$ & $0.43$ & $0.29$ \\
Reilly Opelka & n.a.  & $0.35$ & $0.36$ & $0.33$ & $0.23$ & $0.12$ & $0.01$ \\
David Ferrer & $-0.01$ & $-0.01$ & $-0.01$ & $-0.01$ & $-0.01$ & $-0.00$ & $0.00$ \\
Diego Schwartzman & $-0.38$ & $-0.38$ & $-0.38$ & $-0.38$ & $-0.38$ & $-0.38$ & $-0.38$ \\
\hline
Mean & $0.31$ & $0.33$ & $0.34$ & $0.38$ & $0.35$ & $0.30$ & $0.27$ \\
Std & $0.91$ & $0.72$ & $0.70$ & $0.55$ & $0.58$ & $0.61$ & $0.59$ \\
Min & $-5.25$ & $-4.04$ & $-3.86$ & $-2.54$ & $-1.85$ & $-2.62$ & $-2.48$ \\
Median & $0.32$ & $0.33$ & $0.33$ & $0.33$ & $0.31$ & $0.28$ & $0.25$ \\
Max & $2.92$ & $2.81$ & $2.79$ & $2.80$ & $2.84$ & $2.90$ & $2.96$ \\
\bottomrule
\end{tabular}}
\vspace{0.5em}
{\footnotesize Notes: Each column reports the estimate of the salience weight $\delta$ for a given value of the relative curvature $t$ of multi-shot probability of winning a point, relative to one-shot.}
{\footnotesize n.a. means non available because there is no solution for $\lambda$.}
\end{table}

\begin{table}[H]
\centering
\caption{Stability of curvature estimates to differences in relative curvature between $f(x)$ and $k(x)$, for selected players.\label{SmallEsti_lambda_t}}
\resizebox{\textwidth}{!}{%
\begin{tabular}{l|ccccccc}
\toprule
Player & \multicolumn{7}{c}{$t$} \\
\cmidrule(lr){2-8}
  & $0.50$ & $0.75$ & $1.00$ & $1.25$ & $1.50$ & $1.75$ & $2.00$ \\
\midrule
Novak Djokovic & $3.22$ & $3.44$ & $3.67$ & $3.90$ & $4.10$ & $4.27$ & $4.41$ \\
Rafael Nadal & $4.74$ & $4.82$ & $4.89$ & $4.96$ & $5.01$ & $5.06$ & $5.10$ \\
Roger Federer & $2.44$ & $2.62$ & $2.81$ & $3.00$ & $3.17$ & $3.31$ & $3.42$ \\
Pete Sampras & n.a.  & $1.71$ & $1.93$ & $2.18$ & $2.44$ & $2.67$ & $2.84$ \\
Boris Becker & $1.54$ & $1.63$ & $1.73$ & $1.83$ & $1.93$ & $2.03$ & $2.12$ \\
Carlos Alcaraz & $3.34$ & $3.50$ & $3.66$ & $3.81$ & $3.95$ & $4.07$ & $4.17$ \\
Jannik Sinner & $2.15$ & $2.33$ & $2.52$ & $2.72$ & $2.90$ & $3.06$ & $3.18$ \\
Ivo Karlovic & $3.14$ & $3.67$ & $4.37$ & $5.21$ & $5.99$ & $6.55$ & $6.88$ \\
John Isner & $3.11$ & $3.95$ & $5.33$ & $7.07$ & $8.14$ & $8.54$ & $8.66$ \\
Reilly Opelka & n.a.  & $2.97$ & $3.89$ & $5.07$ & $5.89$ & $6.25$ & $6.36$ \\
David Ferrer & $3.08$ & $2.99$ & $2.90$ & $2.83$ & $2.77$ & $2.71$ & $2.66$ \\
Diego Schwartzman & $3.69$ & $3.62$ & $3.57$ & $3.52$ & $3.47$ & $3.43$ & $3.40$ \\
\hline
Mean & $2.94$ & $2.86$ & $2.94$ & $3.08$ & $3.28$ & $3.48$ & $3.55$ \\
Std & $1.30$ & $1.10$ & $0.98$ & $1.09$ & $1.38$ & $1.55$ & $1.55$ \\
Min & $1.17$ & $1.19$ & $1.28$ & $1.28$ & $0.59$ & $1.33$ & $1.30$ \\
Median & $2.73$ & $2.74$ & $2.86$ & $3.00$ & $3.07$ & $3.17$ & $3.26$ \\
Max & $9.71$ & $9.15$ & $6.27$ & $7.32$ & $9.70$ & $9.69$ & $9.69$ \\
\bottomrule
\end{tabular}}
\vspace{0.5em}
{\footnotesize Notes: Each column reports the estimate of the curvature $\lambda$ for a given value of the relative curvature $t$ of multi-shot probability of winning a point, relative to one-shot.}
{\footnotesize n.a. means non available because there is no solution for $\lambda$.}
\end{table}

\subsection{Model of aversion for double faults\label{Section:AversionDF}}

Suppose a player has a disutility from making double faults, and let $\gamma
>0$ capture the magnitude of this disutility. In this case, the utility the
player would maximize is%
\begin{equation*}
x_{1}y\left( x_{1}\right) +\left( 1-x_{1}\right) \left[ x_{2}y\left(
x_{2}\right) -\gamma \left( 1-x_{2}\right) \right] .
\end{equation*}

It is easy to show that, after some rearranging, this utility function
rewrites as%
\begin{equation*}
x_{1}\check{y}\left( x_{1}\right) +\left( 1-x_{1}\right) x_{2}\check{y}%
\left( x_{2}\right) -\gamma
\end{equation*}%
where $\check{y}\left( x\right) =y\left( x\right) +\gamma $.

The FOCs of the maximization problem then read as%
\begin{eqnarray*}
x_{2}y^{\prime }\left( x_{2}\right) +y\left( x_{2}\right) +\gamma &=&0 \\
x_{1}y^{\prime }\left( x_{1}\right) +y\left( x_{1}\right) +\gamma \left(
1-x_{2}\right) &=&x_{2}y\left( x_{2}\right)
\end{eqnarray*}%
since $\check{y}^{\prime }\left( x\right) =y^{\prime }\left( x\right) $.

Assuming, as we did in the main model of the paper, that $y\left( x\right) $
is a power function, $y\left( x\right) =\frac{a}{\tau }-\frac{1}{\tau }%
x^{\lambda }$ where $\tau >0$, so that $y^{\prime }\left( x\right) =-\frac{1%
}{\tau }\lambda x^{\lambda -1}<0$ and $xy^{\prime }\left( x\right) =-\frac{1%
}{\tau }\lambda x^{\lambda }$, so that the FOCs read as

\begin{eqnarray*}
-\frac{1}{\tau }\lambda x_{2}^{\lambda }+\frac{a}{\tau }-\frac{1}{\tau }%
x_{2}^{\lambda }+\gamma &=&0, \\
-\frac{1}{\tau }\lambda x_{1}^{\lambda }+\frac{a}{\tau }-\frac{1}{\tau }%
x_{1}^{\lambda }+\gamma &=&x_{2}\left( \frac{a}{\tau }-\frac{1}{\tau }%
x_{2}^{\lambda }+\gamma \right) .
\end{eqnarray*}

We can then solve for the optimal serve strategies. Solving the first
equation for $x_{2}^{\lambda }$ yields $x_{2}^{\lambda }=\frac{a+\tau \gamma 
}{\lambda +1}$ while solving the second for $x_{1}^{\lambda }$ yields $%
x_{1}^{\lambda }=x_{2}^{\lambda }\left( 1-\frac{\lambda }{\lambda +1}%
x_{2}\right) $.

Comparing the optimal strategy from the aversion to the double fault model
to that of the process utility model, one first notes that the first serve
strategy has the same expression in both models. Since this equation is used
to estimate the curvature parameter, when applied on the same data, i.e.,
given $x_{1}$ and $x_{2}$, both models deliver the same estimate of $\lambda 
$. It follows that the slope and constant of $y\left( x\right) $ that would
obtain from data $\left( x_{1},x_{2},y_{1},y_{2}\right) $ are also the same
in the two models and obtain as: $\tau =\frac{x_{2}^{\lambda
}-x_{1}^{\lambda }}{y_{1}-y_{2}}$ and $a=\frac{x_{2}^{\lambda
}-x_{1}^{\lambda }}{y_{1}-y_{2}}y_{2}+x_{2}^{\lambda }$. The differentce
between the two models is apparent in the expression of the second serve
strategies which read respectively as

\begin{eqnarray}
x_{2}^{\lambda } &=&\frac{a+\tau \gamma }{\lambda +1},  \label{eqOpti2A} \\
x_{2}^{\lambda } &=&\frac{a_{f}+a_{k}\beta \frac{\tau _{f}}{\tau _{k}}}{%
\left( 1+\lambda \right) \left( 1+\beta \frac{\tau _{f}}{\tau _{k}}\right) }.
\label{eqOpti2P}
\end{eqnarray}

Picking the right values of the parameters $\gamma $ for the aversion to
double faults model and ($\beta $, $a_{f}$ and $\tau _{f}$) for the process
utility model, we conclude that the two models fit equally well the data $%
\left( x_{1},x_{2},y_{1},y_{2}\right) $ and share in common the curvature
parameter $\lambda $ and the slope and constant of $y\left( x\right) $,
i.e., $\tau $ and $a$. However, note that from equation (\ref{eqOpti2A}),
the estimate of $\gamma $ is invariant to data $\left( f_{1},f_{2}\right) $
conditional on $\left( x_{1},x_{2},y_{1},y_{2}\right) $. Stated otherwise,
the optimal decision of the player only reflects his general ability to win
points, i.e., parameters $a$ and $\tau $, and his aversion to double faults $%
\gamma $, regardless of his ability to win one-shot points, i.e., $a_{f}$
and $\tau _{f}$. If the true data-generating model were that where players
had an aversion to making double faults, the parameter $\gamma $ should be
unrelated to the conditional probabilities of winning one-shot rallies on
either the first or second serve, since all that matters for such a player's
strategy is $y\left( x\right) $ ($y_{1}$ and $y_{2}$) and not their
constituents ($f_{1}$, $f_{2}$, $k_{1}$ and $k_{2}$). However, if the true
model is one where players value process utility, then the observed optimal
strategies $x_{1}$ and $x_{2}$ would depend on the constituents of $y\left(
x\right) $, i.e., $f\left( x\right) $ and $k\left( x\right) $. It follows
that the parameters obtained from estimating the double-fault aversion model
and, in particular, the computed value of $\gamma $, would be related to $%
f\left( x\right) $, since they are is obtained using the optimal strategies $%
x_{1}$ and $x_{2}$ that depend on $a_{f}$ and $\tau _{f}$.

In particular, comparing players with similar serve strategies ($x_{1},x_{2}$%
) and conditional probabilities of winning points ($y_{1}$ and $y_{2}$), the
conditional mixed moments $\mathbb{E}\left[ \hat{\gamma}\times f\left(
x_{j}\right) |x_{1},x_{2},y_{1},y_{2}\right] $ for $j=1,2$ should be zero.
To test these hypotheses, we first estimate, for each player, the aversion
for double faults parameter $\hat{\gamma}$ using the procedure outlined in
the paper, but applied to the model with aversion to double faults. Then, we
estimate the conditional expectations $\mathbb{E}\left[ \hat{\gamma}\times
f_{1}|x_{1},x_{2},y_{1},y_{2}\right] $ and $\mathbb{E}\left[ \hat{\gamma}%
\times f_{2}|x_{1},x_{2},y_{1},y_{2}\right] $ nonparametrically using
locally weighted scatterplot smoothing (LOWESS) and apply a bootstrap
procedure (300 replications) to construct $95\%$ confidence intervals. The
in-sample predictions of the conditional expectations $\mathbb{E}\left[ \hat{%
\gamma}_{i}\times f_{ji}|x_{1i},x_{2i},y_{1i},y_{2i}\right] $ for $j=1,2$
and $i=1,...N$, and their confidence interval can then be used to check if
the associated values are significantly different from 0. We find that for
99 out of the 151 players, at least one of the conditional expectations is
statistically significantly larger than 0, rejecting the null hypothesis
that the parameter $\hat{\gamma}$ is orthogonal to the way rallies are won.
The estimates of the aversion to the double fault parameter $\gamma $ seem
to be systematically dependent on the conditional probability of winning
one-shot rallies on first and second serves, conditional on the serve
strategy and probability of winning points. This suggests that there is
information in the conditional probability of winning one-shot and
multi-shot rallies that a model with aversion to double fault does not
capture, and that is being forced into the parameter $\gamma $.

Moreover, the fact that these expectations are positive tells a
counterintuitive story, where, holding the serve strategy and probabilities
of winning a point constant, those that rely more on winning points through
one-shot rallies, i.e., the better serves, are the most averse to making
double faults according to the model.

\end{document}